\pgfplotsset{compat=newest}
\declaretheorem{theorem}
\declaretheorem{lemma}
\declaretheorem{observation}
\declaretheorem{corollary}
\declaretheorem[style=definition]{definition}
\DeclarePairedDelimiter\ceil{\lceil}{\rceil}
\DeclarePairedDelimiter\floor{\lfloor}{\rfloor}
\DeclareMathOperator*{\argmax}{arg\,max}
\DeclareMathOperator*{\argmin}{arg\,min}
\newcommand{\RR}{\mathbb{R}}
\newcommand{\Q}{\mathbb Q}
\newcommand{\cO}{\mathcal{O}}
\newcommand{\EX}{\mathbb{E}}
\newcommand{\ind}{\ensuremath{\mathds{1}}}
\newcommand{\bone}{\mathbf{1}}
\newcommand{\abc}[3]{$#1\mid #2\mid #3$}
\newcommand{\pmtn}{\mathrm{pmtn}}
\newcommand{\psp}{\relax\ifmmode{\mathrm{PSP}}\else PSP\xspace\fi}
\newcommand{\monpsp}{\textsc{MonPSP}\xspace}
\newcommand{\alg}{{\mathrm{ALG}}}
\newcommand{\rr}{RR\xspace}
\newcommand{\wrr}{WRR\xspace}
\newcommand{\wdeq}{WDEQ\xspace}
\newcommand{\pf}{\relax\ifmmode{\mathrm{PF}}\else PF\xspace\fi}
\newcommand{\pspt}{preemptive SPT\xspace}
\newcommand{\opt}{{\mathrm{OPT}}}
\newcommand{\optnp}{{\mathrm{OPT}^{\mathrm{np}}_{0}}}
\newcommand{\CP}{\mathrm{CP}}
\newcommand{\LP}{\mathrm{LP}}
\newcommand{\DLP}{\mathrm{DLP}}
\newcommand{\cP}{\mathcal{P}}
\newcommand{\lagrangevar}{\eta}
\newcommand{\hlagrangevar}{\hat{\lagrangevar}}
\newcommand{\hy}{\hat{y}}
\newcommand{\dualSa}{\bar\alpha}
\newcommand{\dualSb}{\bar\beta}
\newcommand{\dualVa}{\alpha}
\newcommand{\dualVb}{\beta}
\definecolor{bluegreen}{HTML}{33615E}
\definecolor{lime}{HTML}{56A778}
\definecolor{yellow}{HTML}{E5CB6D}
\definecolor{orange}{HTML}{FB8318}
\definecolor{lightred}{HTML}{DD6143}
\definecolor{job1}{HTML}{A6CEE3}
\definecolor{job2}{HTML}{1F78B4}
\definecolor{job5}{HTML}{B2DF8A}
\definecolor{job6}{HTML}{33A02C}
\definecolor{job3}{HTML}{FB9A99}
\definecolor{job4}{HTML}{E31A1C}
\newcommand{\gray}[1]{\textcolor{gray}{#1}}
\title{The Power of Proportional Fairness for Non-Clairvoyant Scheduling under Polyhedral Constraints}
\author{Sven Jäger\thanks{\texttt{sven.jaeger@math.rptu.de}, Department of Mathematics, RPTU Kaiserslautern-Landau, Germany.} \and Alexander Lindermayr\thanks{\texttt{\{linderal,nmegow\}@uni-bremen.de}, Faculty of Mathematics and Computer Science, University of Bremen, Germany.} \and Nicole Megow\footnotemark[2]}
\date{}
\begin{document}

\maketitle

\begin{abstract}
  The Polytope Scheduling Problem 
  (PSP) was introduced by Im, Kulkarni, and Munagala (JACM 2018) as a very general abstraction of resource allocation  
  over time and  captures many well-studied problems including classical unrelated machine scheduling, multidimensional scheduling, and broadcast scheduling. In PSP, jobs with different 
  arrival times receive processing rates that are subject to arbitrary packing constraints. 
  An elegant and well-known algorithm for instantaneous rate allocation with good fairness and efficiency properties is the Proportional Fairness algorithm (PF), which was analyzed for PSP by Im et al.
  
  We drastically improve the analysis of the PF algorithm for both the general PSP and several of its important special cases subject to the objective of minimizing the sum of weighted completion times. 
  We reduce the upper bound on the competitive ratio from 128 to 27 for general PSP and to 4 for the prominent class of monotone PSP.  
  For certain heterogeneous machine environments we even close the substantial gap to the lower bound of 2 for non-clairvoyant scheduling. Our analysis also gives the first polynomial-time improvements over the nearly 30-year-old bounds on the competitive ratio of the doubling framework by Hall, Shmoys, and Wein (SODA 1996) for clairvoyant online preemptive scheduling on unrelated machines. Somewhat surprisingly, we achieve this improvement by a non-clairvoyant algorithm, thereby demonstrating that non-clairvoyance is not a (significant) hurdle.
  
  Our improvements are based on exploiting monotonicity properties of PSP, providing tight dual fitting arguments on structured instances, and showing new additivity properties on the optimal objective value for scheduling on unrelated machines. 
  Finally, we establish new connections of PF to matching markets, and thereby provide new insights on equilibria and their computational complexity.
\end{abstract}

\thispagestyle{empty}

\newpage

\section{Introduction}

Two key challenges in the scheduling of modern cloud or general-purpose computing systems are~the arrival of jobs without prior knowledge (\emph{online arrival}) and the lack of information regarding their processing requirements (\emph{non-clairvoyance}).
We investigate online algorithms designed to cope with these uncertainties while aiming to minimize the total weighted completion time, a central quality-of-service objective. In this context, weights can be used to model priorities.
The performance of an online algorithm is evaluated using the competitive ratio, defined as the worst-case ratio between the algorithm's objective value and that of an offline optimal solution for any given~instance.

To capture a variety of \emph{preemptive} scheduling environments,
we adopt the abstract perspective of \emph{instantaneous resource allocation}. Specifically, at any time $t$, an algorithm can process a job $j$ at a \emph{rate} 
$y_j(t)$ such that the \emph{rate vector} $y(t)$ over all $n$ jobs satisfies polyhedral constraints given by a packing polytope $\cP = \{y \in \RR^n_{\ge 0} \mid B \cdot y \le \bone\}$, where $B \in \mathbb{Q}_{\geq 0}^{D \times n}$.
A job is considered complete when it has received the total required processing.
This abstract problem, termed \emph{Polytope Scheduling Problem~(\psp)}, was introduced in a seminal paper by \mbox{\textcite{ImKM18}}. It generalizes many well-studied preemptive scheduling problems, such as unrelated machine scheduling~\cite{Sku01}, multidimensional scheduling~\cite{GZH+11}, or broadcast scheduling~\cite{BansalCS08}. The power of the general \psp formulation is that it abstracts away the specific details of the scheduling environment and extracts the essence of ``packing over time''.

We consider 
a simple and well-known
mechanism, known as \emph{Proportional Fairness} (\pf), 
which dates back to Nash~\cite{nash1950bargaining} and is widely studied in the context of fair allocation and markets~\cite{Mou03,JainV10,PTV22,JalotaPQY23}.
At any time~$t$ it allocates processing rates~$y_j(t)$ to
the available jobs~$j$ such that the \emph{Weighted Nash Social Welfare}~\cite{kaneko1979nash} is maximized, where the rates are interpreted as utilities.
This has desirable properties from the perspective of \emph{fairness}, such as Pareto-efficiency and envy-freeness~\cite{varian1976two,ImKM18,TV24}. 
\Textcite{ImKM18} were the first who studied this allocation rule in the context of scheduling. 
They proved that \pf 
is a $128$\footnote{The factor stated in~\cite{ImKM18} is $64$, but the proof contained an algebraic mistake and is fixed by losing a factor $2$~\cite{ImKM18-erratum}.}-competitive non-clairvoyant algorithm for \psp, while the best-known non-clairvoyant lower bound
for this problem is only $2$. Interestingly, this lower bound 
already holds in the simple 
case of single-machine scheduling, a classical result by Motwani~et~al.~\cite{MPT94}.

A fundamental 
\psp and central problem in scheduling and resource allocation theory is the problem of scheduling \emph{unrelated machines}~\cite{LenstraST90,HSSW97,Sku01,BSS21,ImL23,Harris24,li2024approximating,BienkowskiKL21,GuptaMUX20,AnandGK12,ImKMP14}. In this problem,
denoted as \abc{R}{r_j, \pmtn}{\sum w_j C_j}, we are given~$m$ machines, and every job $j$ processes on machine $i$ at speed $s_{ij} \geq 0$. 
At any
time a job can be assigned to at most one machine and on every machine there can be at most one job assigned. These matching constraints
determine the
polytope for the this \psp. Any fractional rate vector in this polytope  can be efficiently transformed to a preemptive and \emph{migratory} schedule for one time unit.
We emphasize that both preemption~\cite{MPT94} and migration are necessary for competitive non-clairvoyant algorithms for this problem; we provide a general lower bound for non-migratory algorithms.
Important and well-studied special cases of unrelated machines 
are \emph{related} machines (in short \abc{Q}{r_j, \pmtn}{\sum w_jC_j}), where on every machine $i$ the speed $s_i = s_{ij}$ is the same for every job~$j$, and \emph{restricted assignment} (in short \abc{R}{r_j, \pmtn,s_{ij} \in \{0,1\}}{\sum w_jC_j}), where all $s_{ij} \in \{0,1\}$.

Little is known about scheduling jobs with online arrival on unrelated machines with preemption and migration.
With respect to non-clairvoyant algorithms, the bound of $128$ on the competitive ratio of \pf~\cite{ImKM18} holds for this problem. 
We can also prove that an algorithm 
designed for 
minimizing the total weighted flow time~\cite{ImKMP14} is $32$-competitive for our total weighted completion time objective.
There remains a substantial gap to the lower bound of $2$, and to the best of our knowledge even for the simpler cases of restricted assignment and related machines 
no substantially better constants are known.
In the \emph{clairvoyant} setting, where a job's processing requirement becomes known at the job's release date,
the classical doubling framework by \textcite{HSSW97} yields
the currently best-known polynomial-time algorithm with a competitive ratio of at most~$8$. In a randomized version, this ratio improves to $5.78$~\cite{ChakrabartiPSSSW96}. Although these algorithms compute non-preemptive schedules, it is known that they achieve these competitive ratios even against a preemptive and/or migratory optimal schedule.

In this paper, we drastically improve the analysis of the Proportional Fairness (\pf) algorithm for both, the general polytope scheduling problem (\psp) and several of its important special cases. In particular, we close the
gap between the lower and upper bounds on the competitive ratio for non-clairvoyant scheduling in certain heterogeneous machine environments. Further, we give the first polynomial-time improvements over the nearly 30-year-old bounds on the competitive ratio of the doubling framework for clairvoyant (randomized) scheduling. It is important to note that these improvements are obtained by a deterministic non-clairvoyant algorithm, thereby demonstrating that non-clairvoyance is not a (significant) hurdle.

\subsection{Our Results}

For the general polytope scheduling problem (\psp), we simplify and substantially improve the original analysis of the non-clairvoyant Proportional Fairness algorithm (\pf) by \textcite{ImKM18}. By exploiting the monotonicity of unfinished weight more directly and by optimizing scaling constants, we reduce the bound on the competitive ratio from $128$ down to $27$; the proof is deferred to \cref{app:pf}.

Our main technical contribution is to provide new techniques that lead to further improvements for major special cases of \psp. On a high level, these fall into two categories: \begin{inparaenum}[(i)] \item we show and exploit a monotonicity property of
\pf, and \item we prove and utilize a superadditivity property of the objective function value of an optimal offline solution. \end{inparaenum} A more detailed overview of our techniques is provided in the following subsection; 
here we state and discuss
the results obtained through each of them. \Cref{table:results-online,table:results} summarize our non-clairvoyant scheduling results distinguished by online
and simultaneous job arrival, respectively.

\subsubsection*{Results via Monotonicity}
Our first
results concern \psp\ 
variants for which \pf computes \emph{monotone rates}, a subclass introduced by \textcite{ImKM18}. Informally speaking, in \monpsp the completion of a job does not harm any other job's processing, as it does not cause a decrease of that job's rate.

\begin{definition}[\pf-monotone \psp~\cite{ImKM18}]\label{def:monotone}
  Given job sets $J' \subseteq J$ with corresponding rates $y'$ and $y$ computed by \pf, we say the \psp is \emph{\pf-monotone} (short \monpsp) if for every $j' \in J'$ it holds that $y'_{j'} \geq y_{j'}$.
\end{definition}

We exploit this structural property of \monpsp  rigorously to prove the following theorem, which substantially reduces the previously known bound of $25.74$~\cite{ImKM18}.  

\begin{theorem}\label{thm:main-monotone}
  \pf has a competitive ratio of at most $4$ for minimizing the total weighted completion time for \monpsp.
\end{theorem}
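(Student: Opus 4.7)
My plan is to prove \cref{thm:main-monotone} via dual fitting against a natural LP relaxation of weighted completion time on $\cP$, using the KKT multipliers of \pf as the backbone of the dual and the \monpsp property to certify a \emph{pointwise} (rather than amortized) dual feasibility bound; this is precisely where the large constants in the general-\psp analysis of~\textcite{ImKM18} should collapse.

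I would work with the standard relaxation having rate variables $y_j(t)\ge 0$ supported on $t\ge r_j$, constraints $\int y_j(t)\,\mathrm{d}t\ge p_j$ and $By(t)\le\bone$, and the fractional completion-time lower bound $C_j\ge \frac{1}{p_j}\int t\,y_j(t)\,\mathrm{d}t+\tfrac{1}{2}p_j$. Its dual has multipliers $\alpha_j\ge 0$ (processing) and $\beta(t)\ge 0$ (polytope), feasibility constraint $\alpha_j\le \tfrac{t w_j}{p_j}+(B^\top\beta(t))_j$ for every $(j,t)$ with $t\ge r_j$, and objective $\sum_j\alpha_j p_j+\tfrac{1}{2}\sum_j w_j p_j-\int\bone^\top\beta(t)\,\mathrm{d}t$. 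At each instant $t$, Lagrangian optimality of $\max\sum_{j\in A(t)} w_j\log y_j$ over $\cP$ yields multipliers $\lagrangevar(t)\ge 0$ satisfying $w_j=y_j^{\pf}(t)\,(B^\top\lagrangevar(t))_j$ for every alive job and $\bone^\top\lagrangevar(t)=W(t):=\sum_{j\in A(t)}w_j$; I would set $\beta(t)=c\,\lagrangevar(t)$ for a scaling $c\in(0,1)$ to be optimized, and $\alpha_j$ as the largest value saturating the feasibility constraint along $[r_j,\tau_j^{\pf}]$.

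The key step is to lower-bound $(B^\top\lagrangevar(t))_j=w_j/y_j^{\pf}(t)$ pointwise in $t$, and this is exactly where \cref{def:monotone} is decisive. Because removing alive jobs only increases surviving rates, $y_j^{\pf}(\cdot)$ is non-decreasing on any arrival-free sub-interval of $[r_j,\tau_j^{\pf}]$; combined with $p_j=\int_{r_j}^{\tau_j^{\pf}} y_j^{\pf}(s)\,\mathrm{d}s$ this yields $y_j^{\pf}(t)\le p_j/(\tau_j^{\pf}-t)$, i.e.\ $(B^\top\lagrangevar(t))_j\ge w_j(\tau_j^{\pf}-t)/p_j$. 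Substituting this bound into the feasibility constraint makes its right-hand side a convex combination of $\tau_j^{\pf}$ and $t$ with coefficient $c$, whose minimum over $t$ pins down a usable $\alpha_j$; summing contributions and using $\int W(t)\,\mathrm{d}t=\sum_j w_j\tau_j^{\pf}$, together with the simple bounds $\sum_j w_j r_j\le\opt$ and $\tfrac{1}{2}\sum_j w_j p_j\le\opt$, would reduce the desired inequality $\sum_j w_j\tau_j^{\pf}\le 4\,\opt$ to balancing the constant $c$.

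The most delicate step, and the one I expect to dominate the technical work, is that \monpsp-monotonicity breaks across arrival events in $[r_j,\tau_j^{\pf}]$: when a new job enters, $j$'s rate can drop, so the pointwise bound above must be applied segment-by-segment between arrivals. The way through is to partition $[r_j,\tau_j^{\pf}]$ into its maximal monotone sub-intervals, apply the pointwise bound on each, and charge the loss at each arrival against the dual contribution of the newly arriving job via its own $\alpha$-term. Pinning down the ratio to exactly $4$ (rather than some looser constant) requires tuning $c$ so that these per-segment losses match the $\alpha_j$-gains; this is plausible given the clean pointwise inequality produced by monotonicity but is where I expect the final argument to need genuine care.
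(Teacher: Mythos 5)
Your overall plan — dual fitting with $\beta_{dt}$ set to a scaled copy of the instantaneous Lagrange multipliers of PF's convex program, $\alpha_j$ tied to $w_jC_j$, and monotonicity invoked to lower-bound $(B^\top\eta(t))_j = w_j/y_j^{\pf}(t)$ — is the right skeleton and matches the general shape of the paper's argument. But there is a real gap at exactly the point you flag as ``most delicate,'' and the patch you sketch there does not repair it.

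The issue is that you build the dual out of the multipliers $\eta(t)$ of $(\CP(J(t)))$ for the \emph{actually available} job set $J(t)$. Monotonicity then only controls the evolution of $y_j^{\pf}$ between arrival events; at an arrival both $y_j^{\pf}$ and $\eta(t)$ can jump discontinuously, your pointwise inequality $y_j^{\pf}(t)\le p_j/(C_j-t)$ ceases to hold, and you propose to ``charge the loss at each arrival against the dual contribution of the newly arriving job via its own $\alpha$-term.'' This is not a well-defined charging scheme: a single arrival can simultaneously depress the rates of many running jobs, each of which would need compensation, and the arriving job's $\alpha$-term is already fully committed to certifying its own completion-time contribution. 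There is no reason the bookkeeping balances, and certainly no reason it gives exactly the constant $4$.

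The paper avoids this entirely by a small but decisive change of reference point: instead of $J(t)$, it works throughout with the \emph{unfinished} set $U(t)=\{j: C_j>t\}$ (which includes jobs not yet released), and defines analysis-only rates $\hat y(t)$ and multipliers $\hat\eta(t)$ as the solution of $(\CP(U(t)))$. Since $U(t)$ is monotone shrinking in $t$ by definition (arrivals do not change it), PF-monotonicity gives $\hat y_j(t')\ge\hat y_j(t)$ for all $t'\ge t$ with no exceptions — the very property your argument lacks across arrivals — and also $\hat y_j(t)\le y_j(t)$ for $j\in J(t)$ because $J(t)\subseteq U(t)$. The dual constraint is then verified by bounding $(C_j-t)w_j/p_j\le\sum_{t'\ge t}(y_j(t')/p_j)\,(w_j/\hat y_j(t))$, using monotonicity twice (once to pass from $\hat y$ to $y$, once to move $\hat y$ from $t'$ back to $t$) and the KKT identity $w_j/\hat y_j(t)=\sum_d b_{dj}\hat\eta_d(t)$. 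The dual objective computation is still clean because $\sum_d\hat\eta_d(t)=\sum_{j\in U(t)}w_j=W(t)$ (Lemma~\ref{lemma:kkt-multiplier-bound}), so setting $\beta_{dt}=\hat\eta_d(t)/\kappa$ and optimizing $\kappa=2$ yields the factor $4$ directly. Without this $U(t)$ device (or an equivalent one), your argument as stated does not close.
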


Related machine scheduling as well as multidimensional scheduling with certain utility functions are known to be special cases of \monpsp~\cite{ImKM18}. 
In this paper, we show that the restricted assignment problem is also a \monpsp.
Moreover, we argue that
for restricted assignment as well as related machine scheduling, \pf{}
runs in strongly polynomial time.
We conclude with the following result for these two classical scheduling problems. Most notably, it improves upon the best known competitive ratios for clairvoyant polynomial-time algorithms, which are $8$~\cite{HSSW97,LindermayrMR23} for deterministic algorithms and $5.78$~\cite{ChakrabartiPSSSW96} for randomized algorithms. Remarkably, this is not only the first polynomial-time improvement within nearly thirty years, but it is even obtained by a deterministic non-clairvoyant algorithm. 

\begin{theorem}\label{thm:restricted-related-weighted}
  There is a {strongly} polynomial-time non-clairvoyant online algorithm for minimizing the total weighted completion time on related machines, \abc{Q}{r_j, \pmtn}{\sum w_jC_j}, and for restricted assignment, \abc{R}{r_j,\pmtn, s_{ij}\in \{0,1\}}{\sum w_jC_j}, with a competitive ratio~of~at~most~$4$. 
  \end{theorem}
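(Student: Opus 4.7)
The plan is to combine \Cref{thm:main-monotone} with two auxiliary facts: (i) both scheduling problems are instances of \monpsp, and (ii) the \pf rates can be computed in strongly polynomial time in each setting. Since \textcite{ImKM18} already establish that related machine scheduling belongs to \monpsp and admits a strongly polynomial \pf oracle, the novelty lies in settling both aspects for restricted assignment, and then invoking \Cref{thm:main-monotone} to obtain the $4$-competitiveness for both problems at once.

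For \monpsp membership of restricted assignment, I would first lift each rate $y_j$ to edge-variables $y_{ij}$ over the bipartite allow-graph so that $y_j = \sum_{i:\,s_{ij}=1} y_{ij}$; the instantaneous rate polytope then coincides with the bipartite fractional matching polytope, defined by $\sum_j y_{ij} \le 1$ per machine, $\sum_i y_{ij} \le 1$ per job, $y_{ij} \ge 0$, and $y_{ij}=0$ whenever $s_{ij}=0$. Writing KKT for the strictly concave program $\max\sum_j w_j \log y_j$ over this polytope characterizes the \pf rates through non-negative dual multipliers on the machine (and job) constraints, together with the Hall-type tight sets they induce, and yields each surviving $y_j$ as $w_j$ divided by a non-negative combination of these duals. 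The key step in showing monotonicity is that removing a subset of jobs can only relax the machine and job capacity constraints, so a primal-dual exchange argument shows that the dual values on any machine in the neighborhood of a surviving job can only weakly decrease. Consequently the denominator defining each surviving rate weakly decreases, giving $y'_{j'} \ge y_{j'}$ as required by \Cref{def:monotone}.

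For the strongly polynomial computation of the \pf rates under restricted assignment, I would exploit the same KKT characterization to obtain a water-filling procedure: raise the rates continuously from zero while identifying the inclusion-wise maximal set $S$ of currently active jobs whose neighborhood $N(S)$ becomes tight first, fix these jobs and their neighborhood at the resulting rate, and recurse on the strictly smaller residual graph. Each phase reduces to a maximum-flow or bipartite matching computation on a shrinking graph and terminates in at most $\min(|J|,|M|)$ stages, yielding strong polynomiality. For related machines the analogous water-filling is already spelled out in~\cite{ImKM18}. Combined with \Cref{thm:main-monotone}, this delivers both claims of the theorem.

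The main obstacle is making the monotonicity step for restricted assignment fully rigorous: when jobs disappear, the combinatorial structure of tight Hall sets can rearrange nontrivially, and one must rule out that a surviving edge $(i,j')$ sees its dual load $\alpha_i + \beta_{j'}$ rise due to congestion re-routing through a new bottleneck. I expect the cleanest route is an uncrossing argument on the laminar family of tight neighborhoods produced by the KKT conditions, exploiting the submodularity of $|N(\cdot)|$ in bipartite graphs to show that the post-deletion family refines the pre-deletion one in a way that only makes duals smaller on surviving edges.
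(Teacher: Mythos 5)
Your high-level plan matches the paper exactly: reduce to \Cref{thm:main-monotone} by showing both problems are \monpsp and that the \pf rates can be computed in strongly polynomial time. For related machines you correctly defer to \textcite{ImKM18}. For restricted assignment, however, you take a genuinely different---and, as written, incomplete---route on the key step.

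The paper establishes \pf-monotonicity of restricted assignment not via a direct KKT/dual-uncrossing analysis of the matching program, but by a structural observation: Lemma~\ref{lemma:restricted-polymatroid} shows that the projected rate polytope is the independence polytope of the transversal matroid of the allow-graph, using the Birkhoff--von Neumann theorem to decompose any fractional matching into integral ones. Since this is a polymatroid, \pf's allocation problem is a submodular utility allocation (SUA) market in the sense of \textcite{JainV10}, and their competition-monotonicity theorem for SUA markets then gives precisely the monotonicity of \Cref{def:monotone}. Strong polynomiality also follows for free, either from the combinatorial SUA algorithm of \textcite{JainV10} (submodular function minimization) or from the dedicated algorithms of \textcite{VY21,GTV22b} for matching markets with dichotomous utilities. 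This single polymatroid observation dissolves both of the obstacles you flag.

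Your direct route, by contrast, stalls exactly where you acknowledge it might: the primal-dual exchange on the laminar family of tight Hall sets is asserted but not carried out, and it is not a priori clear that deleting jobs cannot increase the dual load $\alpha_i + \beta_{j'}$ on a surviving edge after the tight sets rearrange. Note that the statement you actually need is the \emph{aggregate} one---that $\sum_d b_{dj'}\eta_d = w_{j'}/y_{j'}$ cannot increase---not a per-constraint monotonicity of individual multipliers, and it is this aggregate statement that the polymatroid/SUA framework proves in one stroke. Your water-filling scheme is essentially the \textcite{VY21,GTV22b} algorithm and is correct in spirit, but establishing its correctness and the monotonicity property from scratch would amount to reproving a substantial part of the SUA machinery. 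In short: right skeleton, but the load-bearing lemma is missing, and Lemma~\ref{lemma:restricted-polymatroid} is both the missing ingredient and a much cleaner patch than finishing your uncrossing argument by hand.
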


It seems tempting to hope that \monpsp captures also general unrelated machine scheduling. However, as was suggested in~\cite{IKM15}, this is false; we
give a counterexample in \cref{app:unrelated-not-monotone}.

\begin{table}
  \centering
  \caption{Polynomial-time algorithms for online min-sum completion time scheduling.}\label{table:results-online}
  \begin{threeparttable}
    \begin{tabular}{lr@{~}lr@{~}lr}
      \toprule
      Problem & \multicolumn{2}{r}{old (nclv.)} & \multicolumn{2}{r}{old (clv.)} & this paper (nclv.)  \\
      \midrule
      \psp & 128$^\ddagger$ &\cite{ImKM18} & \gray{128} && 27$^\ddagger$  \\
      \monpsp & 25.74${^*}^\ddagger$ &\cite{ImKM18} & \gray{25.74} && 4$^\ddagger$ \\
      \midrule
      \abc{R}{r_j,\pmtn}{\sum w_j C_j} & 32$^*$ &\cite{ImKMP14} & 8~\cite{HSSW97,LindermayrMR23}, 5.78$^\dagger$ &\cite{ChakrabartiPSSSW96} & 4.62 \\
      \abc{R}{r_j,\pmtn, s_{ij} \in \{0,1\}}{\sum C_j} & \gray{25.74} && \gray{8, 5.78$^\dagger$} && 3 \\
      \midrule
      \abc{Q}{r_j,\pmtn}{\sum w_j C_j} & \gray{25.74} && \gray{8, 5.78$^\dagger$} && 4 \\
      \abc{Q}{r_j,\pmtn}{\sum C_j} & \gray{25.74} && \gray{8, 5.78$^\dagger$} && 3 \\
      \bottomrule
    \end{tabular}
    {\small
      \begin{tablenotes}
          \item[$*$] These bounds are implications of flow time results.
          \item[$\dagger$] Uses randomization. 
          \item[$\ddagger$] 
          The time complexity of \pf depends on the encoding of the polytope.
      \end{tablenotes}
      }
  \end{threeparttable}
 \end{table}

\subsubsection*{Results via Superadditivity}
For our second
group of results, we introduce a subclass of \psp, which is, in contrast to \pf-monotonicity, not defined in terms of \pf.
For a fixed instance,
the defining property makes assumptions on the optimal objective value when all jobs are released simultaneously
as a function of the processing requirements $p=(p_1,\ldots,p_n)$, denoted by $\opt_{0}(p)$.

\begin{definition}[$\alpha$-superadditive \psp]
\label{def:superadditive-psp}
  We say that a \psp is $\alpha$-\emph{superadditive} if for every partition $p = \sum_{\ell=1}^L p^{(\ell)}$ it holds
  that $\sum_{\ell=1}^L \opt_{0}(p^{(\ell)}) \leq \alpha \cdot \opt_{0}(p)$.
\end{definition}

Based on this definition, we introduce a new analysis framework that builds on a decomposition of \pf's schedule into structured instances on which we exploit $\alpha$-superadditivity.
We show the following result for $\alpha$-superadditive \psp.

\begin{theorem}\label{thm:superadditive}
  \pf has a competitive ratio of at most $2\alpha + 1$ for minimizing the total weighted completion time for $\alpha$-superadditive \psp. For uniform release dates, this bound reduces to $2\alpha$.
\end{theorem}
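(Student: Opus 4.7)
The plan is to separate \pf's weighted completion time into a release-date contribution and a post-release contribution. Since $C_j \ge r_j$ for every feasible schedule, the bound $\sum_j w_j r_j \le \opt$ is a free lower bound, which will account for the ``$+1$'' in the competitive ratio. Under uniform release dates this term vanishes, so the theorem reduces to proving $\sum_j w_j C_j^{\pf} \le 2\alpha\cdot \opt_0(p)$, and the general bound will follow by combining the two parts via $\opt_0(p) \le \opt$.

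To control the post-release contribution, I would decompose \pf's schedule into \emph{completion epochs} and use the resulting partition of the processing vector $p$ to trigger $\alpha$-superadditivity. Relabel the jobs so that $C_1^{\pf} \le \dots \le C_n^{\pf}$, set $T_0 := 0$ and $T_k := C_k^{\pf}$, and define $p^{(k)}_j$ to be the amount of work that \pf performs on job $j$ during the epoch $[T_{k-1}, T_k]$. By construction $p^{(k)}_j$ is supported on $J_k := \{k, \dots, n\}$, the $p^{(k)}$ form a valid partition of $p$, and a standard telescoping yields
\[
  \sum_j w_j C_j^{\pf} \;=\; \sum_{k=1}^n W_k\, (T_k - T_{k-1}), \qquad W_k := \sum_{j \ge k} w_j.
\]

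The technical heart of the argument is the per-epoch inequality
\[
  W_k\,(T_k - T_{k-1}) \;\le\; 2\,\opt_0\!\bigl(p^{(k)}\bigr) \qquad \text{for every } k.
\]
Once this is established, summing over $k$ and invoking $\alpha$-superadditivity (\cref{def:superadditive-psp}) gives
\[
  \sum_j w_j C_j^{\pf} \;\le\; 2\sum_k \opt_0\!\bigl(p^{(k)}\bigr) \;\le\; 2\alpha\cdot \opt_0(p) \;\le\; 2\alpha\cdot \opt,
\]
which is exactly the uniform-release bound, and adding $\sum_j w_j r_j \le \opt$ yields the $(2\alpha+1)$-bound in general.

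The main obstacle is proving the per-epoch inequality at the abstract \psp level. My intended route is to exploit the KKT optimality conditions for \pf's weighted Nash social welfare rule: at every time $t$, the Lagrange multipliers $\eta(t)$ of the packing constraints satisfy $\sum_d \eta_d(t) = W(t)$ and $y_j(t) = w_j/(B^\top \eta(t))_j$ for every unfinished job. Integrating these identities over $[T_{k-1}, T_k]$ yields an efficiency estimate $\sum_j w_j z_j/y_j(t) \le W(t)$ for every feasible rate vector $z$, which in turn lower-bounds the cost of \emph{any} feasible schedule for the sub-instance $p^{(k)}$. The factor $2$ is expected to arise from a sorting/Cauchy--Schwarz step of the same flavour as the classical comparison of weighted processor sharing with WSPT on a single machine; executing this step with potentially time-varying rates within an epoch and an arbitrary packing polytope is the delicate part of the proof.
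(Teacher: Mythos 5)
The decomposition into epochs, per-epoch $2$-competitiveness against $\opt_0$, and superadditivity is indeed the paper's high-level plan, but your decomposition has a genuine gap at the release-date handling: you split only at completion times, while the paper splits at \emph{all} event times (arrivals \emph{and} completions), and this difference is exactly where the ``$+1$'' comes from.

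Concretely, your per-epoch inequality $W_k(T_k - T_{k-1}) \le 2\,\opt_0(p^{(k)})$ is false when release dates are non-uniform. Take a single job with $r_1 = 2$, $p_1 = 1$, unit capacity. Then $T_1 = C_1 = 3$, $W_1 = w_1$, but $p^{(1)} = (1)$ so $\opt_0(p^{(1)}) = w_1$, and $3w_1 \not\le 2w_1$. The failure is structural: within an epoch $[T_{k-1}, T_k]$ new jobs can arrive, so the rates are not constant, the sub-instance $p^{(k)}$ with all jobs released at time $0$ is \emph{not} a structured instance (jobs do not finish simultaneously under PF), and $W_k$ counts weight of jobs whose work is not yet reflected in $p^{(k)}$ because they have not been released. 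Moreover, if your per-epoch inequality did hold you would conclude $\alg \le 2\alpha\cdot\opt$ unconditionally, making your separate addition of $\sum_j w_j r_j \le \opt$ redundant; the fact that the theorem only gives $2\alpha + 1$ is itself a signal that the bound as you set it up cannot hold. The fix is what the paper does: break at every arrival \emph{and} completion, so that each $p^{(\ell)}$ is a genuine structured instance in which all positively-processed jobs finish together; the leftover term in each epoch — $(E_{\ell+1} - E_\ell)\sum_{j \in U(E_\ell) \setminus J(E_\ell)} w_j$, coming from unfinished-but-unreleased jobs — then telescopes precisely to $\sum_j w_j r_j \le \opt$, giving the ``$+1$''. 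Your sketch of the per-epoch bound via KKT conditions and a dual-fitting style argument is the right direction (Theorem~\ref{theorem:pf-common-c} in the paper is proved exactly that way, via $(\DLP(1))$), and under uniform release dates your completion-epoch split coincides with the paper's, so that branch is essentially correct given that lemma.
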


We now outline important implications of this theorem.
First, by showing that unrelated machine scheduling is $1.81$-superadditive, we conclude the following
bound, again improving over the currently best-known bounds of $8$~\cite{HSSW97} for deterministic and $5.78$~\cite{ChakrabartiPSSSW96} for randomized algorithms on polynomial-time clairvoyant online scheduling.

\begin{theorem} \label{theorem:unrelated}
  There is a polynomial-time non-clairvoyant online algorithm for minimizing the total weighted completion time on unrelated machines, \abc{R}{r_j, \pmtn}{\sum w_j C_j}, with a competitive ratio of at~most~$4.62$. For uniform release dates, this bound reduces to $3.62$.
\end{theorem}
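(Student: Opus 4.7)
The plan is to obtain \cref{theorem:unrelated} as a direct consequence of \cref{thm:superadditive} applied to unrelated machine scheduling. This problem is cast as a PSP whose polytope is the bipartite matching polytope between jobs and machines with edge rates $s_{ij}$. If one shows that this PSP is $1.81$-superadditive in the sense of \cref{def:superadditive-psp}, then \cref{thm:superadditive} with $\alpha = 1.81$ immediately yields the stated competitive ratios $2\alpha + 1 = 4.62$ for arbitrary release dates and $2\alpha = 3.62$ for uniform release dates. The polynomial-time claim for \pf is handled separately: at each rate update \pf solves a concave maximization (the weighted Nash social welfare) over a polytope with a polynomial-size linear description, which is tractable via standard convex optimization.

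The heart of the proof is therefore the superadditivity bound
\[
\sum_{\ell=1}^L \opt_0(p^{(\ell)}) \le 1.81 \cdot \opt_0(p)
\]
for every partition $p = \sum_{\ell=1}^L p^{(\ell)}$ of the processing requirements. My plan is to pass through a suitable LP relaxation $\LP$ of the offline preemptive problem with uniform release dates. First, I would establish additivity of the LP under partitioning: since both the machine-capacity constraints and the completion-time objective are linear in the fractional assignment variables, superimposing feasible LP solutions for the sub-instances yields a feasible LP solution for $p$ of aggregate cost $\sum_\ell \LP(p^{(\ell)})$, so $\sum_\ell \LP(p^{(\ell)}) \le \LP(p)$. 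Second, I would bound the integrality gap of the chosen LP by $1.81$, i.e., $\opt_0(q) \le 1.81 \cdot \LP(q)$ for every processing vector $q$. Chaining these two inequalities with $\LP(p) \le \opt_0(p)$ gives $1.81$-superadditivity.

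The hardest step will be the integrality-gap bound of $1.81$. Standard completion-time LPs for unrelated machines have integrality gap at most $2$ via Shmoys--Tardos or Schulz--Skutella type rounding, and breaking below $2$ requires either strengthening the LP (e.g., by adding Queyranne-type parallel inequalities) or a more careful rounding analysis that exploits the matching structure of the polytope and the convexity of completion-time contributions. A rounding-free alternative I would pursue is to take an optimal preemptive migratory schedule for $p$, decompose the processing of each job $j$ in proportion to $p^{(\ell)}_j / p_j$ to obtain fractional schedules for the $p^{(\ell)}$, and charge the resulting completion-time blow-up across sub-instances by an exchange argument; the constant $1.81$ then emerges from optimizing parameters in the charging. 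Either route crucially exploits the matching structure, which is what distinguishes unrelated machines from general PSP and permits a constant strictly below~$2$.
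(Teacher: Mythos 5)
Your top-level plan matches the paper's: invoke \cref{thm:superadditive} with $\alpha = 1.81$ after establishing that unrelated machine scheduling is $1.81$-superadditive. But the two routes you sketch for proving this superadditivity both have genuine gaps.

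Your first route claims LP superadditivity, $\sum_\ell \LP(p^{(\ell)}) \le \LP(p)$, by ``superimposing feasible LP solutions for the sub-instances.'' This is false. The time-indexed LP $(\LP(\kappa))$ has machine-capacity constraints $\sum_j b_{dj} y_{jt} \le \frac{1}{\kappa}$ at each time $t$; adding up feasible solutions from $L$ sub-instances gives $\sum_j b_{dj} \sum_\ell y^{(\ell)}_{jt} \le L/\kappa$, which is not feasible. The scaling trick (take the optimal LP solution for $p$ and scale $y_{jt}$ by $p_j^{(\ell)}/p_j$) does give feasible solutions for each $p^{(\ell)}$, but because the mean-busy-time objective normalizes by $p_j$, each scaled solution has the \emph{same} LP value as $\LP(p)$, yielding only the useless $\sum_\ell \LP(p^{(\ell)}) \le L \cdot \LP(p)$. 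Your alternative route (proportionally decompose an optimal preemptive migratory schedule and charge the completion-time blow-up) does not just work either: with migration, completion-time contributions are not linear in $p$ for a fixed schedule, and the exchange argument you allude to is precisely the hard part --- you have not identified where the constant $1.81$ would come from. The ``integrality gap $\le 1.81$'' that you gesture at is not a known fact for any standard LP for $R\mid \pmtn \mid \sum w_j C_j$.

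The paper's actual argument sidesteps both difficulties by factoring through \emph{non-preemptive} schedules. For $R \mid\mid \sum w_j C_j$, the objective with position-index variables $x_{ijk}$ is $\sum_{i,j,k} (p_j/s_{ij}) x_{ijk} \sum_{j',k' \le k} w_{j'} x_{ij'k'}$, which is \emph{linear} in $p$ for a fixed assignment $x$, and the feasibility constraints on $x$ do not involve $p$ at all. Plugging the assignment that is optimal for $p$ into each $p^{(\ell)}$ gives a feasible (sub-optimal) schedule for that sub-instance, so $\sum_\ell \optnp(p^{(\ell)}) \le \optnp(p)$ exactly (Lemma~\ref{lemma:non-preemptive-superadditive}). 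Then the factor $1.81$ is not an integrality gap but the \emph{power of preemption}: Sitters (\cite{Sitters17}, Corollary~3) proved $\optnp(p) \le 1.81 \cdot \opt_0(p)$, and $\opt_0 \le \optnp$ trivially, giving the chain $\sum_\ell \opt_0(p^{(\ell)}) \le \sum_\ell \optnp(p^{(\ell)}) \le \optnp(p) \le 1.81\, \opt_0(p)$. You would need this clean detour through non-preemptive schedules and the citation of Sitters' power-of-preemption bound to close your argument.
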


Second, we show
that minimizing the total completion time is 1-superadditive
both
on related machines and for restricted assignment.
This implies that \pf achieves the best-possible competitive ratio of $2$ for these problems. To the best of our knowledge, this is the first tight analysis of a non-clairvoyant algorithm on heterogeneous machines for this objective.

\begin{theorem} \label{theorem:2-competitive}
  There is a strongly polynomial-time non-clairvoyant algorithm for minimizing the total completion time with a competitive ratio equal to $2$ for related machines, \abc{Q}{\pmtn}{\sum C_j}, and restricted assignment, \abc{R}{\pmtn, s_{ij} \in \{0,1\}}{\sum C_j}. For non-uniform release dates, the competitive ratio is at most $3$ for both problems.
\end{theorem}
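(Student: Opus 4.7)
The plan is to apply Theorem~\ref{thm:superadditive} with $\alpha = 1$: this yields exactly the promised bounds $2\alpha = 2$ for uniform release dates and $2\alpha + 1 = 3$ in general. The matching lower bound of $2$ is inherited from the classical single-machine non-clairvoyant lower bound~\cite{MPT94}, and the strongly polynomial runtime of \pf{} for both problem classes is already part of Theorem~\ref{thm:restricted-related-weighted}. Hence the whole proof reduces to establishing that $Q|\pmtn|\sum C_j$ and $R|\pmtn, s_{ij}\in\{0,1\}|\sum C_j$ are $1$-superadditive in the sense of Definition~\ref{def:superadditive-psp}.

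\paragraph{Related machines.} For $Q|\pmtn|\sum C_j$ I would invoke the classical Horvath--Lam--Sethi characterization of the preemptive optimum as the SRPT schedule routing the $k$-th smallest remaining job to the $k$-th fastest machine. Sorting $p_{(1)}\le\cdots\le p_{(n)}$, this yields a closed form $\opt_0(p) = \sum_{k=1}^{n} a_k\, p_{(k)}$ with coefficients $a_1 \ge \cdots \ge a_n \ge 0$ depending only on the speeds. An Abel summation (with $a_{n+1}:=0$) then expresses $\opt_0$ as a nonnegative combination of the functions $\Phi_k(p) := \sum_{j=1}^k p_{(j)} = \min_{|S|=k}\sum_{i\in S} p_i$. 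Each $\Phi_k$ is a pointwise minimum of linear functionals, hence concave and positively homogeneous of degree~$1$, and therefore superadditive: for any partition $p=\sum_\ell p^{(\ell)}$, $\min_S\sum_{i\in S}\sum_\ell p^{(\ell)}_i \ge \sum_\ell \min_S\sum_{i\in S} p^{(\ell)}_i$. Superadditivity of $\opt_0$ follows.

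\paragraph{Restricted assignment and main obstacle.} The same template applies to $R|\pmtn, s_{ij}\in\{0,1\}|\sum C_j$, but the structural step is the main obstacle: the preemptive optimum is no longer a single fixed assignment of ranks to machines, because the compatibility graph restricts which jobs can run simultaneously. My plan is to verify, via a standard exchange argument on the transversal matroid of the compatibility graph, that the preemptive optimum picks at each moment a maximum-cardinality matching prioritizing jobs with smallest remaining work. This should yield an analogous representation $\opt_0(p) = \sum_k a_k\,\Phi_k(p)$ with $a_k \ge 0$ and functions $\Phi_k(p) = \min_{S \in \mathcal{F}_k}\sum_{i\in S} p_i$ taken now over matroid-feasible $k$-sets $\mathcal{F}_k$. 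Each $\Phi_k$ is still a minimum of linear forms and hence superadditive, so the argument of the previous paragraph carries over verbatim. Small cases (e.g.\ job~$1$ compatible only with machine~$1$ and jobs~$2,3$ compatible only with machine~$2$, where one computes $\opt_0(p) = p_1 + p_2 + p_3 + \min(p_2, p_3)$, which is manifestly superadditive) support the form and give confidence that the matroid-based argument is the right generalization.
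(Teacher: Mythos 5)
Your overall plan is exactly the paper's: obtain $1$-superadditivity, apply Theorem~\ref{thm:superadditive} with $\alpha=1$, inherit the lower bound from the single-machine construction of Motwani et al., and invoke the strongly polynomial implementability of \pf from \cref{subsec:monotone-applications}. For \emph{related machines} your argument is correct and, after unfolding, coincides with the paper's: both rest on the closed form $\opt_0(p)=\sum_k \mu_k\, p_{(k)}$ with speed-only, decreasing, nonnegative coefficients $\mu_k$ (the paper derives these via the Toeplitz system; your $a_k$ are the paper's $\mu_k$), and both then exploit that a nonincreasing coefficient sequence against sorted data is superadditive. Your repackaging through $\Phi_k(p)=\min_{|S|=k}\sum_{i\in S}p_i$ and the concave-plus-homogeneous $\Rightarrow$ superadditive observation is a clean and slightly more conceptual way to finish; the paper instead does a direct rearrangement inequality. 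Substantively the same.

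For \emph{restricted assignment} there is a genuine gap, and your own small example already exposes it. You claim a representation $\opt_0(p)=\sum_k a_k\,\Phi_k(p)$ with $a_k\ge 0$ and $\Phi_k(p)=\min_{S\in\mathcal F_k}\sum_{i\in S}p_i$ over matroid-feasible $k$-sets $\mathcal F_k$, but this cannot hold for the instance you write down (job~$1$ on machine~$1$; jobs~$2,3$ on machine~$2$). There the transversal matroid has rank~$2$, so $\mathcal F_3=\emptyset$, and
\[
\Phi_1(p)=\min(p_1,p_2,p_3),\qquad \Phi_2(p)=p_1+\min(p_2,p_3),
\]
while $\opt_0(p)=p_1+p_2+p_3+\min(p_2,p_3)$. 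Taking $p=(T,1,1)$ for large $T$ forces $a_2=1$; taking $p=(0,1,2)$ then demands $a_1\cdot 0+1\cdot 1=4$, a contradiction. Even if you allow $\Phi_k$ over \emph{all} $k$-subsets, taking $p=(1,0,0)$, $p=(T,1,1)$, and $p=(0,1,2)$ produces incompatible constraints on $(a_1,a_2,a_3)$. So the representation as a nonnegative cone combination of these $\Phi_k$ does not exist, and no rank-based exchange argument can rescue it: the optimal value is not a function of the order statistics alone once the compatibility graph enters. The paper avoids this entirely: it cites Sitters' result that $R\mid\pmtn, s_{ij}\in\{0,1\}\mid\sum C_j$ admits a \emph{non-preemptive} optimum, then shows $1$-superadditivity of the non-preemptive problem (Lemma~\ref{lemma:non-preemptive-superadditive}) by a different mechanism, namely that the non-preemptive objective is linear in $p$ once a positional assignment $(x_{ijk})$ is fixed, so an optimal assignment for $p$ is simultaneously feasible for each $p^{(\ell)}$. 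You would need either that route or a genuinely different decomposition; the $\Phi_k$ ansatz is not it.
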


We note that \pf cannot be 2-competitive for non-uniform release dates. In \Cref{app:release-dates-lb} we give an example
showing that its competitive ratio is at least $2.19$,
even on a single machine.

\begin{table}
  \centering
  \caption{Polynomial-time algorithms for non-clairvoyant min-sum completion time scheduling.}\label{table:results}
   \begin{tabularew}{l*{2}{>{\spew{1}{+1}}r@{~}l}r}
     \toprule
     Problem & \multicolumn{2}{r}{old} & \multicolumn{2}{r}{lower bound} & this paper \\
     \midrule
     \abc{R}{\pmtn}{\sum w_j C_j} & 32 &\cite{ImKMP14}& \gray{2} && 3.62 \\
     \abc{R}{\pmtn, s_{ij} \in \{0,1\}}{\sum C_j} & \gray{25.74} && 2 &\cite{MPT94}& 2 \\
     \abc{Q}{\pmtn}{\sum C_j} & \gray{25.74} && 2 &\cite{MPT94} & 2 \\
     \abc{P}{\pmtn}{\sum w_j C_j} & 2 & \cite{BBEM12} & 2 & \cite{MPT94} & 2 \\
     \bottomrule
   \end{tabularew}
\end{table}

\subsubsection*{Implications for Matching Markets} 
As a byproduct of our argumentation, we give new insights on one-sided matching markets with dichotomous utilities. Recently, it has been shown that market equilibria can be computed in strongly polynomial time~\cite{VY21,GTV22b} and, in the case of unit budgets, correspond to optimal solutions of the Nash Social Welfare maximization problem~\cite{GTV22b}. By noting that the market can be formulated as a submodular utility allocation market, we show that both results are also directly implied by the work of \textcite{JainV10}, even for arbitrary budgets.

\subsection{Our Techniques and Intuition}

The Proportional Fairness (\pf) algorithm repeatedly solves a convex program,
and we use its Lagrange multipliers and its KKT conditions to characterize the optimal solutions. Further, we use a \emph{dual fitting} analysis
to compare \pf to an optimal
schedule.
Dual fitting seems to be the most prominent and successful method used in online min-sum scheduling analyses; cf.~\cite{AnandGK12,GuptaKP12,ImKMP14,ImKM18,GGKS19,GuptaMUX20,LindermayrM22,Jaeg23,LindermayrMR23}.
It is straightforward to
incorporate the instantaneous polyhedral constraints of \psp into a standard time-indexed LP relaxation
that describes the scheduling problem with the weighted mean busy time objective
instead of the total weighted completion time~\cite{DyerW90,SchulzS02,ImKM18}. 
To prove a performance guarantee, we 
construct a feasible dual solution 
whose objective value captures a fraction of the algorithm's objective value. 

\subsubsection*{Challenges in the Dual Fitting Approach}
In \psp, the dual of the standard LP has two sets of variables: $(\dualVa_j)_j$ which correspond to the primal constraints that each job has to be completed, and $(\dualVb_{dt})_{d,t}$ which correspond to the primal instantaneous polyhedral constraints (we call them dual packing variables in the following).  
A natural interpretation in the case of a \emph{single machine} can be derived by wiggling the primal constraints~\cite{AnandGK12}: $\dualVa_j$ is proportional to job $j$'s contribution to the total objective value, and the variable $\dualVb_t$ corresponding to the only packing constraint
is proportional to the total weight of available jobs at time $t$.
By setting duals according to this interpretation based on \pf's schedule, one can easily prove that \pf 
is~$4$-competitive for single-machine scheduling. 

Crafting duals for general \psp is substantially more complex, as  pointed out by \textcite{ImKM18}, because, at each time $t$, we must distribute the total weight of available jobs across multiple variables corresponding to the $D$ packing constraints. 
A natural weight distribution
is given by the optimal Lagrange multipliers $(\lagrangevar_{dt})_d$ corresponding to the packing constraints in the Eisenberg-Gale convex program~\cite{eisenberg1959consensus}, which \pf uses to compute the Weighted Nash Social Welfare at time~$t$.
However, this natural dual setup does not work in general because by the dynamics of \pf 
it can happen that a job~$j$ needs more weight distributed to its relevant dual packing variables at some time $t' > t$ to compensate a smaller rate $y_{j}(t') < y_j(t)$; this seems necessary for arguing dual feasibility.
Indeed, there are examples where a job's rate decreases when another job completes in \pf's schedule (cf.~\Cref{app:unrelated-not-monotone}).
The analysis of \textcite{ImKM18}
handles this by carefully redistributing optimal Lagrange multipliers across different times using more complex dual variable assignments; we present a simplified and improved variant 
thereof in \Cref{app:pf}. Our main contribution are the subsequent new methods.

\subsubsection*{Monotonicity}
Our key observation for \monpsp is that these dynamics
have much more structure. In particular, no job requires more weight to be assigned to its relevant dual packing variables at a later time. We show this using the KKT conditions of the Eisenberg-Gale convex program and the fact that a job's rate cannot decrease if another job completes. 
While the latter can happen when a new job $j'$ arrives, $j'$ then already contributes to the objective before its arrival, and we again use monotonicity to show that no job running before $j'$'s arrival suffers from $j'$'s absence.
These observations
admit the natural dual setup outlined above and prove that \pf~is~4-competitive~for~\monpsp~(cf. \Cref{sec:monotone}). 

\subsubsection*{Tight Dual Fitting and Structured Instances}
Breaching the factor of 4 with either of these dual setups seems difficult;
this barrier can also be found in similar dual fitting analyses~\cite{AnandGK12,ImKMP14,ImKM18,GGKS19,GMUX21,LindermayrM22,Jaeg23,LassotaLMS23}.
To the best of our knowledge,
better guarantees for non-clairvoyant algorithms have only been achieved in settings where 
other strong lower bounds are known
\cite{MPT94,DengGBL00,KC03,BBEM12,JagerW24}, which
appears challenging for \monpsp as well as for the general \psp.

If we knew the structural properties of an optimal LP solution, we could potentially overcome this barrier by tailoring the dual packing variables in a way that a job only contributes a fraction of its weight to them that is proportional to its progress in an optimal LP solution. Indeed, we give a tight dual fitting analysis for \pf on a single machine, where we understand the optimal LP solution~\cite{Goe96}.
For \psp however, it seems hard to get sufficient insights on (near-)optimal LP solutions.

We instead pursue a different approach.
Instead of analyzing the whole schedule with one dual fitting argument, we split \pf's objective into several pieces and apply dual fitting to each of them.
Our first key insight to make this work is that a special case of \psp 
admits a tight dual fitting.
This comprises instances in which \pf completes all jobs simultaneously; we call those \emph{structured}. We prove that for structured instances \pf{} yields
an optimal solution to the LP, minimizing the weighted mean busy time, and that this is 2-competitive for the total weighted completion time.

\subsubsection*{Decomposition and Superadditivity}
The second key observation is that we can decompose \pf's objective by splitting the \pf schedule at release dates and completion times into structured instances. 
This does not work for any schedule and uses specific properties: \begin{inparaenum}[(i)] \item the \pf schedule after time $t$ corresponds to the \pf schedule for the instance where the processing requirements of all jobs are reduced by the processed amount before time $t$ in the original \pf schedule,
and \item \pf
assigns an available job a \emph{positive} rate, hence
its flow time can be 
expressed as the sum of its completion times in the structured subinstances in which it appears.
\end{inparaenum}
Finally, we bound the overall optimal objective value by the optimal objective values of the subinstances via superadditivity.

The primary challenge in showing superadditivity for machine scheduling problems
lies in handling migrations.
At first sight one may think that allowing job migrations makes a problem easier, 
since an algorithm has more freedom to revert bad decisions, especially in online settings. We even show that any non-migratory non-clairvoyant algorithm has an unbounded competitive ratio (cf.\ Theorem~\ref{thm:migration-necessary}).
However, at the same time migrations allow 
to create more complex optimal schedules, which makes analyses way more difficult.
Indeed, there are many indications in literature that scheduling unrelated machines with migrations
is much harder than without:
\begin{inparaenum}[(i)] \item if migration is allowed, the unweighted offline problem is APX-hard~\cite{Sitters17}, while otherwise it is polynomial-time solvable~\cite{BrunoCS74,Horn73},
\item strong time-indexed LP relaxations no longer lower bound the optimal objective value if migration is allowed~\cite{SchulzS02}, and
\item best-known offline approximation ratios for the weighted problem are stronger if migration is disallowed~\cite{li2024approximating,Harris24,ImL16,Sitters17,Sku01,SchulzS02,ImL23}. \end{inparaenum}

Our key trick for unrelated machines is to
evade migrations by showing 1-superadditivity for non-preemptive schedules, which are more structured and, thus, easier to handle. 
To go back to the optimal migratory objective value, we use the concept of the \emph{power of preemption}, that is, the factor between an optimal non-preemptive objective value and an optimal migratory objective value. For unrelated machines, the currently best-known upper bound by \textcite{Sitters17} on this ratio is~1.81.

The analysis 
via non-preemptive schedules
always loses the power of preemption as a
factor, hence rules out tight competitiveness bounds of $2$ whenever this factor is larger than $1$ because this ratio can already be tight when comparing to the optimal non-preemptive schedule.
To still achieve a competitive ratio of $2$ for related machines, where the power of preemption is at least 1.39~\cite{EpsteinLSS17}, we instead directly analyze the optimal migratory solution. This is possible because on related machines an optimal solution
has many nice structural properties~\cite{gonzalez1977optimal}.

\subsection{Further Related Work}

A technique related to our analysis framework has been introduced by
\textcite{DengGBL00}, and was also used later~\cite{BBEM12,JagerW24}. These works analyze non-clairvoyant algorithms with an inductive argument on the number of jobs by showing that truncating the schedule at the first completion time maintains the bound on the competitive ratio. Compared to our problem, these works can use simpler combinatorial lower bounds on an optimal solution to make this approach work. Moreover, they assume that all jobs are available in the beginning, and it is not clear how to integrate non-uniform release dates into the inductive argument.

Non-clairvoyant scheduling to minimize the total weighted completion time on parallel identical machines is well-understood~\cite{MPT94,DengGBL00,BBEM12,MoseleyV22}, and even in the setting with precedence constraints the gap to the lower bound of 2 was recently narrowed~\cite{GGKS19,LassotaLMS23,JagerW24}.
While we show that even on a single machine, \pf cannot be 2-competitive in the presence of release dates, we note that
this bound is 
achieved by some other single-machine scheduling algorithm~\cite{JagerSSW24}.
For heterogeneous machines, besides the analysis 
by \textcite{ImKM18} for \pf (and a slight improvement~\cite{LindermayrMR23}), there are several analyses for the more general total flow time objective with speed augmentation~\cite{ImKMP14,Gupta2010,GuptaIKMP12,ImKM18}, 
whose bounds can be translated to the completion time objective.
Furthermore, \textcite{GKS21} presented a non-clairvoyant algorithm for related machine scheduling with a special type of precedence constraints.

For scheduling on unrelated machines in the clairvoyant online setting,
\textcite{BienkowskiKL21} improved in a very careful analysis the aforementioned doubling framework~\cite{HSSW97} the bound on the competitive ratio (when ignoring the requirement on polynomial-time algorithms) from $4$~\cite{HSSW97} to $3$ for deterministic algorithms and from $2.886$~\cite{ChakrabartiPSSSW96} to $2.443$ for randomized algorithms.
Furthermore, it is known that a natural polynomial-time extension of the preemptive WSPT rule to unrelated machines is $8$-competitive~\cite{LindermayrMR23}.
Other results are several greedy algorithms which assign jobs at their arrival immediately to machines and never migrate them, and, thus, were all analyzed against a non-migratory optimal solution, for which stronger LP relaxations are possible~\cite{AnandGK12,GMUX21,Jaeg21,LindermayrM22}. Moreover, there is a $2$-competitive algorithm for a special case of related machines for which the ordered speed vector $s_1 \ge \cdots \ge s_m$ does not decrease too quickly~\cite{LCXZ09}.
\textcite{ChenImPetety24gd} presented very recently a clairvoyant online algorithm for several instances of PSP for minimizing the total weighted flow time with speed augmentation.

\section{Preliminaries} \label{sec:preliminaries}

\subsection{PSP and Unrelated Machine Scheduling}
In the polytope scheduling problem, jobs $j=1,\dotsc,n$ arrive online at their release dates~$r_j \ge 0$. When a job is released, its weight~$w_j > 0$ as well as the column~$B_j$ from the matrix $B = (b_{dj})_{d,j} \in \mathbb Q_{\ge 0}^{D \times n}$ become known. 
A schedule assigns at every time~$t \ge 0$ a processing rate~$y_j(t) \ge 0$ to each job~$j$ with $r_j \le t$, while all other jobs~$j$ have $y_j(t) = 0$. The resulting completion time~$C_j$ of a job~$j$ is the first time~$t'$ such that $\int_{r_j}^{t'} y_j(t)\,\mathrm d t \ge p_j$, where $p_j$ denotes the processing requirement of job $j$. The processing rate vector $y(t)$ must satisfy, at any time~$t$, the constraints~$B \cdot y(t) \le \bone$, $y(t) \ge \mathbf 0$ defining the
polytope~$\cP$. The objective is to minimize the sum of weighted completion times~$\sum_j w_j C_j$ of all jobs.

In the special case of unrelated machine scheduling, we are given $m$ machines where every job~$j$ runs on every machine $i$ at speed $s_{ij} \geq 0$. At any time a job can be scheduled on at most one machine, and on every machine can run at most one job. Thus, we can model any schedule for a fixed time unit $t$ using variables $x_{ijt} \geq 0$ that indicate whether job $j$ runs on machine $i$ at time~$t$ and constraints $\sum_{j=1}^n x_{ijt} \le 1$ for every machine $i$ and $\sum_{i=1}^m x_{ijt} \le 1$ for every job $j$.
The rate of job $j$ at time~$t$ is thus equal to $y_j(t) = \sum_{i=1}^m s_{ij} x_{ijt}$, and we can write the polytope of feasible  rate-allocation pairs as
\[ 
\mathcal Q \coloneqq \biggl\{(y,x) \in \RR_{\ge 0}^{n} \times \RR_{\ge 0}^{m \times n} \biggm| y_j = \sum_{i=1}^m s_{ij} x_{ij}\,\forall j \in [n],\ \sum_{j=1}^n x_{ij} \le 1\, \forall i \in [m],\ \sum_{i=1}^m x_{ij} \le 1\, \forall j \in [n]\biggr\} \ .
\]
Since $\mathcal Q$ is downward-closed and full-dimensional in $y$,
we can obtain a matrix $B \in \Q_{\geq 0}^{D \times n}$ such that the projection of  $\mathcal Q$ to $y$ is equal to $\{y \in \RR_{\geq 0}^n \bigm| By \leq \bone \}$ by applying Fourier-Motzkin elimination to $\mathcal Q$, showing that unrelated machine
scheduling is indeed a special case of \psp.

Any fractional assignment~$x$ can be efficiently decomposed into integral assignments, which we can feasibly be scheduled with preemptions and migrations during a discrete time interval around~$t$. We emphasize that migration is essential for non-clairvoyant
algorithms
to achieve a constant competitive ratio for unrelated machine  scheduling. 
We show that this is true even in the special case of related machines. The proof is given in \cref{app:lb-migratory}.

\begin{theorem}\label{thm:migration-necessary} 
  Any non-migratory non-clairvoyant algorithm for minimizing the total completion time of $n$ jobs on related machines, \abc{Q}{\pmtn}{\sum C_j}, has a competitive ratio of at least $\Omega(\sqrt{n})$.
\end{theorem}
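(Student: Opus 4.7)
The plan is to construct, for any deterministic non-migratory non-clairvoyant algorithm~$\alg$, an adversarial instance with $n$ jobs released at time~$0$ on $m = n+1$ related machines on which $\alg$'s competitive ratio is $\Omega(\sqrt n)$. The configuration has one \emph{fast} machine of speed~$\sqrt n$ and $n$ \emph{slow} machines of unit speed. Any instance in this configuration with job sizes at most $O(P)$ admits an offline schedule of cost $O(nP)$ by placing one job on each machine, which is the baseline I will compare against.

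The analysis splits into two cases based on $\alg$'s use of the slow machines on the benchmark instance~$I_B$ consisting of $n$ jobs each of size~$P$ for $P$ sufficiently large. If $\alg$ never commits any job to a slow machine in~$I_B$, then $\alg$ effectively executes a single-machine schedule on the fast machine, and any such schedule satisfies $\sum_j C_j \ge \sum_{i=1}^n iP/\sqrt n = \Omega(Pn^{3/2})$, whereas $\opt(I_B) = O(nP)$; the ratio is already $\Omega(\sqrt n)$. Otherwise, let~$t^*$ be the first time $\alg$ commits some job~$j^*$ to a slow machine in~$I_B$, and choose~$P$ large enough that $t^*$ precedes the first completion in~$I_B$. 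I then play the alternative instance~$I_A$ in which $j^*$ has size~$\sqrt n$ and every other job has size~$\epsilon = \Theta(t^*/\sqrt n)$, tuned so that no completion in~$I_A$ precedes~$t^*$ either. By non-clairvoyance, $\alg$'s behaviour agrees on~$I_A$ and~$I_B$ throughout $[0, t^*]$, so $\alg$ again places $j^*$ on a slow machine at time~$t^*$ in~$I_A$. The non-migratory constraint then forces $j^*$ to be processed on that slow machine at rate~$1$, so it completes no earlier than $t^*+\sqrt n$, contributing $\Omega(\sqrt n)$ to $\alg$'s objective. Meanwhile the offline optimum on~$I_A$ places $j^*$ on the fast machine (completion time~$1$) and the $n-1$ small jobs on distinct slow machines (completion time~$\epsilon$ each), yielding cost $O(1+n\epsilon) = O(\max(1, t^*\sqrt n))$. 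For $t^* = O(1/\sqrt n)$ this is $O(1)$ and the ratio is $\Omega(\sqrt n)$; for larger~$t^*$, $\alg$'s objective additionally contains $\Omega(nt^*)$ from the delayed small-job completions, while the optimum is $O(t^*\sqrt n)$, still giving ratio $\Omega(\sqrt n)$.

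The main technical obstacle I anticipate is rigorously handling algorithms whose commit time~$t^*$ is triggered by an observed completion rather than by the clock, since for such algorithms $t^*$ scales with~$P$ as well. Since $\alg$ is deterministic, however, the adversary (knowing $\alg$'s strategy) can simulate $\alg$ on both instances; by jointly tuning~$P$ large and~$\epsilon$ appropriately, I can ensure that no completion is observed before $\alg$'s commit decision in either instance, so the two instances remain indistinguishable to~$\alg$ throughout $[0, t^*]$ and the committed job is provably the same. The non-migratory constraint then binds and yields the $\Omega(\sqrt n)$ lower bound on $\alg$'s objective on~$I_A$.
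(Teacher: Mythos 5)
Your construction and two-case outline match the paper's proof of Theorem~\ref{thm:migration-necessary}: one machine of speed $\sqrt n$, many unit-speed machines, and an adversary that makes the first job committed to a slow machine long. The paper is similarly terse on the adversarial/indistinguishability argument, so your more explicit framing with $I_A$ and $I_B$ is in the right spirit. There is, however, a concrete gap in the numeric choices. The uniform small-job size $\epsilon = \Theta(t^*/\sqrt n)$ equals the \emph{average} processing a job would receive on the fast machine by time~$t^*$, but nothing forces the algorithm to process the available jobs evenly: an adversarial algorithm may devote the entire interval $[0,t^*]$ to a single job~$j''$, which then accrues up to $\sqrt n\, t^*$ processing. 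In~$I_A$ that job of size $\epsilon = \Theta(t^*/\sqrt n)$ would complete long before~$t^*$, breaking the indistinguishability of $I_A$ and $I_B$. The repair is to make the small-job sizes per-job rather than uniform: set each $j'$'s size in $I_A$ to (slightly more than) $q_{j'}(t^*)$, its accrued processing by time $t^*$ in the $I_B$-run. Then no job completes before $t^*$, and $\opt(I_A) \le 1 + \sum_{j'\ne j^*} q_{j'}(t^*) \le 1 + \sqrt n\, t^*$, which recovers exactly your claimed $O(\max(1, t^*\sqrt n))$ bound, after which your case distinction on $t^*$ goes through. One more point worth making explicit: you implicitly use that $j^*$ has received no processing before~$t^*$ (so that committing it to a slow machine costs a full $\sqrt n$ additional time), and this is not a free assumption but is precisely what the non-migratory constraint buys you --- a job that first appears on a slow machine at time~$t^*$ cannot have been run on any other machine earlier.
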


\subsection{Proportional Fairness}
The Proportional Fairness strategy (\pf), which has been analyzed for \psp by \textcite{ImKM18}, computes at every release time and completion time the processing rates~$y_j(t)$ of all jobs~$j$ that have been released but not yet completed. These jobs are constantly processed at the computed rates until a new job is released or completed. We denote by $J(t) \coloneqq \{j \in [n] \mid r_j \leq t < C_j\}$ the set of \emph{available jobs} at time~$t$ in \pf's schedule. 
Although the completion times of the jobs~$j$ with~$C_j > t$ are still unknown at
time~$t$, it is already known at this point whether a job~$j$
belongs to~$J(t)$, so that this
set can be used by \pf. 

As mentioned in the introduction, at any instant~$t$,
\pf maximizes the Weighted Nash Social Welfare, which is the weighted geometric mean of the 
allocated job rates among all feasible rate vectors~$y(t) \in \mathcal P$ supported on $J(t)$.
It is the solution to the following convex program applied to~$J \coloneqq J(t)$:
{
 \renewcommand{\arraystretch}{1.5}
 \[\begin{array}{rr>{\displaystyle}rcl>{\quad}l}
  (\CP(J))& \operatorname{maximize} &\multicolumn{3}{l}{\displaystyle\sum_{j \in J} w_j \cdot \log(y_j)} \\
  &\text{subject to} &\sum_{j \in J} b_{dj} \cdot y_j &\le& 1 &\forall d \in [D] \\
  &&y_j &\ge& 0 &\forall j \in J
 \end{array}\]
}%
Note that the \pf strategy
may not be executed exactly algorithmically 
because for some rational inputs the convex program may have only irrational solutions~\cite[Example~48]{JainV10}. While this is no issue for many settings, in general $(\CP(J))$ can be solved to arbitrary precision~$\delta > 0$ in time polynomial in the encoding length of $B$ and $w$ and polynomial in $\log \frac 1 \delta$ using the ellipsoid method~\cite{grotschel2012geometric,vishnoi2021algorithms}.
As a consequence, we can algorithmically implement \pf so that for every $\varepsilon > 0$ we can compute rates in polynomial time and only lose a factor of $1+\varepsilon$ in the competitive ratio for \psp. This is possible by
 approximately computing rates for a slower machine and increase them by $\delta$ afterwards.

Since \pf computes rates at most $2n$ times, 
we say that it runs in polynomial time. Note that an actual schedule for these rates requires a pseudo-polynomial number of preemptions. However, with an implementation in geometrically increasing time windows one can reduce the number to a polynomial at the cost of an arbitrarily small constant in the competitive ratio, cf.~\cite{MPT94}.

We can assume w.l.o.g.\ that $p_j > 0$, because every reasonable algorithm can complete a job $j$ with $p_j=0$ at time 0. Similarly, if there is a job $j$ such that $y_j = 0$ for all $y \in \cP$, no solution can complete job $j$. Thus, we assume that this is not the case. Then the definition of $(\CP)$ implies
the following observation.
\begin{observation}\label{obs:positive-rates}
Let $y$ be the solution to $(\CP(J))$. Then, $y_j > 0$ for all $j \in J$.
\end{observation}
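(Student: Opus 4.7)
The plan is to exploit the divergence $\log(\cdot) \to -\infty$ at~$0$: any feasible rate vector with a vanishing coordinate yields an objective value of~$-\infty$, so it cannot be optimal as soon as we exhibit even a single feasible vector with all coordinates on~$J$ strictly positive. The whole argument therefore reduces to producing such a vector together with the routine existence statement for the maximum.

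For each $j \in J$, the paragraph preceding the observation rules out the case $y_j = 0$ for all $y \in \cP$, so there exists some $y^{(j)} \in \cP$ with $y^{(j)}_j > 0$. I would then set $\bar y \coloneqq \tfrac{1}{|J|} \sum_{j \in J} y^{(j)}$. By convexity of~$\cP$ we have $\bar y \in \cP$, and by nonnegativity of every $y^{(j)}$ combined with strict positivity of the $j$-th component of $y^{(j)}$ we have $\bar y_j > 0$ for every $j \in J$. Hence the objective value $\sum_{j \in J} w_j \log \bar y_j$ is finite, while any $y \in \cP$ with some coordinate $y_j=0$ for $j\in J$ has objective value $-\infty$, so such a $y$ is strictly dominated by $\bar y$ and cannot be a maximizer.

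The one point that needs a separate check is the existence of the optimum, but this is standard: the upper level set of $\bar y$ inside $\cP$ is closed (since $\sum_j w_j \log y_j$ is upper-semicontinuous under the convention $\log 0 = -\infty$) and bounded (since in the scheduling setting $\cP$ is a bounded polytope --- every job contributes to at least one constraint with a positive coefficient, otherwise the program would be unbounded and the observation trivially vacuous). Compactness then yields a maximizer, and by the previous paragraph every maximizer must satisfy $y_j > 0$ for all $j \in J$. I do not foresee any obstacle beyond this compactness verification.
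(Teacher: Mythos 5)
Your argument is correct and is exactly the reasoning the paper leaves implicit when it writes, just before the observation, that ``the definition of $(\CP)$ implies the following observation'': the divergence $\log(0)=-\infty$ forces every coordinate of a maximizer to be strictly positive, once one knows a feasible point with all coordinates in $J$ positive exists. Your averaging construction of $\bar y$ and the compactness check are the right way to make that precise, so the proposal matches the paper's intended proof.
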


As the analysis of \textcite{ImKM18}, our analysis uses the optimal Lagrange multipliers~$(\lagrangevar_{d})_d$ of the packing constraints, which satisfy with the optimal solution~$y$ the following KKT conditions~\cite{BV2014}:
\begin{align}
 \frac{w_j}{y_j} - \sum_{d=1}^D b_{dj} \lagrangevar_{d} &= 0 \quad \forall j \in J \label{psp-kkt1}\\
 \lagrangevar_{d} \cdot \biggl(1- \sum_{j \in J} b_{dj} y_{j} \biggr) &= 0 \quad \forall d \in [D]  \label{psp-kkt2} \\
 \lagrangevar_{d} & \geq 0 \quad \forall d \in [D] \label{psp-kkt3}
\end{align}
By Observation~\ref{obs:positive-rates}, the optimal solution of $(\CP(J))$ assigns every job a positive rate, hence we can omit
the Lagrangian multipliers of the non-negativity constraints of $(\CP(J))$ in the KKT conditions.

\begin{lemma}[{cp.~\cite[Lemma~3.3]{ImKM18}}]\label{lemma:kkt-multiplier-bound} 
 Let $J$ be a set of jobs, and let $\lagrangevar$ be optimal Lagrange multipliers for $(\CP(J))$. Then $\sum_{d=1}^D \lagrangevar_{d} = \sum_{j \in J} w_j$.
\end{lemma}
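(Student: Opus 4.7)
The plan is to derive the identity directly from the stationarity condition \eqref{psp-kkt1} together with the complementary slackness condition \eqref{psp-kkt2}. First, I would multiply \eqref{psp-kkt1} by $y_j$ (which is legitimate since Observation~\ref{obs:positive-rates} gives $y_j > 0$, but even without this the resulting identity $w_j = y_j \sum_{d=1}^D b_{dj}\lagrangevar_d$ holds trivially if $y_j=0$). Summing the resulting equation over all $j \in J$ and swapping the order of summation yields
\[
\sum_{j \in J} w_j \;=\; \sum_{j \in J} y_j \sum_{d=1}^D b_{dj}\lagrangevar_d \;=\; \sum_{d=1}^D \lagrangevar_d \sum_{j \in J} b_{dj} y_j.
\]

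Next, I would dispatch the inner sum using complementary slackness. Whenever $\lagrangevar_d > 0$, condition \eqref{psp-kkt2} forces $\sum_{j \in J} b_{dj} y_j = 1$; whenever $\lagrangevar_d = 0$, the corresponding term $\lagrangevar_d \sum_j b_{dj}y_j$ vanishes regardless of the value of the inner sum. In either case, $\lagrangevar_d \sum_{j \in J} b_{dj} y_j = \lagrangevar_d$, so the right-hand side collapses to $\sum_{d=1}^D \lagrangevar_d$, giving the desired identity.

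There is essentially no obstacle here: the argument is a two-line manipulation of KKT conditions, and the only subtlety is making sure that Observation~\ref{obs:positive-rates} (or a trivial case distinction) justifies the clearing step from \eqref{psp-kkt1}, and that complementary slackness is applied correctly so that one does not accidentally assume primal tightness for inactive constraints. Since the paper already cites~\cite{ImKM18} for the analogous statement, I would keep the write-up very short, presenting just the two displayed equalities above and a one-sentence justification invoking \eqref{psp-kkt2}.
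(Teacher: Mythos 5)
Your argument is correct and is precisely the paper's proof read in reverse: the paper starts from $\sum_d \lagrangevar_d$, invokes~\eqref{psp-kkt2} to insert $\sum_j b_{dj} y_j$, swaps the order of summation, and finishes with~\eqref{psp-kkt1}, whereas you begin at $\sum_j w_j$ and run the same chain of equalities the other way. Same decomposition, same two KKT conditions, no substantive difference.
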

  
\begin{proof}
 Let $y$ be an optimal solution to $(\CP(J))$. Then
 \begin{align*}
   \sum_{d=1}^D \lagrangevar_{d}
   = \sum_{d=1}^D \lagrangevar_{d} \sum_{j\in J} b_{dj} y_{j}
   =  \sum_{j\in J}  y_{j} \sum_{d=1}^D b_{dj} \lagrangevar_{d}
   = \sum_{j\in J}  y_{j} \frac{w_j}{y_{j}}
   = \sum_{j \in J} w_j
 \end{align*}
 by using the KKT condition~\eqref{psp-kkt2} in the first equality and~\eqref{psp-kkt1} in the third equality.
\end{proof}

\subsection{LP relaxations and Dual Fitting}
In the dual fitting analysis,
\pf is compared to the optimal solution that runs at speed~$\frac{1}{\kappa}$ for some parameter $\kappa \geq 1$. We assume that the time is scaled in a way that all release dates and completion times in this schedule and in \pf's schedule are integral (which is possible assuming rational input). In this case, the following linear program is a relaxation of the speed-scaled \psp, where the variables~$y_{jt}$ indicate the total amount of processing that job~$j$ receives in the interval~$[t,t+1]$. It uses the total weighted mean busy time as
an underestimation of the total weighted completion time~\cite{DyerW90,Goe96}. 
{
 \renewcommand{\arraystretch}{1.5}
 \[\begin{array}{rr>{\displaystyle}rc>{\displaystyle}l>{\quad}l}
  (\LP(\kappa))& \operatorname{minimize} &\multicolumn{3}{l}{\displaystyle\sum_{j \in J} w_{j} \sum_{t \geq r_j} \frac{y_{jt}}{p_j}  \Bigl(t + \frac{1}{2} \Bigr)} \\
  &\text{subject to} &\sum_{t \geq r_j} y_{jt} &\ge& p_j & \forall j \in J  \\
  && \sum_{j \in J} b_{dj} \cdot y_{jt} &\le& \frac{1}{\kappa} &\forall d \in [D],\ \forall t \geq 0 \\
  && y_{jt}  &\ge& 0 &\forall  j \in J,\ \forall t \geq r_j 
 \end{array}\]%
}%
The scalability of completion times immediately implies that the optimal objective value of $(\LP(\kappa))$ lower bounds $\kappa \cdot \opt$ for every $\kappa \geq 1$.
The dual of $(\LP(\kappa))$ can be written as follows.
{
 \renewcommand{\arraystretch}{1.9}
 \[\begin{array}{rr>{\displaystyle}rc>{\displaystyle}l>{\quad}l}
  (\DLP(\kappa))& \operatorname{maximize} &\multicolumn{3}{l}{\displaystyle\sum_{j \in J} \dualVa_j - \sum_{d = 1}^D \sum_{t \geq 0} \dualVb_{dt}} \\
  &\text{subject to} &\frac{\dualVa_j}{p_j} - \frac{w_j}{p_j} \Bigl(t + \frac{1}{2} \Bigr) &\le& \kappa \sum_{d=1}^D b_{dj}  \dualVb_{dt} &\forall j \in J,\ \forall t \geq r_j \\
  &&\dualVa_j, \dualVb_{dt} &\ge& 0 &\forall j \in J,\ \forall t \geq 0,\ \forall d \in [D]
 \end{array}\]
}%

For the analysis of \pf we introduce for every $t \ge 0$ the notation $U(t) \coloneqq \{j \in J \mid C_j > t\}$ for the set of \emph{unfinished jobs} at time~$t$ and $W(t) \coloneqq \sum_{j \in U(t)} w_j$ for their total weight. 
Since we assumed that \pf's completion times are integral, it holds that $\sum_{t \geq 0} W(t) = \sum_{j \in J} w_j C_j$.

\subsection{Offline Complexity}
The offline \psp is
APX-hard, even for uniform release dates.
This follows for example from the hardness of preemptive unrelated machine scheduling~\cite{Sitters17}.
On the positive side, constant-factor approximation algorithms can be obtained via standard techniques~\cite{ImKM18}. For the sake of completeness, we demonstrate in \Cref{app:offline-psp} how randomized $\alpha$-point rounding~\cite{SchulzS97,QueyranneS02} applied to an optimal solution of a variant of $(\LP(1))$ yields a $(2+\varepsilon)$-approximation for \psp. The proof also implies that $(\LP(1))$ lower bounds the optimal objective value within a factor of~$1/2$.

\section{\pf-Monotone \psp} \label{sec:monotone}

In this section, we consider the class of \pf-monotone \psp~(\monpsp, cf.~Definition~\ref{def:monotone}) and prove one of our main results, Theorem~\ref{thm:main-monotone}, stating that \pf has a competitive ratio of at most $4$ for this class. Further, we show implications of this result for specific machine scheduling problems in \monpsp, namely, related machine scheduling and restricted assignment (Theorem~\ref{thm:restricted-related-weighted}). We also point out the implications of our work for matching markets in this \lcnamecref{sec:monotone} (\cref{cor:hz-convex-program,cor:hz-strongly-polynomial}).

Recall that \monpsp consists of instances to \psp whose polytope~$\cP$ has the property that for all possible sets~$J' \subseteq J$ of available jobs with corresponding rate vectors~$y'$ and $y$ computed by \pf it holds that $y'_j \ge y_j$ for all $j \in J'$ (cf.~Definition~\ref{def:monotone}).

\subsection{Competitive Analysis}\label{subsec:monotone-competitive}

To show the upper bound of $4$ on the competitive ratio (Theorem~\ref{thm:main-monotone}), we fix an arbitrary instance of \monpsp and the corresponding \pf schedule. Let $C_j$ denote the completion time of job $j$ in that schedule, and let $\alg:=\sum_{j=1}^n w_jC_j$ denote the
objective function value of \pf.
For each time~$t$, in addition to the actual rate vector~$y(t)$ of \pf, we consider the rate vector~$\hy(t) \in \cP$ that \pf would choose if all jobs~$j \in U(t)$ were available at time $t$.
In other words, $y(t)$ is the optimal solution to $(\CP(J(t)))$ and $\hy(t)$ is the optimal solution to $(\CP(U(t)))$. Note that, while we use these rates for the analysis, during the actual execution, \pf cannot use them because it has no access to unreleased jobs. Moreover, the set $U(t)$ always refers to the unfinished jobs in the actual PF schedule and not in the schedule with rates~$\hy(t)$. Let $\hlagrangevar_{d}(t)$, $d \in [D]$, be the optimal Lagrange multipliers corresponding to $(\CP(U(t)))$.

Using $J(t) \subseteq U(t)$ for any time $t$ and monotonicity, we make the following observation.

\begin{observation}\label{lemma:monotone-rates}
  At every time $t$ and for every $j \in J(t)$ it holds that $\hy_{j}(t) \leq y_{j}(t)$.
\end{observation}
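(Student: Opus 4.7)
The plan is to derive the inequality as a direct instantiation of the MonPSP property. The key preliminary observation is that $J(t) \subseteq U(t)$ for every time $t \geq 0$: a job $j$ that is available in \pf's schedule at time $t$ satisfies $r_j \leq t < C_j$, so in particular $C_j > t$, which places $j$ in $U(t)$. Thus the two job sets to which we apply \pf in the statement are comparable by inclusion, which is exactly the setting in which \Cref{def:monotone} speaks.

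Next I would apply \Cref{def:monotone} with the substitution $J' := J(t)$ and $J := U(t)$. By construction, $y(t)$ is the rate vector that \pf computes on the smaller set $J(t) = J'$ (it is the optimal solution of $(\CP(J(t)))$), while $\hy(t)$ is the rate vector that \pf computes on the larger set $U(t) = J$ (it is the optimal solution of $(\CP(U(t)))$). The monotonicity hypothesis then yields, for every $j \in J(t)$,
\[
y_j(t) \;=\; y'_j \;\geq\; y_j \;=\; \hy_j(t),
\]
which is exactly the claim.

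There is essentially no technical obstacle here; the whole content of the observation is to rename the objects in \Cref{def:monotone} so that the inclusion $J(t) \subseteq U(t)$ triggers the definition. The only point that deserves an explicit sentence is that $U(t)$ and $J(t)$ are both \emph{sets of jobs} to which \pf can legitimately be applied (even though in the actual online execution \pf cannot use $U(t)$ because it contains not-yet-released jobs), since \Cref{def:monotone} is a structural property of the polytope $\cP$ and quantifies over \emph{all} pairs $J' \subseteq J$, including hypothetical ones used purely for analysis.
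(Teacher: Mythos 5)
Your proof is correct and matches the paper's own argument, which is exactly one sentence: observe $J(t) \subseteq U(t)$ and apply \Cref{def:monotone} with $J' = J(t)$, $J = U(t)$. Your additional remark that the monotonicity property is a structural fact about the polytope, and hence applies to hypothetical sets like $U(t)$ (which PF cannot actually see at runtime), is also made in the surrounding text of the paper.
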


We now perform a dual fitting argument via $(\DLP(\kappa))$ for arbitrary $\kappa \geq 1$, which we later set to $2$.
To this end, we consider the following assignment of dual variables:
\begin{itemize}
 \item $\dualSa_j \coloneqq w_j C_j$ for every job $j \in J$,
 \item $\dualSb_{dt} \coloneqq \frac{1}{\kappa} \hlagrangevar_{d}(t)$ for every $d \in [D]$ and time $t \geq 0$.
\end{itemize}

\begin{lemma}\label{lemma:monoton-psp-dual-objective}  
It holds that $\bigl(1 - \frac{1}{\kappa}\bigr) \alg = \sum_{j \in J} \dualSa_j - \sum_{d=1}^D \sum_{t \geq 0} \dualSb_{dt}$.
\end{lemma}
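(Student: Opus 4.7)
The plan is to compute both sides explicitly and observe they agree, relying on the KKT multiplier identity (\cref{lemma:kkt-multiplier-bound}) and the standard bookkeeping identity for the weighted completion time.

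First I would evaluate $\sum_{j\in J}\dualSa_j$, which is immediate from the definition $\dualSa_j=w_jC_j$: it equals $\sum_{j\in J}w_jC_j=\alg$.

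Next I would evaluate the penalty term $\sum_{d=1}^D\sum_{t\ge 0}\dualSb_{dt}=\frac{1}{\kappa}\sum_{t\ge 0}\sum_{d=1}^D\hlagrangevar_d(t)$. Here the key input is \cref{lemma:kkt-multiplier-bound}, applied to the convex program $(\CP(U(t)))$ rather than $(\CP(J(t)))$: since the $\hlagrangevar_d(t)$ are the optimal Lagrange multipliers for $(\CP(U(t)))$, the lemma yields $\sum_{d=1}^D\hlagrangevar_d(t)=\sum_{j\in U(t)}w_j=W(t)$. Then I would use the preliminary observation that, under the integrality assumption on completion times, $\sum_{t\ge 0}W(t)=\sum_{j\in J}w_jC_j=\alg$. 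Combining gives $\sum_{d=1}^D\sum_{t\ge 0}\dualSb_{dt}=\frac{1}{\kappa}\alg$.

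Subtracting finally yields $\sum_{j\in J}\dualSa_j-\sum_{d=1}^D\sum_{t\ge 0}\dualSb_{dt}=\alg-\tfrac{1}{\kappa}\alg=\bigl(1-\tfrac{1}{\kappa}\bigr)\alg$, as claimed. There is no serious obstacle; the only subtlety worth flagging is that the multiplier identity must be invoked for the convex program indexed by the unfinished set $U(t)$ (which is exactly how the duals $\dualSb_{dt}$ were defined), and that monotonicity plays no role in this particular equality -- it will only enter in the subsequent feasibility argument for $(\DLP(\kappa))$.
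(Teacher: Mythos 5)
Your proof is correct and matches the paper's argument exactly: compute $\sum_j \dualSa_j = \alg$, use \cref{lemma:kkt-multiplier-bound} applied to $(\CP(U(t)))$ to get $\sum_d \dualSb_{dt} = \frac{1}{\kappa}W(t)$, sum over $t$ using $\sum_{t\ge 0}W(t)=\alg$, and subtract. Your closing remark that monotonicity is not needed here and that the multiplier identity must be read off the program indexed by $U(t)$ is accurate and is indeed the only subtlety.
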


\begin{proof}
 First, note that $\sum_{j \in J} \dualSa_j = \alg$. Moreover, the definition of $\dualSb_{dt}$ and Lemma~\ref{lemma:kkt-multiplier-bound} imply that $\sum_{d=1}^D \dualSb_{dt} = \frac{1}{\kappa} W(t)$ at every time $t$. Thus, $\sum_{t \geq 0} \sum_{d=1}^D \dualSb_{dt} = \frac{1}{\kappa} \alg$, which concludes the proof.
\end{proof}

\begin{lemma}\label{lemma:monoton-psp-dual-feasibility}
 The solution $(\dualSa_j)_{j}$ and $(\dualSb_{dt})_{d,t}$ is feasible for $(\DLP(\kappa))$.
\end{lemma}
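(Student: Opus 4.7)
The plan is to verify the only nontrivial constraint of $(\DLP(\kappa))$, namely
\[
\frac{\dualSa_j}{p_j} - \frac{w_j}{p_j}\Bigl(t + \tfrac{1}{2}\Bigr) \le \kappa \sum_{d=1}^D b_{dj}\,\dualSb_{dt},
\]
for every job $j \in J$ and every integer time $t \ge r_j$, by splitting into the trivial case $t \ge C_j$ and the essential case $r_j \le t < C_j$. In the first case, integrality of both $t$ and $C_j$ forces $t + \tfrac12 > C_j$, so the left-hand side equals $\tfrac{w_j}{p_j}(C_j - t - \tfrac12) < 0$, while the right-hand side is non-negative by~\eqref{psp-kkt3}; no further work is needed.

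For the case $r_j \le t < C_j$, I would first simplify the right-hand side via the KKT stationarity condition~\eqref{psp-kkt1} applied to $(\CP(U(t)))$, which reads $\sum_{d=1}^D b_{dj}\,\hlagrangevar_d(t) = w_j/\hy_j(t)$ for every $j \in U(t)$. Substituting $\dualSa_j = w_j C_j$ and $\dualSb_{dt} = \tfrac{1}{\kappa}\hlagrangevar_d(t)$ makes the factors of $\kappa$ cancel, so the constraint reduces to the purely algorithmic inequality
\[
\hy_j(t)\,\bigl(C_j - t - \tfrac{1}{2}\bigr) \le p_j.
\]
The core of the proof is then to establish this inequality by means of monotonicity. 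For every $s \in [t, C_j)$, job $j$ is released but unfinished, so $j \in J(s)$; moreover $J(s) \subseteq U(s) \subseteq U(t)$ because the set of unfinished jobs can only shrink over time. Applying the \monpsp property of Definition~\ref{def:monotone} to the chain $J(s) \subseteq U(t)$ then yields $y_j(s) \ge \hy_j(t)$ pointwise on $[t, C_j)$, and integration produces
\[
p_j \;\ge\; \int_t^{C_j} y_j(s)\,\mathrm d s \;\ge\; \hy_j(t)\,(C_j - t) \;>\; \hy_j(t)\,\bigl(C_j - t - \tfrac{1}{2}\bigr),
\]
closing the argument.

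The only nontrivial step, and therefore the main obstacle, is the monotonicity step: one must argue that the hypothetical rate $\hy_j(t)$ that \pf would assign on the larger set $U(t)$ lower bounds every actual rate $y_j(s)$ that $j$ receives during $[t, C_j)$, even though along the way new jobs may be released, driving some rates down. The inclusion $J(s) \subseteq U(t)$ is exactly what turns this worry into an application of \textbf{Definition~\ref{def:monotone}}, and once it is in place the remainder of the proof is routine unpacking of definitions together with a one-line integration.
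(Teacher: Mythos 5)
Your proof is correct and follows essentially the same route as the paper's. The only noteworthy (and in fact mildly cleaner) difference is in how you invoke monotonicity: the paper chains two applications of Definition~\ref{def:monotone} --- first $J(t') \subseteq U(t')$ to get $y_j(t') \ge \hy_j(t')$ (their Observation~\ref{lemma:monotone-rates}), then $U(t') \subseteq U(t)$ to get $\hy_j(t') \ge \hy_j(t)$ --- whereas you apply it once directly to the inclusion $J(s) \subseteq U(t)$, yielding $y_j(s) \ge \hy_j(t)$ in a single step. The explicit case split on $t \ge C_j$ is also handled a bit more smoothly in the paper by bounding $C_j - t - \tfrac12 \le C_j - t$ and letting the sum be empty, but this is purely cosmetic.
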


\begin{proof}
 The dual variables as defined above are clearly non-negative.
 We now verify the dual constraint.
   Fix a job $j$ and a time $t \geq r_j$. Then,
   \begin{align*}
     \frac{ \dualSa_j }{p_j} - \Bigl(t + \frac 1 2\Bigr) \cdot \frac{w_j}{p_j}
     \leq (C_j-t) \cdot \frac{w_j}{p_j}
     \le \sum_{t' = t}^{C_j-1} \frac{w_j}{p_j} 
     =  \sum_{t' = t}^{C_j-1} \frac{\hy_{j}(t')}{p_j} \cdot \frac{w_j}{\hy_{j}(t')} 
     \leq \sum_{t' = t}^{C_j-1}  \frac{y_{j}(t')}{p_j}   \cdot \frac{w_j}{\hy_{j}(t')}\ .     
 \end{align*}
 The last inequality uses Lemma~\ref{lemma:monotone-rates}.
 Now observe that for every time $t' \geq t$ it holds that $U(t') \subseteq U(t)$, which implies $\hy_{j}(t') \geq \hy_{j}(t)$ via monotonicity.
 Thus, the above is at most 
\begin{align*}
  \frac{w_j}{\hy_{j}(t)}  \sum_{t' = t}^{C_j-1}  \frac{y_{j}(t')}{p_j} 
  =    \bigg(  \sum_{d=1}^D b_{dj} \hlagrangevar_{d}(t) \bigg) \sum_{t' = t}^{C_j-1} \frac{y_{j}(t')}{p_j} 
  \leq  \kappa  \sum_{d=1}^D b_{dj} \dualSb_{dt} \ ,
\end{align*}
 where the equality uses the KKT condition~\eqref{psp-kkt1} for $(\CP(U(t)))$, and the inequality holds because $j$ receives at most as much processing rate as it requires.
 This concludes the proof.
\end{proof}
  
\begin{proof}[Proof of Theorem~\ref{thm:main-monotone}]
	  Weak duality and Lemma~\ref{lemma:monoton-psp-dual-objective} and~\ref{lemma:monoton-psp-dual-feasibility} imply
  \begin{align*}
      \kappa \cdot \opt 
      \geq  \sum_{j \in J} \dualSa_j - \sum_{d=1}^D \sum_{t \geq 0} \dualSb_{dt} \geq \Bigl(1 -  \frac{1}{\kappa}\Bigr) \cdot \alg \ .
  \end{align*}
  Setting $\kappa = 2$  implies $\alg \leq 4 \cdot \opt$.
\end{proof}

\subsection{Implications for Machine Scheduling and Matching Markets} \label{subsec:monotone-applications}

Now we prove Theorem~\ref{thm:restricted-related-weighted} and thereby present new insights for one-sided matching markets.
We first note that both scheduling problems, on related machines (\abc{Q}{r_j, \pmtn}{\sum w_jC_j}) and with restricted assignment (\abc{R}{r_j, \pmtn, s_{ij} \in \{0,1\}}{\sum w_j C_j}), fall under \monpsp. Hence, \pf is $4$-competitive by Theorem~\ref{thm:main-monotone}. For scheduling on related machines, this has been shown by \textcite{ImKM18} via a connection to abstract markets, which we present in the following. Then, we prove that scheduling with restricted assignment falls also under \monpsp. Finally, we argue that \pf can be implemented in strongly polynomial time for both problems.

\subsubsection*{Monotonicity via Submodular Utility Allocation Markets}
In \emph{uniform utility allocation}~(UUA) \emph{markets}, introduced by \textcite{JainV10}, $n$ buyers~$j$ with budgets~$w_j$ want to maximize their own utility~$y_j$, but there are constraints of the form $\sum_{j \in S} y_j \le v(S) \ \forall S \subseteq [n]$ for some set function $v \colon 2^{[n]} \to \RR_{\ge 0}$, coupling the utilities of different buyers. Every constraint can post a price~$\eta_S$ that each buyer~$j \in S$ has to pay per unit of utility. That means, the total amount to be paid by $j$ is the sum of prices of the constraints affecting $j$. A \emph{market equilibrium} consists of a feasible utility allocation~$y$ together with prices~$\eta \ge \mathbf 0$ such that only tight constraints have a positive price and each buyer exactly uses up their budget, thus maximizing their own utility.

\Textcite{JainV10} showed for UUA markets that optimal solutions to the Weighted Nash Social Welfare  problem correspond to market equilibria. This problem is exactly the problem~$(\CP([n]))$ for the polytope $\cP = \{y \in \RR^n_{\ge 0} \mid \sum_{j \in S} y_j \le v(S)\ \forall S \subseteq [n]\}$. Moreover, they proved that if $v$ is monotone, submodular and $v(\emptyset) = 0$, i.e., if $\mathcal P$ is a polymatroid, then the market (termed \emph{submodular utility allocation} (SUA) \emph{market}) is competition-monotone, meaning that if some budgets are reduced,
the utilities of other buyers cannot decrease. By taking buyers as jobs and utilities as processing rates, this implies that PSP is \pf-monotone on $\mathcal P$ because we can model the restriction to a subset of jobs by setting the weight of the other jobs to $0$. 
Consequently, 
it suffices to show that the polytopes associated to our scheduling problems are polymatroids.
\textcite{ImKM18} proved this for scheduling on related machines.
We now show this statement for scheduling with restricted assignment.
This implies that \abc{R}{r_j, \pmtn, s_{ij} \in \{0,1\}}{\sum w_j C_j} is PF-monotone.

\begin{lemma}\label{lemma:restricted-polymatroid}
 For any instance of preemptive scheduling with restricted assignment, the associated polytope~$\cP$ is a polymatroid.
\end{lemma}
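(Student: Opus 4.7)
The plan is to identify $\cP$ with the independent-set polytope of the transversal matroid induced by the natural bipartite ``job-machine'' graph of the restricted-assignment instance. Let $M_j=\{i\in[m] : s_{ij}=1\}$ and let $G$ be the bipartite graph on $[n]\cup[m]$ with an edge $\{j,i\}$ whenever $i\in M_j$. Applying Fourier--Motzkin to $\mathcal Q$ as sketched in \cref{sec:preliminaries}, and using $s_{ij}\in\{0,1\}$, exhibits $\cP$ as the set of ``job-side degree vectors'' attainable by fractional matchings in $G$ that load each machine at most once:
\[
 \cP = \Bigl\{y\in\RR_{\ge 0}^n : \exists\, x\ge 0 \text{ with } \textstyle\sum_{i\in M_j} x_{ij}=y_j\ \forall j\in[n],\ \sum_{j : i\in M_j} x_{ij}\le 1\ \forall i\in[m]\Bigr\}.
\]

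The first step is to argue that $\cP$ is a $0/1$-polytope. Substituting $y_j=\sum_{i\in M_j} x_{ij}$ reduces $\mathcal Q$ to the fractional bipartite matching polytope of $G$, whose constraint matrix is totally unimodular; hence $\mathcal Q$ has integer vertices, and so does its projection $\cP$. A vertex of $\cP$ therefore has the form $\mathds{1}_I$ for some set $I\subseteq[n]$ that can be matched into $[m]$ in $G$, i.e., for an independent set $I$ of the \emph{transversal matroid} $\mathsf M$ on $[n]$ induced by $G$; conversely every such $I$ arises in this way. Thus $\cP=\mathrm{conv}\{\mathds{1}_I : I\in\mathcal{I}(\mathsf M)\}$.

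The second step is to invoke Edmonds' polytope theorem to rewrite this matroid polytope as
\[
 \cP = \Bigl\{y\in\RR_{\ge 0}^n : \textstyle\sum_{j\in S} y_j\le r(S)\ \forall S\subseteq[n]\Bigr\},
\]
where $r(S)$ is the rank of $S$ in $\mathsf M$, i.e., the size of a maximum matching in the subgraph $G[S\cup[m]]$. Being a matroid rank function, $r$ is non-decreasing, submodular, and satisfies $r(\emptyset)=0$, which is precisely the definition of a polymatroid adopted in \cref{subsec:monotone-applications}.

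The two substantive classical inputs are (i) integrality of the bipartite $b$-matching polytope, which underpins the $0/1$-vertex structure of $\cP$, and (ii) the transversal matroid theorem (Edmonds--Fulkerson, Mirsky--Perfect), which guarantees that $\mathcal I(\mathsf M)$ really is a matroid so that Edmonds' polytope description applies and $r$ is submodular. I expect the main obstacle to be citing or justifying~(ii) cleanly; as a self-contained fallback, submodularity of $r$ can be proved by an elementary augmenting-path/exchange argument that compares maximum matchings in $G[(S\cup T)\cup[m]]$ and $G[(S\cap T)\cup[m]]$, bypassing explicit matroid language altogether.
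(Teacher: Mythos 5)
Your proof is essentially the same as the paper's: both identify $\cP$ with the independence polytope of the transversal matroid of the job--machine bipartite graph and conclude it is a polymatroid; the paper passes from fractional to integral matchings via the Birkhoff--von Neumann theorem, whereas you use total unimodularity of the bipartite matching constraint matrix and then invoke Edmonds' matroid polytope theorem, which is a minor variation rather than a different route. One small slip to fix: your displayed description of $\cP$ omits the job-side degree constraint $\sum_{i\in M_j}x_{ij}\le 1$ (equivalently $y_j\le 1$) that is present in $\mathcal Q$; taken literally this would allow $y_j>1$ and break the claimed $0/1$-vertex structure, but since your prose refers to ``fractional matchings'' this is clearly just a transcription oversight.
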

\begin{proof}
 Consider the bipartite graph~$G \coloneqq (J \mathbin{\dot\cup} [m], E)$ with an edge $\{j,i\} \in E$ whenever job~$j$ can be processed on machine~$i$. Then, letting $\delta$ denote the set of incident edges, we have \[\cP = \biggl\{y \in \RR_{\ge 0}^{n} \biggm| \exists x \in \RR_{\ge 0}^{E}: y_j = \sum_{e \in \delta(j)} x_{e}\,\forall j \in J ,\ \sum_{e \in \delta(i)} x_{e} \le 1\, \forall i \in [m],\ \sum_{e \in \delta(j)} x_{e} \le 1\, \forall j \in J \biggr\}\ ,\]
 i.e., $\cP$ contains for every fractional matching of $G$ the vector~$y$ of fractional covering rates of all nodes~$j$ on the left-hand side~$J$ of $G$. By the Birkhoff-von Neumann Theorem, every fractional matching is a convex combination of integral matchings. Then the covering rates of the nodes are also the corresponding convex combinations of the incidence vectors of the subsets of nodes from~$J$ covered by the integral matchings. Therefore, $\mathcal P$ is the convex hull of the incidence vectors of subsets of~$J$ covered by a matching in $G$. This is exactly the independence polytope associated with the transversal matroid of $G$ (cf.~\cite{Oxl11}), which is in particular a polymatroid.
\end{proof}

For matching markets with dichotomous utilities, which correspond to scheduling with restricted assignment, \textcite{GTV22b} gave a direct proof that (in the unit budget case) optimal solutions to the Nash Social Welfare problem (formulated in terms of the allocation $x$) correspond to market equilibria, known as Hylland-Zeckhauser (HZ) equilibria~\cite{HZ79}. 
Lemma~\ref{lemma:restricted-polymatroid} implies that such a market can be modeled as a UUA market, and, thus,
even in the more general case where buyers have different budgets, the work of \textcite{JainV10} also implies this equivalence between the HZ equilibria and the Weighted Nash Social Welfare solution. Hence, we obtain the following slightly stronger result using an arguably simpler argumentation.

\begin{corollary} \label{cor:hz-convex-program}
For one-sided matching markets with dichotomous utilities and arbitrary budgets, an HZ equilibrium is an optimal solution to $(\CP)$, and every optimal solution to $(\CP)$ can be extended to an HZ equilibrium.
\end{corollary}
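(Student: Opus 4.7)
The plan is to combine Lemma~\ref{lemma:restricted-polymatroid} with the equivalence of \textcite{JainV10} between optima of the Weighted Nash Social Welfare problem and equilibria in SUA markets, which holds verbatim for arbitrary buyer budgets.

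First, I would set up the dictionary between the two models. In a one-sided matching market with dichotomous utilities, buyer~$j$ has budget~$w_j$ and an acceptable set of goods~$N(j) \subseteq [m]$, and her utility from a fractional allocation~$x$ is the covering rate $y_j = \sum_{i \in N(j)} x_{ij}$. By Lemma~\ref{lemma:restricted-polymatroid}, the set of achievable utility vectors is exactly the polymatroid $\cP = \{y \ge 0 \mid \sum_{j \in S} y_j \le v(S)\ \forall S \subseteq J\}$ associated with the rank function~$v$ of the transversal matroid, which is monotone, submodular, and satisfies $v(\emptyset)=0$. Hence the HZ market fits the SUA framework verbatim, with buyers as jobs, utilities as rates, and the polymatroid inequalities as the coupling constraints, and the Weighted Nash Social Welfare problem on this market is exactly $(\CP(J))$.

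Next I would invoke the theorem of \textcite{JainV10}: every optimal~$y$ of $(\CP(J))$ extends to a UUA market equilibrium whose equilibrium prices are the Lagrange multipliers~$\eta_S$ of the tight polymatroid constraints at~$y$, and conversely every UUA equilibrium has $y$-component optimal for $(\CP(J))$. To match the HZ formulation, per-good item prices~$p_i$ must be recovered from the set-prices~$\eta_S$: at an SUA optimum the tight sets form a chain-like (lattice) structure under intersection, and aggregating the $\eta_S$ along this chain via the standard polymatroid-greedy duality yields item prices under which the corresponding fractional matching~$x$ (obtained from the greedy decomposition of~$y$) maximizes each buyer's dichotomous utility subject to her budget~$w_j$. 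The converse direction, turning HZ item prices back into $\eta_S$-prices, follows by identifying the tight polymatroid inequalities at the equilibrium utility vector and invoking LP-duality for the matroid polytope.

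The main obstacle is this price-translation step: verifying that polymatroid Lagrange multipliers and per-good HZ prices are interchangeable while preserving budget exhaustion and utility maximization for each buyer. Once it is set up, the corollary is essentially a direct consequence of \textcite{JainV10} in the arbitrary-budget regime, strengthening the unit-budget argument of \textcite{GTV22b} through the cleaner polymatroid reduction made available by Lemma~\ref{lemma:restricted-polymatroid}.
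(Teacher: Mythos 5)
Your argument follows the same route as the paper: use Lemma~\ref{lemma:restricted-polymatroid} to recognize the utility polytope of the dichotomous HZ market as a polymatroid, then invoke the submodular-utility-allocation (SUA) market framework of Jain and Vazirani, which identifies maximizers of the Eisenberg--Gale program $(\CP)$ with market equilibria, even for non-unit budgets. The paper does not give a separate proof for the corollary; its justification is precisely the two-sentence informal argument in the paragraph preceding the statement, which matches your sketch.

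The one step you rightly flag as needing care --- translating the per-constraint prices $\eta_S$ of an SUA equilibrium into the per-good prices of an HZ equilibrium and back --- is indeed the only non-trivial piece, and it is left implicit in the paper too. Your remark that the tight-set lattice and the corresponding chain structure on the bipartite side (via Hall/K\"{o}nig duality for the transversal matroid) drive this translation is the right intuition; at an SUA optimum, the positive-price tight sets are nested, their good-side neighborhoods are the fully occupied goods, and each active buyer purchases only from goods whose aggregated chain price coincides with $\sum_{S\ni j}\eta_S$, so budget exhaustion and utility maximization carry over. So your proposal captures the paper's intended argument at the same level of rigor, arguably with slightly more openness about what is being deferred to the cited works.
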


\subsubsection*{Strongly Polynomial Algorithms}
Since the rate polytopes for both scheduling on related machines and with restricted assignment are polymatroids, we can use the combinatorial algorithm of \textcite{JainV10} for SUA markets, which is entirely stated in terms of utilities and based on submodular function minimization,
to compute a market equilibrium, and, thus, a rate allocation~$y$ that is optimal for $(\CP)$.
A corresponding machine allocation~$x$ can be computed easily by finding a feasible fractional assignment. 
This completes the proof of Theorem~\ref{thm:restricted-related-weighted}.

This algorithm is thus an alternative to the recently proposed strongly polynomial-time algorithm for computing a market equilibrium in matching markets with dichotomous utilities proposed by \textcite{VY21}, which directly incorporates the allocation~$x$. Since this problem exactly corresponds to \pf's allocation problem for scheduling with restricted assignment, an extension of their algorithm to arbitrary budgets~\cite{GTV22b} can also be used in \pf.
While their algorithm is conceptually simpler than the algorithm by \textcite{JainV10} for general SUA markets, the latter implies the following result via an arguably simpler argumentation.

\begin{corollary} \label{cor:hz-strongly-polynomial}
  An HZ equilibrium in one-sided matching markets with dichotomous utilities can be computed in strongly polynomial time.
\end{corollary}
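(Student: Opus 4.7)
The plan is to assemble the corollary from ingredients that are essentially already in place: the polymatroid structure of the restricted-assignment polytope (Lemma~\ref{lemma:restricted-polymatroid}), the correspondence between HZ equilibria and optimal solutions of $(\CP)$ (Corollary~\ref{cor:hz-convex-program}), and the strongly polynomial SUA-market algorithm of Jain and Vazirani. The proof will be short and mainly a chain of reductions.

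First, I would observe that a one-sided matching market with dichotomous utilities is exactly the market in which buyers~$j$ with budgets~$w_j$ seek utility $y_j = \sum_{e \in \delta(j)} x_e$ for a fractional matching~$x$ in the bipartite graph~$G$ of compatible buyer-item pairs. The utility polytope is therefore the polytope~$\cP$ of preemptive scheduling with restricted assignment studied in Lemma~\ref{lemma:restricted-polymatroid}, which is a polymatroid. Hence the market is a submodular utility allocation (SUA) market in the sense of \textcite{JainV10}, and their primal-dual algorithm, whose only non-combinatorial ingredient is submodular function minimization, computes an optimal solution~$y$ to $(\CP)$ together with prices~$\lagrangevar$ forming a market equilibrium in strongly polynomial time.

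Second, to produce an actual HZ equilibrium I still need the buyer-item allocation~$x$. By Corollary~\ref{cor:hz-convex-program}, any optimal~$y$ for~$(\CP)$ extends to an HZ equilibrium, so such an~$x$ exists. To find it one solves the bipartite feasibility problem $\sum_{e \in \delta(j)} x_e = y_j$ for $j \in J$, $\sum_{e \in \delta(i)} x_e \le 1$ for $i \in [m]$, $x \ge \mathbf 0$; this is a transportation problem that can be solved in strongly polynomial time by standard max-flow algorithms. Combining the strongly polynomial SUA solver with this strongly polynomial extension step yields an HZ equilibrium~$(x,y,\lagrangevar)$ in strongly polynomial time.

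The main obstacle is essentially bookkeeping rather than new mathematics: I need to make sure that invoking the Jain-Vazirani SUA solver on our polymatroid does not implicitly rely on the polymatroid being described by an explicit oracle whose evaluation would be expensive. Here the rank function of the transversal matroid of~$G$ can be evaluated by a single bipartite matching computation, so submodular minimization over it remains strongly polynomial. Once this is pinned down, the corollary follows immediately, and as a bonus it simultaneously covers the arbitrary-budget case (not only the unit-budget case of \textcite{VY21,GTV22b}), giving the strengthened statement claimed in the text preceding the corollary.
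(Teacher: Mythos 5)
Your proposal matches the paper's argument: cast the dichotomous-utility market as a SUA market by recognizing the utility polytope as the transversal-matroid polymatroid of Lemma~\ref{lemma:restricted-polymatroid}, run the Jain--Vazirani combinatorial algorithm (based on submodular function minimization) to get $y$ and prices in strongly polynomial time, and recover $x$ by solving a bipartite feasibility problem. Your extra remark that the transversal matroid's rank function is evaluated by a single bipartite matching, so the submodular-minimization oracle is itself strongly polynomial, is a worthwhile detail the paper leaves implicit, but the route is the same.
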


For scheduling on related machines, we demonstrate in \cref{app:pf-combinatorial}, analogously to the algorithms for scheduling with restricted assignment resp.\ one-sided matching markets with dichotomous utilities~\cite{VY21,GTV22b}, that also for this problem there is a simpler and faster algorithm that directly works with the allocation~$x$.
Moreover, our combinatorial algorithm
reveals that the known $2$-competitive non-clairvoyant algorithm for identical parallel machines~\cite{BBEM12} is a special case of \pf.

\section{Superadditive \psp} \label{sec:superadditive}

In this section, we consider an $\alpha$-superadditive \psp instance for some $\alpha \ge 1$. That means that for an arbitrary partition of the processing requirements, the sum of the optimal objective values of the subinstances is at most $\alpha$ times the optimal objective value of the whole instance; cf.~Definition~\ref{def:superadditive-psp}. We first present the analysis framework and prove that \pf has a competitive ratio of at most $2\alpha$ for $\alpha$-superadditive \psp (Theorem~\ref{thm:superadditive}). Subsequently, we apply this results and analysis framework to several machine scheduling problems by proving bounds on the superadditivity.

\subsection{Framework}\label{sec:framework}

The proof of Theorem~\ref{thm:superadditive} consists of three steps: first we decompose the instance into structured subinstances
according to the \pf schedule. 
Second we show competitiveness bounds for the resulting structured subinstances, and third we combine the obtained bounds using the $\alpha$-superadditivity.

\subsubsection{Decomposing the \pf Schedule} \label{subsubsec:splitting}
Let $C_j$, $j \in [n]$, denote the completion times of the jobs in the \pf schedule, and let $E_1 < \ldots < E_{L+1}$ be the times when jobs arrive or complete in the \pf schedule.
We assume w.l.o.g.\ that~$E_1 = 0$. For every $1 \leq \ell \leq L$ and job $j$ we denote by $p_j^{(\ell)}$ the amount of processing which $j$ receives during~$[E_{\ell},E_{\ell+1}]$. Let $\alg(p)$ be the objective value of the schedule constructed by \pf,
and let $\alg_{0}(p^{(\ell)})$ be the objective value of the PF schedule applied to jobs with processing requirements~$p^{(\ell)} = (p_1^{(\ell)},\ldots,p_n^{(\ell)})$ available at time~$0$. Note that $p = \sum_{\ell=1}^{L} p^{(\ell)}$, because \pf computes a feasible schedule. We decompose \pf's objective value according to the following lemma.

\begin{lemma} \label{lemma:splitting}
 It holds that 
 \[\alg(p) = \sum_{\ell=1}^{L} \biggl(\alg_{0}(p^{(\ell)}) + (E_{\ell+1} - E_\ell) \mathrlap{{}\cdot{}} \sum_{j \in U(E_\ell) \setminus J(E_\ell)} w_j\biggr)\ .\]
\end{lemma}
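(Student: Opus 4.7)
The plan is to rewrite $\alg(p)$ as an integral of the unfinished weight, split the unfinished jobs at each event time into available and not-yet-released, and identify the available-job contribution with $\alg_0(p^{(\ell)})$ using the memoryless nature of \pf.

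First, since all completion times in the \pf schedule lie in $\{E_1,\dotsc,E_{L+1}\}$ and the set $U(t)$ is constant on each interval $[E_\ell,E_{\ell+1})$, I would write
\[
\alg(p) = \sum_{j=1}^n w_j C_j = \int_0^\infty W(t)\,\mathrm dt = \sum_{\ell=1}^L (E_{\ell+1}-E_\ell)\,W(E_\ell).
\]
Splitting $U(E_\ell) = J(E_\ell) \mathbin{\dot\cup} (U(E_\ell)\setminus J(E_\ell))$ partitions $W(E_\ell)$ into a released and a not-yet-released part. The contribution of the second part is already the not-yet-released sum on the right-hand side of the lemma, so everything reduces to showing
\[
\alg_0(p^{(\ell)}) = (E_{\ell+1}-E_\ell) \sum_{j\in J(E_\ell)} w_j \qquad\text{for every } \ell \in [L].
\]

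Second, to establish this per-interval identity I would analyze \pf on the sub-instance~$p^{(\ell)}$ where all $n$ jobs are released at time~$0$. Jobs outside~$J(E_\ell)$---those either finished before $E_\ell$ or not yet released at~$E_\ell$ in the original schedule---have $p_j^{(\ell)} = 0$ and hence complete at time~$0$ in the sub-instance, so the initial available set is exactly~$J(E_\ell)$. Because \pf's rates are obtained by solving $(\CP(J(E_\ell)))$, which depends only on the currently available job set and on the polytope, the initial rates in the sub-instance coincide with the original rates~$y_j(E_\ell)$. Since $(E_\ell,E_{\ell+1})$ is event-free in the original schedule,
\[
p_j^{(\ell)} = y_j(E_\ell)\,(E_{\ell+1}-E_\ell)\qquad \text{for every } j \in J(E_\ell),
\]
so every job of $J(E_\ell)$ completes simultaneously at time $E_{\ell+1}-E_\ell$ in the sub-instance. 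In particular, no event occurs earlier there, the rates remain constant throughout, and the displayed identity follows.

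The conceptual core is the memoryless/restart property of \pf exploited in the second step: truncating the schedule at an event time and restarting with the residual processing requirements reproduces the continuation. This is essentially immediate from the definition of \pf via $(\CP)$, but I expect it to be the only point worth verifying carefully---in particular to make sure the zero-processing bookkeeping (already-completed and not-yet-released jobs entering the sub-instance with requirement~$0$) is consistent, given that \pf's convex program formally assumes positive processing requirements. Once this is settled, combining the per-$\ell$ identity with the first display yields the lemma.
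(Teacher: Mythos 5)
Your proof matches the paper's argument step for step, differing only in the order of presentation (the paper first analyzes the sub-instance $I^{(\ell)}$ via Observation~\ref{obs:positive-rates} and then telescopes $\sum_j w_j C_j$; you do the telescoping first). The zero-processing bookkeeping you flag as the only point worth verifying is precisely what Observation~\ref{obs:positive-rates} settles: every $j\in J(E_\ell)$ receives a strictly positive rate in $(\CP(J(E_\ell)))$, hence $p_j^{(\ell)}>0$, so $J(E_\ell)$ is exactly the set of jobs that do not complete at time~$0$ in the sub-instance and the memoryless rate-matching argument goes through.
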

\begin{proof}
 Observation~\ref{obs:positive-rates} implies that a job $j$ is available during an interval~$(E_\ell, E_{\ell+1}]$ if and only if $p_j^{(\ell)} > 0$.
 Hence, the jobs available after time~$0$ in the subinstance  with processing requirements $p^{(\ell)}$ and uniform release dates, denoted by $I^{(\ell)}$, are exactly the jobs available in the original instance during $(E_\ell, E_{\ell+1}]$. 
 Since \pf takes an instantaneous resource allocation view, the composed processing rates
 depend only on the weights of the currently available jobs, i.e., the rate which job $j$ receives in the original instance during $(E_\ell,E_{\ell+1}]$ is equal to the rate which $j$ receives in $I^{(\ell)}$. Moreover, the rates are constant during the considered interval, so that for every available job $j$ it only happens at the end of the interval that its total processing rate received within the interval reaches $p_j^{(\ell)}$. For $I^{(\ell)}$ this means that all jobs with a positive processing requirement finish exactly at time~$E_{\ell+1} - E_\ell$ and all jobs with a processing requirement equal to zero finish at time $0$. Thus, $\alg_{0}(p^{(\ell)}) = \sum_{j \in J(E_\ell)} w_j \cdot (E_{\ell+1} - E_\ell)$. Therefore, we conclude with
 \begin{align*}
  \alg(p) &= \sum_{j=1}^n w_j C_j = \sum_{j=1}^n w_j \sum_{\ell: E_\ell < C_j} (E_{\ell+1} - E_\ell) = \sum_{\ell=1}^{L} (E_{\ell+1} - E_\ell) \sum_{j \in U(E_\ell)} w_j \\
  &= \sum_{\ell=1}^{L} \biggl(\alg_{0}(p^{(\ell)}) + (E_{\ell+1} - E_\ell) \cdot \sum_{j \in U(E_\ell) \setminus J(E_\ell)} w_j\biggr)\ .
 \end{align*}
\end{proof}

\subsubsection{Analyzing Structured Instances}

We now show that \pf is 2-competitive against $(\DLP(1))$ for \psp instances where all jobs are available at time $0$ and \pf completes all jobs with positive processing requirements at the same time. That is, the rate~$y_j(0)$ allocated to each job~$j$ does not change and is proportional to its processing requirement~$p_j$. Any job with a processing requirement equal to $0$ completes at time $0$ and, thus, can w.l.o.g.\ be removed from the instance.

We fix a \psp instance of jobs~$J$ with these properties and assume that in the optimal schedule all completion times are integral. Let $C$ be the common completion time in the \pf schedule and also assume by scaling that $C$ is an integer. Since the rates and Lagrange multipliers do not change, we write $y_j = y_{j}(t)$, $j \in J$, and $W = W(t)$ and $\lagrangevar_d$ for the Lagrange multipliers corresponding to $(\CP(J(t)))$ for all $t \in [0,C)$.
We perform a dual fitting argument using the following assignment:
  \begin{itemize}
    \item $\dualSa_j \coloneqq w_j C$ for every job $j \in J$,
    \item $\dualSb_{dt} \coloneqq \bigl(1 - \frac{1}{C}(t+\frac{1}{2})\bigr) \cdot \lagrangevar_{d}$ for every $d \in [D]$ and time $t \in \{0,\ldots,C-1\}$.
  \end{itemize}

\begin{lemma}\label{lemma:common-c-dual-objective}
    It holds that $\frac{1}{2} \cdot \alg = \sum_{j \in J} \dualSa_j - \sum_{t = 0}^{C-1} \sum_{d=1}^D \dualSb_{dt} $.
\end{lemma}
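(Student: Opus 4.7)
The plan is to evaluate the two sums directly, using (i) the structured-instance property that $C_j = C$ for all $j \in J$ and (ii) Lemma~\ref{lemma:kkt-multiplier-bound}, which converts sums of Lagrange multipliers into total job weight.

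First, I would observe that since every job completes at time $C$, we have $\alg = \sum_{j \in J} w_j C_j = C \cdot \sum_{j \in J} w_j = C \cdot W$. Consequently, $\sum_{j \in J} \dualSa_j = \sum_{j \in J} w_j C = \alg$. This disposes of the first term on the right-hand side.

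Next, I would compute the double sum over the dual packing variables. Interchanging the order of summation and pulling $\lagrangevar_d$ out of the sum over $t$ gives
\[
  \sum_{t=0}^{C-1} \sum_{d=1}^{D} \dualSb_{dt} = \Biggl(\sum_{d=1}^{D} \lagrangevar_d\Biggr) \cdot \sum_{t=0}^{C-1} \Bigl(1 - \tfrac{1}{C}(t + \tfrac{1}{2})\Bigr).
\]
By Lemma~\ref{lemma:kkt-multiplier-bound} applied to $(\CP(J))$, the first factor equals $W$. For the second factor, the arithmetic sum $\sum_{t=0}^{C-1}(t + \tfrac{1}{2}) = \tfrac{C^2}{2}$ yields $C - \tfrac{1}{C} \cdot \tfrac{C^2}{2} = \tfrac{C}{2}$. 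Hence the double sum equals $\tfrac{C}{2} \cdot W = \tfrac{1}{2}\alg$.

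Combining the two computations,
\[
  \sum_{j \in J} \dualSa_j - \sum_{t=0}^{C-1}\sum_{d=1}^{D}\dualSb_{dt} = \alg - \tfrac{1}{2}\alg = \tfrac{1}{2}\alg,
\]
which is the claim. There is essentially no obstacle here: the computation is routine once we exploit the two structural facts (common completion time and the KKT identity from Lemma~\ref{lemma:kkt-multiplier-bound}). The only point worth being careful about is the correct evaluation of the arithmetic sum, so that the $\tfrac{1}{2}$ coefficient coming from the time-indexed LP objective matches exactly with the $\tfrac{C}{2}$ obtained from the linear decay of $\dualSb_{dt}$ in $t$.
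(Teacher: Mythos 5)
Your proof is correct and follows essentially the same approach as the paper: both compute $\sum_j \dualSa_j = \alg$, use Lemma~\ref{lemma:kkt-multiplier-bound} to replace $\sum_d \lagrangevar_d$ by $W$, and then evaluate the arithmetic sum $\sum_{t=0}^{C-1}(t+\tfrac12) = \tfrac{C^2}{2}$ to obtain $\tfrac12 WC = \tfrac12\alg$. The only cosmetic difference is that you explicitly interchange the order of summation and record $\alg = CW$, whereas the paper does these steps implicitly.
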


\begin{proof}
  Clearly, $\sum_{j \in J} \dualSa_j = \alg$. Further, at every time $t$ it holds that $\sum_{d=1}^D \dualSb_{dt} = (1 - \frac{1}{C}(t+\frac{1}{2})) \cdot W$ due to Lemma~\ref{lemma:kkt-multiplier-bound}. Thus, we conclude the proof with
  \[
    \sum_{t = 0}^{C-1} \sum_{d=1}^D \dualSb_{dt} = W \sum_{t=0}^{C-1} \biggl(1 - \frac{1}{C}\Bigl(t+\frac{1}{2}\Bigr)\biggr) = \frac{W}{C} \sum_{t=0}^{C-1} \Bigl(t+\frac{1}{2}\Bigr) = \frac{1}{2} \cdot W \cdot C = \frac{1}{2} \cdot \alg \ . 
  \]
  This concludes the proof of the lemma.
\end{proof}

\begin{lemma}\label{lemma:common-c-dual-feasibility}
 The solution $(\dualSa_j)_j$ and $(\dualSb_{dt})_{d,t}$ is feasible for $(\mathrm{DLP}(1))$.
\end{lemma}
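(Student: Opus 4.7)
The plan is to verify the dual constraint pointwise by combining the structural equality $p_j = C \cdot y_j$ (which holds because every job has constant rate $y_j$ throughout $[0,C)$ in this special instance) with the KKT condition \eqref{psp-kkt1} for $(\CP(J))$. First I would quickly dispose of non-negativity: $\lagrangevar_d \ge 0$ by \eqref{psp-kkt3}, and for $t \in \{0, \ldots, C-1\}$ the scalar $1 - \frac{1}{C}(t+\tfrac{1}{2})$ lies in $[\tfrac{1}{2C}, 1 - \tfrac{1}{2C}] > 0$, so $\dualSb_{dt} \ge 0$. The primal variables $\dualSa_j = w_j C$ are obviously non-negative.

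For the main constraint, fix a job $j \in J$ and first consider $t \in \{0, \ldots, C-1\}$. Using $p_j = C y_j$, I rewrite the left-hand side as
\[
\frac{\dualSa_j}{p_j} - \frac{w_j}{p_j}\Bigl(t + \tfrac{1}{2}\Bigr) = \frac{w_j}{y_j} - \frac{w_j}{C\,y_j}\Bigl(t+\tfrac{1}{2}\Bigr) = \frac{w_j}{y_j}\biggl(1 - \frac{1}{C}\Bigl(t+\tfrac{1}{2}\Bigr)\biggr).
\]
Applying \eqref{psp-kkt1} to replace $w_j / y_j$ by $\sum_{d=1}^D b_{dj} \lagrangevar_d$ and then recognising the definition of $\dualSb_{dt}$, this equals exactly $\sum_{d=1}^D b_{dj} \dualSb_{dt}$, so the dual constraint holds with equality for every $t < C$.

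It remains to deal with $t \ge C$, where $\dualSb_{dt} = 0$ by convention (the variables are only defined for $t < C$). Here the RHS of the dual constraint is $0$, so I need the LHS to be non-positive. Using $p_j = C y_j$ once more, the LHS becomes $\frac{w_j}{y_j}(1 - \frac{1}{C}(t+\tfrac{1}{2}))$; since $t \ge C$ and $C$ is integral, $\frac{1}{C}(t+\tfrac{1}{2}) \ge 1 + \frac{1}{2C} > 1$, so the factor is negative and the constraint is satisfied.

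There is no serious obstacle here; the one point that requires care is the two-case split on $t$ (and the implicit value $\dualSb_{dt} = 0$ for $t \ge C$), which is the only reason the constraint is not literally an equality everywhere. The crux of the argument is the clean cancellation that the structured property $p_j = C y_j$ enables together with \eqref{psp-kkt1}; this is precisely where the factor $2$ of \cref{lemma:common-c-dual-objective} becomes tight in the dual fitting.
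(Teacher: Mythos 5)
Your proof is correct and follows essentially the same route as the paper: use $p_j = C\,y_j$ together with the KKT stationarity condition \eqref{psp-kkt1} to show that the dual constraint holds with equality for $t < C$, and observe non-negativity directly from \eqref{psp-kkt3}. The one small thing you do beyond the paper's writeup is to spell out the trivial case $t \ge C$ (where $\dualSb_{dt}=0$ and the left-hand side is negative); the paper leaves this implicit.
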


\begin{proof}
 The solution is clearly non-negative. We now verify the dual constraint.
 Fix a job $j$ and a time $t \in \{0,\ldots,C-1\}$. Then, 
 \begin{align*}
     \frac{ \dualSa_j }{p_j} - \Bigl(t + \frac{1}{2}\Bigr) \cdot \frac{w_j}{p_j}
     &=  \biggl(C - \Bigl(t+ \frac{1}{2} \Bigr) \biggr) \cdot \frac{w_j}{p_j} 
     =  \biggl(C - \Bigl(t+ \frac{1}{2} \Bigr) \biggr) \cdot \frac{w_j}{y_{j}}  \cdot \frac{y_{j}}{p_j} \\  
     &= \biggl(C - \Bigl(t+ \frac{1}{2} \Bigr) \biggr) \cdot \bigg(\sum_{d=1}^D b_{dj} \lagrangevar_{d} \bigg) \cdot \frac{y_{j}}{p_j}   
     =    \biggl(1 - \frac{1}{C}\Bigl(t+ \frac{1}{2} \Bigr) \biggr) \cdot \bigg( \sum_{d=1}^D b_{dj} \lagrangevar_{d} \bigg) \\
     &= \sum_{d=1}^D b_{dj} \cdot \biggl(1 - \frac{1}{C}\Bigl(t+ \frac{1}{2} \Bigr) \biggr) \cdot \lagrangevar_{d} =  \sum_{d=1}^D b_{dj} \dualSb_{dt} \ .
 \end{align*}
 The third equality uses~\eqref{psp-kkt1}. 
 To see the fourth equality, note that $C \cdot y_j = p_j$.
 This shows that the dual assignment satisfies the constraint of $(\mathrm{DLP}(1))$ with equality.
\end{proof}

\begin{theorem}\label{theorem:pf-common-c}
 \pf has a competitive ratio equal to $2$ for minimizing the total weighted completion time for \psp with uniform release dates whenever it completes all jobs at the same time.
\end{theorem}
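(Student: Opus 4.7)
The plan is straightforward: the two preceding lemmas do essentially all the work, so the theorem follows by combining them via weak LP duality and then exhibiting a matching lower bound instance. Since the paper uses the phrasing ``equal to $2$'' (as opposed to ``at most $4$'' in Theorem~\ref{thm:main-monotone}), both directions are required.

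For the upper bound, I would observe that $(\LP(1))$ is a valid relaxation of \psp with $\kappa=1$, so its optimal objective value is at most $\opt$. Lemma~\ref{lemma:common-c-dual-feasibility} states that $(\dualSa,\dualSb)$ is feasible for the dual $(\DLP(1))$, so weak LP duality yields
\[
\opt \;\ge\; \sum_{j\in J}\dualSa_j \;-\; \sum_{t=0}^{C-1}\sum_{d=1}^D \dualSb_{dt} \;=\; \tfrac{1}{2}\,\alg,
\]
where the equality is Lemma~\ref{lemma:common-c-dual-objective}. Rearranging gives $\alg \le 2\cdot\opt$, which is the upper bound.

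For the matching lower bound, I would appeal to the classical single-machine instance of Motwani, Phillips, and Torng~\cite{MPT94}: take $n$ unit-weight jobs with identical processing requirement, all released at time~$0$. Here the rate polytope is the simplex $\{y\ge 0 : \sum_j y_j \le 1\}$, \pf assigns rate $1/n$ to every job, and all $n$ jobs complete simultaneously, so both hypotheses of the theorem (uniform release dates and common completion time) are satisfied. After scaling, $\alg = n^2$ while the offline optimum processes the jobs sequentially at cost $n(n+1)/2$, giving a ratio $2n/(n+1)$ which tends to $2$ from below as $n \to \infty$. Hence no competitive ratio strictly smaller than~$2$ is achievable.

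There is no real obstacle at this stage, since the technical substance has already been established in Lemmas~\ref{lemma:common-c-dual-objective} and~\ref{lemma:common-c-dual-feasibility}; the only subtlety worth noting is that ``competitive ratio equal to $2$'' refers to the supremum over instances, which is attained only in the limit. What makes the dual fitting tight here (and ultimately drives the factor~$2$) is the very structural assumption of a common completion time~$C$: it forces $C = p_j / y_j$ for every job, which is exactly what allows Lemma~\ref{lemma:common-c-dual-feasibility} to meet the dual constraint with equality rather than slack.
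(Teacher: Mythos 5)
Your proposal matches the paper's proof exactly: the upper bound is obtained by combining Lemma~\ref{lemma:common-c-dual-objective} and Lemma~\ref{lemma:common-c-dual-feasibility} via weak LP duality, and the lower bound is the observation that the Motwani--Phillips--Torng single-machine instance (which the paper cites for this purpose) has uniform release dates and a common completion time under \pf, so it falls into the structured class. The only difference is that you spell out the MPT instance explicitly where the paper merely cites it.
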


\begin{proof}
Weak duality and Lemma~\ref{lemma:common-c-dual-objective} and~\ref{lemma:common-c-dual-feasibility} imply
\begin{align*}
    \opt \geq  \sum_{j \in J} \dualSa_j - \sum_{t = 0}^{C-1} \sum_{d=1}^D \dualSb_{dt} = \frac{1}{2} \cdot \alg \ ,
\end{align*}
which proves the claimed upper bound on the competitive ratio. The lower bound follows by noting that in the deterministic lower bound instance
all jobs also complete at the same time~\cite{MPT94}.
\end{proof}
  
Lemma~\ref{lemma:common-c-dual-objective} and~\ref{lemma:common-c-dual-feasibility} provide the additional insight that \pf's schedule $(y_{j}(t))_{j,t}$ for structured instances is an optimal solution to $(\LP(1))$. This follows from strong duality because $(y_{j}(t))_{j,t}$ is feasible for $(\LP(1))$ and its objective value is also equal to $\frac{1}{2}\alg$ as $\sum_{t=0}^{C-1} (t+\frac{1}{2}) y_j(t) / p_j = \frac{1}{2}C$ for all $j$.

\subsubsection{Combining Optimal Schedules via Superadditivity}

We consider the partition of the processing requirements~$p = \sum_{\ell=1}^L p^{(\ell)}$ introduced in \cref{subsubsec:splitting}. As argued in the proof of Lemma~\ref{lemma:splitting}, in the instances with processing requirements $p^{(\ell)}$ and uniform release dates, all jobs finish at the same time. Let $\opt_{0}(p^{(\ell)})$ be the optimal objective value for these instances. By combining Lemma~\ref{lemma:splitting} and~\ref{theorem:pf-common-c}, we obtain
\[\alg(p) \leq \sum_{\ell=1}^{L} \biggl(2 \cdot \opt_{0}(p^{(\ell)}) + (E_{\ell+1} - E_\ell) \mathrlap{{}\cdot{}} \sum_{j \in U(E_\ell) \setminus J(E_\ell)} w_j\biggr) \ . \]
The definition of $\alpha$-superadditivity implies that this is at most
\[
 2\alpha \cdot  \opt_{0}(p) + \sum_{j=1}^n w_j \sum_{\ell: E_\ell < r_j} (E_{\ell+1} - E_\ell) 
 = 2\alpha \cdot \opt_{0}(p) + \sum_{j=1}^n w_j r_j \le (2\alpha+1) \cdot \opt(p) \ ,
\]
where the final inequality uses the trivial bound $r_j \le C_j^{\opt}$ for all $j \in [n]$ and that the problem with uniform release dates is a relaxation of the general \psp. This proves the bound for the general case. Note that for uniform release dates, we can assume that $\sum_{j=1}^n w_j r_j = 0$, so that we can bound the objective of $\pf$ by $2\alpha \cdot \opt$. This concludes the proof of Theorem~\ref{thm:superadditive}. 

\subsection{Applications in Machine Scheduling} \label{subsec:superadditive-applications}

In this \lcnamecref{subsec:superadditive-applications}, we apply the analysis framework from \cref{sec:framework} 
to several important machine scheduling environments.

\subsubsection{Unrelated Machines}
In order to show superadditivity 
for preemptive unrelated machine scheduling, denoted as \abc{R}{\pmtn}{\sum w_j C_j}, we fall back on the non-preemptive (and hence non-migratory) variant. In this problem, denoted as \abc{R}{}{\sum w_j C_j}, every job must be processed uninterruptedly on a complete machine, i.e., it processes at a rate equal to the speed of the machine for this job.
For fixed weights and speeds, and for processing requirements~$p$ let $\optnp(p)$ denote the optimal objective value of a non-preemptive schedule for 
release dates~$0$.
Note that this non-preemptive problem is not a special case of \psp. However, it is straightforward to extend the definition of $\alpha$-superadditivity~to~it. 

\begin{lemma}\label{lemma:non-preemptive-superadditive}
  The problem \abc{R}{}{\sum w_j C_j} is 1-superadditive.
\end{lemma}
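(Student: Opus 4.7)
The plan is to start from an arbitrary optimal non-preemptive schedule~$\sigma^\ast$ for the full instance with processing requirements~$p$ and release dates~$0$, and use it to construct, for every $\ell \in \{1,\ldots,L\}$, a feasible non-preemptive schedule~$\sigma^{(\ell)}$ for the subinstance with processing requirements~$p^{(\ell)}$, in such a way that the per-job completion times in the $\sigma^{(\ell)}$ add back up to those of $\sigma^\ast$. Since in \abc{R}{}{\sum w_j C_j} there are no release dates, I may assume without loss that $\sigma^\ast$ contains no idle time, so on each machine~$i$ the jobs assigned to it are processed in a definite sequence $j_1^i \prec j_2^i \prec \cdots \prec j_{k_i}^i$ with no gaps. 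The key idea is that keeping this \emph{same machine assignment and same per-machine ordering}, but contracting the processing block of every job~$j$ on its machine~$i$ from duration $p_j/s_{ij}$ to $p_j^{(\ell)}/s_{ij}$, yields a feasible non-preemptive schedule~$\sigma^{(\ell)}$ for the subinstance $p^{(\ell)}$.

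In the second step I compute the per-job completion times. Since $\sigma^\ast$ and $\sigma^{(\ell)}$ are both non-preemptive and idle-free, the completion time of the $r$-th job on machine~$i$ is simply the prefix sum of the processing times on that machine:
\[
  C_{j_r^i}^\ast \;=\; \sum_{s=1}^{r} \frac{p_{j_s^i}}{s_{i,j_s^i}}
  \qquad\text{and}\qquad
  C_{j_r^i}^{(\ell)} \;=\; \sum_{s=1}^{r} \frac{p_{j_s^i}^{(\ell)}}{s_{i,j_s^i}}\,.
\]
Using $p_j = \sum_{\ell=1}^{L} p_j^{(\ell)}$ and swapping the order of summation immediately gives $\sum_{\ell=1}^{L} C_{j_r^i}^{(\ell)} = C_{j_r^i}^\ast$ for every job. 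Multiplying by the weights and summing over all jobs then yields
\[
  \sum_{\ell=1}^{L} \sum_{j=1}^{n} w_j C_j^{(\ell)} \;=\; \sum_{j=1}^{n} w_j C_j^\ast \;=\; \optnp(p)\,.
\]
Feasibility of each $\sigma^{(\ell)}$ for the subinstance with requirements~$p^{(\ell)}$ implies $\optnp(p^{(\ell)}) \le \sum_j w_j C_j^{(\ell)}$, and combining the last two displays gives $\sum_{\ell=1}^{L} \optnp(p^{(\ell)}) \le \optnp(p)$, which is exactly $1$-superadditivity.

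I do not expect any serious obstacle: the argument is essentially algebraic and relies only on the fact that in non-preemptive, idle-free schedules on unrelated machines the completion times are prefix sums along the per-machine job sequence, and such prefix sums are trivially additive under a partition of~$p$. Jobs with $p_j^{(\ell)} = 0$ contract to zero-length blocks in $\sigma^{(\ell)}$ and do not disturb the computation. The only point that deserves a remark is that assuming $\sigma^\ast$ to be idle-free is harmless, since with uniform release dates shifting jobs earlier never increases $\sum_j w_j C_j$; everything else is direct.
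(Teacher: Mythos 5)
Your argument is correct and is essentially the paper's proof presented from a slightly different angle: the paper fixes an optimal assignment via position-indexed variables $x_{ijk}$ and exploits the resulting linearity of $\sum_j w_j C_j$ in $p$, whereas you fix the same per-machine ordering and obtain the identical linearity through completion-time prefix sums. Both are the same decomposition --- pin down the ordering that is optimal for $p$, split $p$, and use each resulting feasible schedule as an upper bound on $\optnp(p^{(\ell)})$.
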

\begin{proof}
  We can model any schedule for the non-preemptive problem using binary variables~$x_{ijk}$ which indicate whether job $j$ is being scheduled in the $k$th last position on machine $i$. The objective value of this schedule for processing requirements $p_1,\ldots,p_n$ is then equal to
  \[
      \sum_{i = 1}^m \sum_{k=1}^n \sum_{j=1}^n \frac{p_j}{s_{ij}}  x_{ijk}  \sum_{j'=1}^n w_{j'} \sum_{k'=1}^k x_{ij'k'} \ ,
  \]
  subject to the constraints that
  $
      \sum_{i=1}^m \sum_{k=1}^n x_{ijk} = 1    
  $
  for every job $j \in [n]$ and
  $
     \sum_{j=1}^n x_{ijk} \leq 1    
  $
  for every position $k \in [n]$ and machine $i \in [m]$.

  We now prove that the problem is 1-superadditive.
  Let $(x_{ijk})_{i,j,k}$ model an optimal schedule for processing requirements $p$, and let $p = \sum_{\ell=1}^L p^{(\ell)}$ be an arbitrary partition of the processing requirements. Thus,
  \begin{align*}
    \optnp(p) &= 
    \sum_{i = 1}^m \sum_{k=1}^n \sum_{j=1}^n  \frac{p_j}{s_{ij}}  x_{ijk}  \sum_{j'=1}^n w_{j'} \sum_{k'=1}^k x_{ij'k'} \\
    &= \sum_{\ell=1}^L \bigg( \sum_{i = 1}^m \sum_{k=1}^n \sum_{j=1}^n  \frac{p^{(\ell)}_j}{s_{ij}}  x_{ijk}  \sum_{j'=1}^n w_{j'} \sum_{k'=1}^k x_{ij'k'} \bigg) \geq \sum_{\ell=1}^L \optnp(p^{(\ell)}) \ ,
  \end{align*}
  where the inequality holds because for every $1 \leq \ell \leq L$, $(x_{ijk})_{i,j,k}$ is a feasible solution to the instance with processing requirements $p^{(\ell)}$.
\end{proof}

\begin{lemma}\label{lem:1.81-superadditive}
  The problem \abc{R}{\pmtn}{\sum w_j C_j} is 1.81-superadditive.
\end{lemma}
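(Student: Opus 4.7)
The plan is to reduce the preemptive (migratory) claim to the non-preemptive case that was just settled in the preceding lemma, paying the power of preemption only once at the very end. Concretely, I would chain three inequalities. First, since any preemptive schedule with non-uniform completion times can only be better than a non-preemptive one, $\opt_0(p^{(\ell)}) \le \optnp(p^{(\ell)})$ for every block $\ell$ of the partition $p = \sum_{\ell=1}^L p^{(\ell)}$. Second, apply \cref{lemma:non-preemptive-superadditive} to the non-preemptive problem on the same instance to obtain $\sum_{\ell=1}^L \optnp(p^{(\ell)}) \le \optnp(p)$. Third, invoke the power-of-preemption bound of \textcite{Sitters17}, which states that for \abc{R}{}{\sum w_j C_j} the optimal non-preemptive objective value exceeds the optimal preemptive and migratory objective value by at most a factor $1.81$, giving $\optnp(p) \le 1.81 \cdot \opt_0(p)$.

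Concatenating these three steps yields
\[
\sum_{\ell=1}^L \opt_0(p^{(\ell)}) \;\le\; \sum_{\ell=1}^L \optnp(p^{(\ell)}) \;\le\; \optnp(p) \;\le\; 1.81 \cdot \opt_0(p),
\]
which is exactly the asserted $1.81$-superadditivity.

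There is essentially no obstacle beyond citing Sitters' result correctly and verifying that the weights, speeds, and machines are fixed across the partition so that both \cref{lemma:non-preemptive-superadditive} and the power-of-preemption bound apply to the same family of instances parameterized only by the processing requirements. A minor point worth flagging is that the first inequality uses that $\opt_0$ in \cref{def:superadditive-psp} refers to the uniform-release-date optimum (which here is the preemptive migratory optimum, since that is the PSP under consideration), so dropping the preemption restriction only increases the objective block-by-block. Everything else is purely a chained inequality, so the proof is essentially a one-line combination.
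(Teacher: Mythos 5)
Your proof is correct and matches the paper's argument exactly: both chain the relaxation inequality $\opt_0(p^{(\ell)}) \le \optnp(p^{(\ell)})$, the $1$-superadditivity of the non-preemptive problem from \cref{lemma:non-preemptive-superadditive}, and Sitters' power-of-preemption bound $\optnp(p) \le 1.81 \cdot \opt_0(p)$. No differences worth noting.
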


\begin{proof} 
Clearly, the preemptive problem is a relaxation of the non-preemptive problem, i.e., $\opt_{0}(p^{(\ell)}) \le \optnp(p^{(\ell)})$ for all $\ell \in [L]$.
Further, we apply a known bound on the power of preemption in unrelated machine scheduling. \Textcite[Corollary~3]{Sitters17} proved that $\optnp(p) \le 1.81 \cdot \opt_{0}(p)$ for any instance~$p$.
By combining these bounds with Lemma~\ref{lemma:non-preemptive-superadditive}, we obtain
\[
 \sum_{\ell=1}^L \opt_{0}(p^{(\ell)}) \le \sum_{\ell=1}^L \optnp(p^{(\ell)}) \le \optnp(p) \le 1.81 \cdot \opt_{0}(p)\ , 
\]
showing that the preemptive problem is 1.81-superadditive. 
\end{proof}

Theorem~\ref{thm:superadditive} implies that \pf is $4.62$-competitive for the online problem with jobs arriving at their release dates
and $3.62$-competitive for the problem without release dates, proving Theorem~\ref{theorem:unrelated}.

As discussed in \Cref{sec:preliminaries}, we loose a factor of $1+\varepsilon$ in the competitive ratio when requiring polynomial running time for \pf. However, the precise constant proven by \textcite{Sitters17} is actually 
the root of $8x^3-11x^2-4x-4$ which is at most $1.806$. Therefore, the stated bounds still hold.

\subsubsection{Restricted Assignment}
Sitters~\cite{Sitters05} showed that for any instance of \abc{R}{\pmtn, s_{ij} \in \{0,1\}}{\sum C_j} there exists an optimal solution which is non-preemptive. Therefore, Lemma~\ref{lemma:non-preemptive-superadditive} implies that the problem is a 1-superadditive \psp. By Theorem~\ref{thm:superadditive}, the \pf algorithm is thus $2$-competitive for uniform release dates and $3$-competitive in general. 
As discussed in \Cref{subsec:monotone-applications}, \pf can be executed in strongly polynomial time in this setting.
This proves the part of Theorem~\ref{theorem:2-competitive} on restricted assignment.

\subsubsection{Related Machines}
We now consider the special case of \psp where all jobs are available at the beginning and have unit weight, and the speed of a machine~$i$ when processing a job~$j$ is independent of the job, i.e., $s_{ij} = s_i$ for all $j \in [n]$. This is noted as \abc{Q}{\pmtn}{\sum C_j} in the 3-field notation. We assume w.l.o.g.\ that $s_1 \geq \ldots \geq s_m$, and $m \ge n$. The latter can be ensured by adding $n-m$ speed-zero machines.

\begin{lemma}
  The problem \abc{Q}{\pmtn}{\sum C_j} is 1-superadditive.
\end{lemma}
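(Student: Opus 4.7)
The plan is to express $\opt_0(p)$ as the minimum of linear functions of $p$ and then deduce $1$-superadditivity from the elementary fact that a sum of minima is bounded by the minimum of a sum. For the structural step I would invoke the classical result of \textcite{gonzalez1977optimal} that the SRPT-LF rule (shortest remaining processing time on fastest machine) produces an optimal preemptive schedule for \abc{Q}{\pmtn}{\sum C_j}. With processing times sorted $p_1 \le p_2 \le \ldots \le p_n$ and speeds $s_1 \ge s_2 \ge \ldots \ge s_m$ (padded with zero-speed machines so $m \ge n$), the $k$-th smallest job is the $k$-th to complete in this schedule, and during the interval $[\tau_{k-1}, \tau_k]$ the remaining $n-k+1$ jobs occupy machines $M_1, \ldots, M_{n-k+1}$. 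Setting $u_k := \tau_k - \tau_{k-1}$, the defining equations $p_k = \sum_{l=1}^k s_{k-l+1}\, u_l$ form a lower-triangular system with positive diagonal $s_1$, so $u$ depends linearly on $p$. This yields $\opt_0(p) = \sum_{k=1}^n (n-k+1)\, u_k = \sum_{k=1}^n d_k\, p_{(k)}$ for non-negative coefficients $d = (d_1, \ldots, d_n)$ that depend only on $s$ and $n$.

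The central structural step---and the main obstacle---is to establish $d_1 \ge d_2 \ge \ldots \ge d_n$. Solving the dual triangular system $\sum_{l=k}^n s_{l-k+1}\, d_l = n-k+1$ from $d_n = 1/s_1$ upward gives a recurrence which I would analyze to check that the consecutive differences $d_k - d_{k+1}$ remain non-negative; equivalently, an exchange argument on the SRPT-LF schedule shows that perturbing $p_{(k)}$ by an infinitesimal $\epsilon$ delays every one of $\tau_k, \tau_{k+1}, \ldots, \tau_n$, whereas perturbing $p_{(k+1)}$ by the same amount only affects $\tau_{k+1}, \ldots, \tau_n$, so the former raises $\opt_0$ by strictly more. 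Given $d$ non-increasing, the rearrangement inequality then yields
\[
 \opt_0(p) = \sum_{k=1}^n d_k\, p_{(k)} = \min_{\pi \in S_n} \sum_{j=1}^n d_{\pi(j)}\, p_j,
\]
since pairing the descending $d$ with the ascending $p$ minimizes this bilinear sum.

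From this min-of-linear representation, $1$-superadditivity falls out: for any partition $p = \sum_{\ell=1}^L p^{(\ell)}$,
\[
 \sum_{\ell=1}^L \opt_0\bigl(p^{(\ell)}\bigr) = \sum_{\ell=1}^L \min_\pi \sum_{j=1}^n d_{\pi(j)}\, p_j^{(\ell)} \le \min_\pi \sum_{\ell=1}^L \sum_{j=1}^n d_{\pi(j)}\, p_j^{(\ell)} = \min_\pi \sum_{j=1}^n d_{\pi(j)}\, p_j = \opt_0(p),
\]
where the inequality uses that a sum of minima is bounded by a minimum of a sum, and the final two equalities use the linearity of $\sum_j d_{\pi(j)}\, p_j$ in $p$ together with the representation of $\opt_0(p)$ established above.
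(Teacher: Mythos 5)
Your outline mirrors the paper's argument almost exactly: both reduce the problem to showing that $\opt_0(p) = \sum_k d_k\, p_{(k)}$ for a non-increasing coefficient vector $d$ depending only on the sorted speeds, and then conclude $1$-superadditivity via the rearrangement inequality and the observation that a sum of minima of a common family of linear functions is at most the corresponding minimum of the sum. So the overall structure is sound, and you have correctly identified what the decisive step is.

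However, that decisive step---establishing $d_1 \ge d_2 \ge \cdots \ge d_n$---is precisely where your proposal stops short. The ``dual triangular system'' $\sum_{l \ge k} s_{l-k+1}\, d_l = n-k+1$, solved from $d_n$ upward, does not yield a tractable recurrence for $d_k - d_{k+1}$: the right-hand side shifts by $1$ and the Toeplitz band shifts simultaneously, so the differences tangle. The paper sidesteps this by changing variables to the difference sequence $\lambda_k := d_k - d_{k+1}$ (with $d_{n+1}=0$), which satisfies the cleaner shift-invariant system $\sum_{l \ge k} s_{l-k+1}\, \lambda_l = 1$ for every $k$, and then proves $\lambda_k \ge 0$ by a short backward induction that uses $s_{l-k+1} \le s_{l-k}$ (speeds sorted descending). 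That substitution is the essential trick you are missing. Your alternative---the infinitesimal exchange argument---has the right physical intuition, but the assertion ``perturbing $p_{(k)}$ delays $\tau_k,\dots,\tau_n$ while perturbing $p_{(k+1)}$ affects only $\tau_{k+1},\dots,\tau_n$, so the former raises $\opt_0$ by more'' is a non-sequitur as written: touching more completion times does not by itself imply a larger total delay. What would make it work is observing that, by the Toeplitz structure, the delay vector from perturbing $p_{(k)}$ is the delay vector from perturbing $p_{(k+1)}$ shifted by one index plus one extra term $\Delta\tau_n$, and then proving that every individual delay $\Delta\tau_j$ is non-negative; but that last non-negativity is again equivalent to $\lambda \ge 0$ and requires the same induction you would need algebraically. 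As it stands, the proposal states the core lemma and announces it as ``the main obstacle'' without surmounting it, so there is a genuine gap.
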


\begin{proof}
  We first characterize the optimal objective value in terms of processing requirements $p_1 \leq \ldots \leq p_n$. We use that the \pspt algorithm computes an optimal solution for this problem~\cite{gonzalez1977optimal,LLLRK84}. This algorithm runs at any time~$t$ the $k$ shortest unfinished jobs on the $k$ fastest machines, for any $1 \leq k \leq |J(t)|$.

  It is not hard to see that \pspt finishes the jobs in the order of their indices. Moreover, the resulting completion times~$C_j$ satisfy for every $1 \leq k \leq n$ that (using $C_0 \coloneq 0$)
  \begin{equation*}
    \sum_{j=1}^k s_{k-j+1} \cdot (C_j - C_{j-1}) = p_k \ ,
  \end{equation*}
  which yields by summation for every $1 \leq k \leq n$ that
  \begin{equation*}
   \sum_{j=1}^k s_{k-j+1} \cdot C_j = \sum_{j=1}^k p_j \ . \label{eq:pspt-structure}  
  \end{equation*}
  This equation can be written as
  \begin{equation}
   \begin{pmatrix} 
    s_1    &         &        & 0 \\
    s_2    & s_1                  \\
    \vdots &         & \ddots     \\
    s_n    & s_{n-1} & \ldots & s_1
   \end{pmatrix}
   \begin{pmatrix}
    C_1 \\ C_2 \\ \vdots \\ C_n
   \end{pmatrix}
   =
   \begin{pmatrix}
    1      &   &        & 0 \\
    1      & 1 &            \\
    \vdots &   & \ddots     \\
    1      & 1 & \ldots & 1
   \end{pmatrix}
   \begin{pmatrix}
    p_1 \\ p_2 \\ \vdots \\ p_n
   \end{pmatrix}\ .
   \label{eq:matrix-related}
  \end{equation}
  Let $\lambda$ be the solution to 
  \[\bone^\top = \lambda^\top 
  \begin{pmatrix} 
   s_1    &         &        & 0 \\
   s_2    & s_1                  \\
   \vdots &         & \ddots     \\
   s_n    & s_{n-1} & \ldots & s_1
  \end{pmatrix}\ .\]
  Then, for every $i=n,\dotsc,1$, we have
  $s_1 \lambda_i = 1 - \sum_{k=i+1}^n s_{k-i+1} \lambda_k \ge 1 - \sum_{k=i+1}^n s_{k-i} \lambda_k = 0$. Therefore, $\mu$ defined by $\mu_k \coloneq \sum_{i=k}^n \lambda_i$ for every $k \in [n]$ is ordered $\mu_1 \geq \ldots \geq \mu_n$. Using this notation, by multiplying \eqref{eq:matrix-related} from the left with $\lambda^\top$, we obtain
  \[
      \opt(p) = \sum_{k=1}^n C_k = \sum_{k=1}^n \mu_k \cdot p_k \ .
  \]
  
  We now prove 1-superadditivity. 
  Let $p = \sum_{\ell=1}^L p^{(\ell)}$ be an arbitrary partition of the processing requirements.
  For every $\ell \in [L]$ let $\sigma_\ell \colon [n] \to [n]$ be a permutation such that $p^{(\ell)}_{\sigma_{\ell}(1)} \leq \ldots \leq p^{(\ell)}_{\sigma_{\ell}(n)}$. Since $\mu_1 \geq \ldots \geq \mu_n$, we have that
  \[
    \opt(p^{(\ell)}) = \sum_{k=1}^n \mu_k \cdot p^{(\ell)}_{\sigma_{\ell}(k)} \leq \sum_{k=1}^n \mu_k \cdot p^{(\ell)}_{k} \ .
  \]
  Therefore, we can conclude that the problem is 1-superadditive as follows:
  \[
    \sum_{\ell=1}^L \opt(p^{(\ell)}) \leq \sum_{\ell=1}^L \sum_{k=1}^n \mu_k \cdot p^{(\ell)}_{k} = \sum_{k=1}^n \mu_k  \sum_{\ell=1}^L p^{(\ell)}_{k} = \sum_{k=1}^n \mu_k \cdot p_k = \opt(p) \ . 
  \]
  This concludes the proof.
\end{proof}

By Theorem~\ref{thm:superadditive}, the \pf algorithm is thus $2$-competitive for uniform release dates, and $3$-competitive in general. 
Moreover, as discussed in \Cref{subsec:monotone-applications}, it can be executed in strongly polynomial time.
This proves the part of Theorem~\ref{theorem:2-competitive} on related machines.

\subsubsection{Parallel Identical Machines}
For parallel identical machines, denoted by \abc{P}{\pmtn}{\sum w_j C_j}, we recover a result by \textcite{BBEM12}, who prove that their algorithm WDEQ, which we show corresponds to \pf in this setting (cf.\ \Cref{app:pf-combinatorial}), is 2-competitive.
For completeness, we argue that this also falls out of our analysis.
This follows from a nowadays folkloric result by \textcite{McN59} who showed that for any instance of 
\abc{P}{\pmtn}{\sum w_j C_j} there exists an optimal non-preemptive solution. Therefore, Lemma~\ref{lemma:non-preemptive-superadditive} implies that the problem is 1-superadditive, and, thus, Theorem~\ref{thm:superadditive} gives that \pf is 2-competitive.

\section{Concluding Remarks}

Our techniques led to substantially improved competitive ratios for \psp and machine scheduling under both online arrival and non-clairvoyance. We closed the gap to the lower bound of $2$ for minimizing the total completion time for related machines and for restricted assignment. In fact, this holds for any scheduling instance for which an optimal solution does not require job migration.

For weighted jobs and unrelated machines, there remains the gap between~$3.62$~(Theorem~\ref{theorem:unrelated}) and the lower bound of~$2$. We remark that improving the bound on the power of preemption of $1.81$ \cite{Sitters17}
directly implies an improved  bound on the competitive ratio by Lemma~\ref{lem:1.81-superadditive} and~\ref{thm:superadditive}. However, this cannot completely close the gap to $2$, as there is a known lower bound of $1.39$ on the power of preemption~\cite{EpsteinLSS17}. 
In contrast,
our analysis established a tight competitive ratio for related machines, despite the fact
that an optimal solution migrates jobs.
We conjecture that the competitive ratio of \pf is exactly~$2$ on unrelated machines.

Another well-studied objective is to minimize the total weighted flow time. \textcite{ImKM18} showed that \pf is $\cO(1/\varepsilon^2)$-competitive for \monpsp if the given resource capacity per time unit is increased by a factor $(e+\varepsilon)$, for any $\varepsilon \in (0,1/2)$. This is called speed augmentation. For the special case of  related machines and equal weights,
\textcite{Gupta2010} showed that \pf is $(2+\varepsilon)$-speed $\cO(1)$-competitive, which is tight for \pf even on a single machine~\cite{KalyanasundaramP00}. Our improved understanding of \pf (cf.\ \cref{alg:pf-related-combinatorial}) might help to close this gap also for the weighted setting.
We note that a different and arguably less natural generalization of RR to (un-)related machines achieves this tight result~\cite{ImKMP14}. Moreover, it is open whether a combination between \pf and a technique for \emph{scalable} algorithms ($(1+\varepsilon)$-speed $\cO(1)$-competitive algorithms), such as LAPS~\cite{EdmondsP12} or its smoothed variant~\cite{ImKMP14}, yields a scalable algorithm for unrelated machines or \monpsp. While for unrelated machines a tailored and rather complex algorithm achieves this~\cite{ImKMP14}, it is known that for the general PSP no algorithm can be constant competitive with constant speed augmentation~\cite{ImKM18}.

An intermediate model between clairvoyant and non-clairvoyant (online) scheduling is stochastic (online) scheduling, where job size distributions become known upon job release. Interestingly, our bounds for the non-clairvoyant \pf achieve the best known performance guarantees for polynomial-time preemptive stochastic scheduling policies. This raises the question of whether 
policies could benefit from stochastic information, which is known to be true for a single machine~\abc{1}{\pmtn}{\sum w_j C_j}~\cite{Sev74}.

\section*{Acknowledgements}
We thank Jugal Garg and László Végh for %
their help with our questions on matching markets and giving us further pointers to literature.
We also thank Nick Stelke for discussions on the special case of minimizing the total completion time on related machines.

\section*{Appendices}
\appendix

\section{Improved and Simplified Analysis of \pf for \psp}\label{app:pf}

We give a simplified and improved variant of the analysis of \pf for \psp of \textcite{ImKM18}.

\begin{theorem}\label{pf:thm:general-pf}
    \pf has a competitive ratio of at most $27$ for minimizing the total weighted completion time of \psp.
\end{theorem}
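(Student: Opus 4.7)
The plan is to prove Theorem~\ref{pf:thm:general-pf} by a dual-fitting argument against $(\DLP(\kappa))$, following the template of the monotone case in \Cref{sec:monotone} but with the packing duals redefined to tolerate the fact that, for general \psp, a job's rate $y_j(t)$ can drop at a \pf event. This drop is precisely what breaks the chain of inequalities in \Cref{lemma:monoton-psp-dual-feasibility}, where the monotone argument chained $\hy_j(t) \le \hy_j(t') \le y_j(t')$; without monotonicity, neither inequality survives, and the charge from the primal completion constraint of $(j,t)$ can no longer be absorbed locally by $\hlagrangevar_d(t)$.

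I would still set $\dualSa_j := w_j C_j$, so that $\sum_j \dualSa_j = \alg$. For the packing duals, instead of setting $\dualSb_{dt}$ proportional to the instantaneous Lagrange multipliers at time $t$, I would smear the multipliers forward in time:
\[
  \dualSb_{dt} \;:=\; \tfrac{1}{\kappa}\sum_{t'\ge t}\phi(t,t')\,\lagrangevar_d(t'),
\]
for a non-negative, translation-invariant kernel $\phi$ with constant forward mass $c := \sum_{t'\ge t}\phi(t,t')$. By \Cref{lemma:kkt-multiplier-bound} and swapping the order of summation, the packing part of the dual objective becomes $(c/\kappa)\sum_t W(t) = (c/\kappa)\alg$, so weak duality yields $\alg \le \kappa^2/(\kappa-c)\cdot\opt$, minimized at $\kappa = 2c$ with value $4c$. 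The quantitative target is thus to show that dual feasibility holds with $c$ small enough that $4c \le 27$, i.e.\ $c \le 27/4$.

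The dual feasibility condition, after substituting the KKT identity \eqref{psp-kkt1} at time $t'$, becomes
\[
  \bigl(C_j - t - \tfrac12\bigr)\,\tfrac{w_j}{p_j} \;\le\; \sum_{t'\ge t}\phi(t,t')\,\tfrac{w_j}{y_j(t')}.
\]
I would verify this by exploiting the \emph{monotonicity of unfinished weight} $W(\cdot)$ on every inter-release interval, together with a direct comparison at release points where $W$ can jump. On each piece between two consecutive \pf events, the AM--HM inequality and the trivial bound $\sum_{t'\ge t} y_j(t') \le p_j$ on remaining work give a lower bound on $\sum\phi(t,t')/y_j(t')$ in terms of the total weight present in that piece. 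A geometric decay of $\phi$ calibrated against the $W$-profile then absorbs rate drops at events: the mass $\phi$ puts on $t'$ is proportional to the fraction of weight still unfinished at $t'$, so completions, which are the only events inside an inter-release interval, are compensated by the corresponding drops in $W$.

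The main obstacle is this verification, because $\phi$ must be \emph{global} --- it cannot depend on the job $j$ --- yet cover the pointwise dual constraint for every job at every moment of its waiting interval. The improvement from $128$ down to $27$ then boils down to a scaling-constant optimization: jointly tuning the decay rate of $\phi$ together with $\kappa$ (rather than forcing the original $\kappa=2$) minimizes the product $4c$. I expect the trickiest bookkeeping to occur just after a rate drop at a release time, where an arriving job redirects capacity away from running jobs and the unfinished-weight monotonicity does not directly apply; there one needs a separate comparison between the pre- and post-release Lagrange multipliers, which is exactly the place where the \enquote{more direct} exploitation of weight monotonicity mentioned in the introduction pays off compared to the original analysis of \textcite{ImKM18}.
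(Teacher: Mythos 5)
The paper's proof is \emph{not} a kernel-smear of the $\dualVb$-variables alone; the load-bearing idea is a \emph{quantile trick on the $\dualVa$-variables}. Concretely, the paper introduces a parameter $0<\lambda<1$, defines $\zeta(t)$ as the weighted $\lambda$-quantile of the values $y_j(t)/p_j$ over $j\in U(t)$, and then sets
$\dualSa_j = \sum_{t\le C_j} w_j\, \ind\bigl[y_j(t)/p_j \le \zeta(t)\bigr]$
and $\dualSb_{dt} = \frac{1}{\kappa}\sum_{t'\ge t}\zeta(t')\,\lagrangevar_d(t')$.
So $\dualSa_j$ only credits times when $j$'s progress rate is in the lower $\lambda$-quantile; this sacrifices a factor $\lambda$ in the dual objective (Lemma~\ref{lemma:pf-dual-obj1} gives $\sum_j\dualSa_j\ge\lambda\alg$), but pays for itself in dual feasibility, because on exactly those times $y_j(t')/p_j\le\zeta(t')$, so a $j$-\emph{dependent} quantity can be replaced by the $j$-\emph{independent} quantile $\zeta(t')$. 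Optimizing $\kappa$ and $\lambda$ simultaneously, and also charging the packing side via the unfinished-weight bound $\sum_d\dualSb_{dt}\le\frac{1}{(1-\lambda)\kappa}W(t)$, yields $27$ at $\kappa=9,\lambda=2/3$.

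Your proposal keeps $\dualSa_j=w_jC_j$ and tries to compensate solely on the $\dualVb$-side with a translation-invariant kernel $\phi$ of finite forward mass $c$. The dual-objective arithmetic ($\alg\le\kappa^2/(\kappa-c)\cdot\opt$, minimized at $\kappa=2c$) is correct. But the feasibility step is exactly the heart of the matter and is left as speculation. After substituting \eqref{psp-kkt1}, the constraint becomes
\[
\frac{C_j-t-\tfrac12}{p_j}\;\le\;\sum_{t'=t}^{C_j-1}\frac{\phi(t,t')}{y_j(t')},
\]
and the only general handle you have is $\sum_{t'\ge t}y_j(t')\le p_j$. By Cauchy--Schwarz this yields
$\sum_{t'}\phi(t,t')/y_j(t') \ge \bigl(\sum_{t'}\sqrt{\phi(t,t')}\bigr)^2/p_j$, while also
$\bigl(\sum_{t'}\sqrt{\phi(t,t')}\bigr)^2 \le (C_j-t)\sum_{t'}\phi(t,t') \le (C_j-t)\,c$ by the same inequality; so this line of reasoning shows feasibility only under the \emph{worst} rate profile $y_j(t')\propto\phi(t,t')$, and the slack is $c\ge 1$, with equality for that profile. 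Since general \psp allows precisely the decreasing rate profiles you would need to exclude (cf.\ the counterexample in \cref{app:unrelated-not-monotone}), you cannot certify feasibility without either (i) an argument that the \pf dynamics forbid such rate decay uniformly, which is false in general, or (ii) making $\phi$ adaptive to the job $j$ or to the running state of the algorithm --- at which point you have abandoned the global, translation-invariant kernel you set out with. The paper's quantile trick is exactly the algorithm-adaptive device that resolves this tension, and your proposal does not contain a substitute for it. In short: your route is genuinely different, your objective-side algebra is right, but the proof is missing the one mechanism that makes dual feasibility go through for non-monotone \psp, so as written it does not establish the theorem.
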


Fix an instance and \pf's schedule.
Let~$\kappa \geq 1$ and~$0 < \lambda < 1$ be constants, which we fix later.
In the following, we assume by scaling that all weights are integers.

For every time~$t$ let $\zeta(t)$ be the weighted $\lambda$-quantile of the values~$y_j(t)/p_j$, $j \in U(t)$, w.r.t.\ the weights~$w_j$. More formally, if $Z_t$ denotes the sorted (ascending) list of length~$W(t)$ composed of $w_j$ copies of~$y_{j}(t)/p_j$ for every~$j \in U(t)$, then $\zeta(t)$ is the value at the index~$\ceil{\lambda W(t)}$ in~$Z_t$.
Let $\dualSa_{jt} \coloneqq w_j \cdot \ind\bigl[y_j(t)/p_j \leq \zeta(t)\bigr]$ for $t \ge 0$ and $j \in U(t)$, where $\ind[\varphi]$ is the indicator variable of the expression~$\varphi$. Further, for $t \ge 0$ let $\lagrangevar_d(t)$ be the optimal Lagrange multiplier corresponding to the constraint~$d \in [D]$ of $(\CP(J(t)))$. We consider the following dual solution:
\begin{itemize}
    \item $\dualSa_j \coloneqq \sum_{t=0}^{C_j} \dualSa_{jt}$ for every job $j \in J$,
    \item $\dualSb_{dt} \coloneqq \frac{1}{\kappa} \sum_{t' \geq t}   \zeta(t') \cdot \lagrangevar_{d}(t')$ for every $d \in [D]$ and time $t \geq 0$.
\end{itemize}

We first show in the following lemma that the objective value of $(\DLP(\kappa))$ for this assignment upper bounds a constant fraction of the algorithm's objective value.

\begin{lemma}\label{lemma:pf-dual-objective}
    It holds that $\bigl(\lambda - \frac{1}{(1-\lambda)\kappa}\bigr) \alg \leq \sum_{j \in J} \dualSa_j - \sum_{d=1}^D \sum_{t \geq 0} \dualSb_{dt}$.
\end{lemma}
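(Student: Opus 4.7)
The plan is to establish the two inequalities $\sum_j \dualSa_j \ge \lambda \alg$ and $\sum_{d,t}\dualSb_{dt} \le \alg/((1-\lambda)\kappa)$ separately and subtract them; both come from exchanging orders of summation and invoking the weighted $\lambda$-quantile property of $\zeta$ together with Lemma~\ref{lemma:kkt-multiplier-bound}.

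For the first inequality, I would swap sums in $\sum_j \dualSa_j$ to obtain $\sum_{t \ge 0} \sum_{j \in U(t):\ y_j(t)/p_j \le \zeta(t)} w_j$, noting that the boundary contribution at $t = C_j$ only helps since $y_j(C_j) = 0 \le \zeta(C_j)$. By the definition of $\zeta(t)$ as the weighted $\lambda$-quantile over $U(t)$, the inner weighted cardinality is at least $\lambda W(t)$. Since $U(t) = \{j : C_j > t\}$, each job $j$ contributes $w_j$ to $W(t)$ for exactly $C_j$ integer times, so $\sum_{t \ge 0} W(t) = \alg$; this yields $\sum_j \dualSa_j \ge \lambda \alg$.

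For the second inequality, unfolding the tail sum $\sum_{t' \ge t}$ shows that each product $\zeta(t)\lagrangevar_d(t)$ is counted exactly $(t+1)$ times in $\sum_{d,t}\dualSb_{dt}$, and Lemma~\ref{lemma:kkt-multiplier-bound} converts $\sum_d \lagrangevar_d(t)$ into $W_J(t) := \sum_{j \in J(t)} w_j \le W(t)$. Thus $\sum_{d,t}\dualSb_{dt} = \frac{1}{\kappa}\sum_{t \ge 0} (t+1)\zeta(t) W_J(t)$. The crux is to upper-bound the right-hand side by $\alg/((1-\lambda)\kappa)$. I would introduce the rate-weighted quantity $R(t) := \sum_{j \in J(t)} w_j y_j(t)/p_j$ and prove two facts: (i) $(1-\lambda)\zeta(t) W(t) \le R(t)$, because the quantile guarantees that at least $(1-\lambda) W(t)$ weight has $y_j(t)/p_j \ge \zeta(t)$ and, whenever $\zeta(t) > 0$, all such jobs lie in $J(t)$; and (ii) $\sum_t (t+1) R(t) \le \alg$, which follows by expanding $R(t)$ and noting that for each $j$ the sequence $(y_j(t)/p_j)_{t \in [r_j, C_j-1]}$ is a probability distribution (since $\sum_t y_j(t) = p_j$), so its mean of $(t+1)$ is at most $C_j$, and summing with weights $w_j$ gives $\alg$. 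Chaining (i) and (ii) through $W_J(t) \le W(t)$ yields the required bound.

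The main obstacle is bridge (i): it transfers a statement about the $\lambda$-quantile over $U(t)$ to a bound on the rate-weighted quantity $R(t)$ that sums only over the released set $J(t)$. It works cleanly thanks to the observation that unreleased jobs contribute $y_j(t)/p_j = 0$ and therefore sit at the bottom of the quantile, so they cannot populate its upper $(1-\lambda)$ fraction. Once (i) and (ii) are in place, the $(t+1)$ weight produced by telescoping the tail definition of $\dualSb_{dt}$ is absorbed exactly by the bound on $\sum_t (t+1) R(t)$, and subtracting the two bounds gives $\sum_j \dualSa_j - \sum_{d,t}\dualSb_{dt} \ge (\lambda - 1/((1-\lambda)\kappa))\alg$, as required.
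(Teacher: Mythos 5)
Your proposal is correct and amounts to essentially the same argument as the paper. The first half (lower-bounding $\sum_j \dualSa_j$ by $\lambda\,\alg$) is identical to the paper's Lemma~\ref{lemma:pf-dual-obj1}. For the second half the paper proves the pointwise-in-$t$ bound $\sum_d \dualSb_{dt} \le \frac{1}{(1-\lambda)\kappa}W(t)$ (Lemma~\ref{lemma:pf-dual-obj2}), using $U(t')\subseteq U(t)$ to pull the job sum out of the tail sum and then $\sum_{t'\ge t} y_j(t')\le p_j$; you instead exchange the two time sums globally, producing the $(t+1)$ weight, and then combine the same quantile inequality with $\sum_{t} (t+1)\,y_j(t)/p_j \le C_j$. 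Both routes rest on the same three ingredients (Lemma~\ref{lemma:kkt-multiplier-bound}, the $\lambda$-quantile property, and $\sum_t y_j(t) \le p_j$), so this is a cosmetic reorganization rather than a genuinely different proof; the only small difference is that the paper's per-$t$ bound is slightly stronger than what you establish, though that extra strength is not used in Lemma~\ref{lemma:pf-dual-objective}.
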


The lemma follows from the following two statements.

\begin{lemma}\label{lemma:pf-dual-obj1}
    It holds that $\sum_{j \in J} \dualSa_j \geq \lambda \cdot \alg$.
\end{lemma}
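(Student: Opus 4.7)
The plan is to exchange the order of summation in the definition of $\sum_{j \in J} \dualSa_j$ and then invoke the quantile defining property of $\zeta(t)$ at each fixed time.

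First I would unfold the definition:
\[
 \sum_{j \in J} \dualSa_j \;=\; \sum_{j \in J} \sum_{t=0}^{C_j} \dualSa_{jt} \;=\; \sum_{t \ge 0} \sum_{j \in U(t)} w_j \cdot \ind\bigl[y_j(t)/p_j \le \zeta(t)\bigr],
\]
where the second equality uses $\dualSa_{jt} = 0$ unless $j \in U(t)$ (equivalently $t < C_j$; the boundary $t = C_j$ can be absorbed since we assume integral completion times and work with integer time units).

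Next, for fixed $t$, recall that $Z_t$ is the ascending list of length $W(t)$ formed by taking $w_j$ copies of $y_j(t)/p_j$ for each $j \in U(t)$, and $\zeta(t)$ is its entry at position $\lceil \lambda W(t)\rceil$. Hence the number of entries in $Z_t$ that are at most $\zeta(t)$ is at least $\lceil \lambda W(t)\rceil \ge \lambda W(t)$, and by construction of $Z_t$ this count equals $\sum_{j \in U(t)} w_j \cdot \ind[y_j(t)/p_j \le \zeta(t)]$. Therefore
\[
 \sum_{j \in U(t)} w_j \cdot \ind\bigl[y_j(t)/p_j \le \zeta(t)\bigr] \;\ge\; \lambda\, W(t).
\]

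Finally I would sum over $t$ and use the identity $\sum_{t \ge 0} W(t) = \sum_{j \in J} w_j C_j = \alg$ (from the integrality of completion times, as noted in \Cref{sec:preliminaries}):
\[
 \sum_{j \in J} \dualSa_j \;\ge\; \lambda \sum_{t \ge 0} W(t) \;=\; \lambda \cdot \alg.
\]

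I do not expect a real obstacle here: the statement is essentially the definition of a weighted quantile, combined with Fubini and the standard identity relating $\sum_j w_j C_j$ to $\sum_t W(t)$. The only minor bookkeeping point is handling the endpoint $t = C_j$ in the summation range of $\dualSa_j$, which is harmless since $j \notin U(C_j)$ and so $\dualSa_{j,C_j}=0$ under the convention used.
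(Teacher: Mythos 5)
Your proof is correct and takes essentially the same route as the paper: fix a time $t$, use the quantile definition of $\zeta(t)$ to get $\sum_{j \in U(t)} \dualSa_{jt} \ge \lambda W(t)$, and sum over $t$ using $\sum_{t\ge 0} W(t) = \alg$. The only difference is that you spell out the Fubini swap and the boundary case $t = C_j$ explicitly, which the paper leaves implicit.
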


\begin{proof}
  Consider a time~$t$. Observe that~$\sum_{j \in U(t)} \dualSa_{jt}$ contains the total weight of jobs~$j$ that satisfy~$y_j(t)/p_j \leq \zeta(t)$. By the definition of $\zeta(t)$, we conclude that this is at least~$\lambda \cdot W(t)$, i.e.,~$\sum_{j \in U(t)} \dualSa_{jt} \geq \lambda \cdot W(t)$.
  The statement then follows by summation over time.
\end{proof}

\begin{lemma}\label{lemma:pf-dual-obj2}
    At any time $t$, it holds that $\sum_{d=1}^D \dualSb_{dt} \leq \frac{1}{(1-\lambda) \kappa} W(t)$.
\end{lemma}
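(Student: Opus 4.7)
The plan is to bound $\sum_{d=1}^D \dualSb_{dt}$ in three steps: reduce the Lagrange multipliers to job weights, exchange the quantile $\zeta(t')$ for actual rate-to-size ratios, and then swap the order of summation.

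First, note that by \cref{lemma:kkt-multiplier-bound} applied to $(\CP(J(t')))$ we have $\sum_{d=1}^D \lagrangevar_d(t') = \sum_{j \in J(t')} w_j \le W(t')$, since $J(t') \subseteq U(t')$. Plugging this into the definition of $\dualSb_{dt}$ yields
\[
\sum_{d=1}^D \dualSb_{dt} \;=\; \frac{1}{\kappa} \sum_{t' \ge t} \zeta(t') \sum_{d=1}^D \lagrangevar_d(t') \;\le\; \frac{1}{\kappa} \sum_{t' \ge t} \zeta(t') \cdot W(t')\ .
\]

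Second, I would exploit the definition of $\zeta(t')$ as the weighted $\lambda$-quantile. Let $S(t') \coloneqq \{j \in U(t') : y_j(t')/p_j \ge \zeta(t')\}$; by construction of $\zeta(t')$ as the entry at position $\ceil{\lambda W(t')}$ of the ascending weighted list of length $W(t')$, it holds that $\sum_{j \in S(t')} w_j \ge (1-\lambda) W(t')$. Hence
\[
\zeta(t') \cdot W(t') \;\le\; \frac{1}{1-\lambda} \sum_{j \in S(t')} w_j \zeta(t') \;\le\; \frac{1}{1-\lambda} \sum_{j \in S(t')} \frac{w_j\, y_j(t')}{p_j} \;\le\; \frac{1}{1-\lambda} \sum_{j \in U(t')} \frac{w_j\, y_j(t')}{p_j}\ .
\]

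Third, I would swap the summation over $t'$ and $j$. For any job $j \in U(t)$, the times $t' \ge t$ with $j \in U(t')$ form $\{t, t+1, \dotsc, C_j - 1\}$, and the processing accrued in this window, $\sum_{t'=t}^{C_j-1} y_j(t')$, is trivially at most $p_j$ (it might even be $0$ in the unreleased prefix, but that only helps). For jobs $j \notin U(t)$ the inner sum is empty. Therefore
\[
\sum_{t' \ge t} \sum_{j \in U(t')} \frac{w_j\, y_j(t')}{p_j} \;=\; \sum_{j \in U(t)} \frac{w_j}{p_j} \sum_{t'=t}^{C_j - 1} y_j(t') \;\le\; \sum_{j \in U(t)} w_j \;=\; W(t)\ .
\]
Chaining these three inequalities gives the desired bound $\sum_{d=1}^D \dualSb_{dt} \le \frac{1}{(1-\lambda)\kappa} W(t)$.

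The computation is almost mechanical once the right regrouping is chosen; the only step requiring care is the quantile inequality, where one must remember that $\zeta(t')$ is defined with respect to the set $U(t')$ of unfinished jobs (which includes unreleased jobs contributing the value $0$ to the list) rather than only the available jobs $J(t')$. This is crucial because on the right-hand side we then rearrange into a sum over all unfinished jobs and telescope via the processing bound $\sum_{t'=t}^{C_j-1} y_j(t') \le p_j$; had we instead quantized over $J(t')$, the swap-of-sums step would miss the contribution of not-yet-released jobs that are needed to upper-bound $W(t)$ rather than merely $W^J(t)$.
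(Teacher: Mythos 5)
Your proof is correct and follows essentially the same three-step argument as the paper's: (1) use Lemma~\ref{lemma:kkt-multiplier-bound} to collapse the Lagrange multipliers to $W(t')$, (2) use the weighted $\lambda$-quantile definition of $\zeta(t')$ to trade $\zeta(t')W(t')$ against $\frac{1}{1-\lambda}\sum_{j\in U(t')} w_j y_j(t')/p_j$, and (3) exchange the order of summation and bound $\sum_{t'\ge t} y_j(t') \le p_j$. Your closing remark that $\zeta(t')$ must be the quantile over $U(t')$ (not merely $J(t')$) is an accurate and worthwhile observation, since otherwise the inequality $\zeta(t')(1-\lambda)W(t')\le\sum_{j}w_j y_j(t')/p_j$ in step (2) would not follow.
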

\begin{proof}
  Fix a time $t$. Observe that for every $t' \geq t$ the definition of $\zeta(t')$ implies that $\sum_{j \in U(t')} w_j \cdot \ind \bigl[ y_j(t')/p_j \geq \zeta(t') \bigr]\ge (1 - \lambda) W(t')$. Thus,
  \begin{equation}  
    \zeta(t') \cdot (1 - \lambda) W(t') \leq  \sum_{j \in U(t')} w_j \cdot \zeta(t') \cdot \ind \Big[ \frac{y_j(t')}{p_j} \geq \zeta(t') \Big] \leq \sum_{j \in U(t')} w_j \cdot \frac{y_j(t')}{p_j} \ . \label{eq:pf-dual-obj2-eq1}
  \end{equation}

  The definition of $\dualSb_{it}$ and Lemma~\ref{lemma:kkt-multiplier-bound} imply
  \begin{align*}
    \sum_{d=1}^D \dualSb_{dt}  = \sum_{d=1}^D \frac{1}{\kappa}\sum_{t' \geq t}   \lagrangevar_d(t') \cdot \zeta(t')  
    =  \frac{1}{\kappa} \sum_{t' \geq t} \zeta(t') \sum_{d=1}^D \lagrangevar_d(t') = \frac 1 \kappa \sum_{t' \ge t} \zeta(t') \sum_{j \in J(t')} w_j \leq \frac{1}{\kappa} \sum_{t' \geq t} \zeta(t') \cdot W(t') \ .
  \end{align*}
 Using~\eqref{eq:pf-dual-obj2-eq1}, we conclude that this is at most
  \begin{align*}
    \frac{1}{\kappa} \sum_{t' \geq t} \zeta(t') W(t')  
    &\leq  \frac{1}{(1 - \lambda)\kappa} \sum_{t' \geq t} \sum_{j \in U(t')} w_j \cdot \frac{y_j(t')}{p_j} \\
    &\leq \frac{1}{(1 - \lambda) \kappa} \sum_{j \in U(t)} w_j \sum_{t' \geq t}  \frac{y_j(t')}{p_j} 
    \leq \frac{1}{(1 - \lambda) \kappa} \sum_{j \in U(t)} w_j \ .
  \end{align*}
  The second inequality follows from $U(t') \subseteq U(t)$ for $t' \geq t$. The third inequality holds because $\sum_{t' \geq t} y_j(t') \leq p_j$ for every job $j$. This concludes the proof of the lemma.
\end{proof}

Next, we show dual feasibility.

\begin{lemma}\label{lemma:pf-dual-feasibility}
 The dual solution $(\dualSa_j)_j$ and $(\dualSb_{it})_{i,t}$ is feasible for $(\DLP(\kappa))$.
\end{lemma}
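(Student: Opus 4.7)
The plan is a direct pointwise check of the dual constraint
\[
\frac{\dualSa_j}{p_j} - \frac{w_j}{p_j}\Bigl(t+\tfrac12\Bigr) \;\le\; \kappa\sum_{d=1}^D b_{dj}\,\dualSb_{dt}
\]
for each $(j,t)$ with $t \ge r_j$. Non-negativity of $\dualSa_j$ and $\dualSb_{dt}$ is immediate: every summand in $\dualSa_j$ is a weight times an indicator, and every summand in $\dualSb_{dt}$ is a product of $\zeta(t')\ge 0$ and $\lagrangevar_d(t') \ge 0$ (the latter by \eqref{psp-kkt3}).

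For the main constraint, the first step is to split the defining sum of $\dualSa_j$ at time $t$ and estimate the early part by the trivial pointwise bound $\dualSa_{jt'} \le w_j$:
\[
\dualSa_j \;\le\; w_j\cdot t \;+\; \sum_{t' = t}^{C_j - 1} \dualSa_{jt'}.
\]
Dividing by $p_j$ and subtracting $\tfrac{w_j}{p_j}(t+\tfrac12)$ from both sides reduces the constraint to showing
\[
\frac{1}{p_j}\sum_{t' = t}^{C_j - 1} \dualSa_{jt'} \;-\; \frac{w_j}{2p_j} \;\le\; \kappa \sum_{d=1}^D b_{dj}\,\dualSb_{dt},
\]
so the negative slack $-\tfrac{w_j}{2p_j}$ is for free and only the tail sum has to be controlled.

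The key step is a pointwise bound on each tail summand. Fix $t' \in [t, C_j - 1]$; since $t' \ge t \ge r_j$ and $t' < C_j$, we have $j \in J(t')$, so Observation~\ref{obs:positive-rates} gives $y_j(t')>0$ and the KKT stationarity \eqref{psp-kkt1} for $(\CP(J(t')))$ yields $\tfrac{w_j}{y_j(t')} = \sum_d b_{dj}\,\lagrangevar_d(t')$. Whenever $\ind[y_j(t')/p_j \le \zeta(t')] = 1$, we thus obtain
\[
\frac{\dualSa_{jt'}}{p_j} \;=\; \frac{w_j}{p_j} \;\le\; \zeta(t')\cdot\frac{w_j}{y_j(t')} \;=\; \zeta(t')\sum_{d=1}^D b_{dj}\,\lagrangevar_d(t'),
\]
and when the indicator vanishes, the bound holds trivially. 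Summing over $t' \ge t$ and extending the range beyond $C_j-1$ — which is free because $\zeta$, $\lagrangevar_d$, and $b_{dj}$ are all non-negative — exactly reproduces $\kappa\sum_d b_{dj}\,\dualSb_{dt}$, completing the check.

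I expect no serious obstacle: the argument is essentially mechanical once one hits on the right decomposition at $t$ and the KKT-enabled pointwise bound. The only place requiring modest care is to apply KKT only at times $t'$ where job $j$ actually appears in $J(t')$ (so that \eqref{psp-kkt1} has the variable $y_j$); this is guaranteed by the tail range $t' \in [t,C_j-1]$ combined with $t\ge r_j$. In particular, it was important that the trivial bound $\dualSa_{jt'} \le w_j$ handles the initial segment without invoking KKT, which would be unavailable for $t' < r_j$.
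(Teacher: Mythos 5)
Your proof is correct and follows essentially the same route as the paper's: the same split of $\dualSa_j$ at time $t$, with the trivial bound $\dualSa_{jt'}\le w_j$ absorbing the $w_j t/p_j$ term, the same KKT-enabled pointwise bound $w_j/p_j \le \zeta(t')\sum_d b_{dj}\lagrangevar_d(t')$ on the tail (paper's proof writes $w_j/p_j = (w_j/y_j(t'))\cdot(y_j(t')/p_j)$ before invoking the indicator, which is the identical manipulation), and the same non-negativity argument to extend the sum to all $t'\ge t$. The explicit remark that KKT is only applied where $j\in J(t')$ is a worthwhile clarification of a point the paper leaves implicit, but it is not a different method.
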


\begin{proof}
    Fix a job $j$, a machine $i$ and a time $t \geq r_j$.  Then,
    \begin{align*}
        \frac{ \dualSa_j }{p_j} - \frac{w_j}{p_j} \Bigl(t + \frac 1 2\Bigr)
        &\leq \sum_{t' = 0}^{C_j} \frac{w_j}{p_j} \cdot \ind\Big[\frac{y_j(t')}{p_j} \le \zeta(t')\Big] - \frac{w_j}{p_j} t
        \le \sum_{t' = t}^{C_j} \frac{w_j}{p_j} \cdot \ind\Big[ \frac{y_j(t')}{p_j} \leq \zeta(t') \Big] \\
        &=  \sum_{t' = t}^{C_j} \frac{w_j}{y_j(t')} \cdot \frac{y_j(t')}{p_j} \cdot \ind\Big[ \frac{y_j(t')}{p_j} \leq \zeta(t') \Big] \\
        &= \sum_{t' = t}^{C_j} \sum_{d=1}^D b_{dj} \cdot \lagrangevar_d(t') \cdot \frac{y_j(t')}{p_j} \cdot \ind\Big[ \frac{y_j(t')}{p_j} \leq \zeta(t') \Big] \\
        &\leq \sum_{t' = t}^{C_j} \sum_{d=1}^D b_{dj} \cdot \lagrangevar_d(t') \cdot \zeta(t') 
        \leq \sum_{d=1}^D b_{dj} \sum_{t' \geq t} \lagrangevar_d(t') \cdot \zeta(t') 
        = \kappa \sum_{d=1}^D b_{dj} \dualSb_{dt} \ .
    \end{align*}
    The second equality uses~\eqref{psp-kkt1}. This concludes the proof of the lemma.
\end{proof}

Finally, we prove Theorem~\ref{pf:thm:general-pf}.
Weak duality, Lemma~\ref{lemma:pf-dual-objective} and Lemma~\ref{lemma:pf-dual-feasibility} imply
  \begin{align*}
      \kappa \cdot \opt 
      \geq  \sum_{j \in J} \dualSa_{j} - \sum_{d=1}^D \sum_{t \geq 0} \dualSb_{dt} \geq \Bigl(\lambda -  \frac{1}{(1-\lambda)\kappa}\Bigr) \cdot \alg \ .
  \end{align*}
Choosing $\kappa = 9$ and $\lambda = \frac{2}{3}$ implies $\alg \leq 27 \cdot \opt$.

\section{Combinatorial Implementation of \pf for Related Machines}
\label{app:pf-combinatorial}

In this section, we present a combinatorial implementation of \pf for the problem of minimizing the total weighted completion time on uniformly related machines, denoted as \abc{Q}{r_j, \pmtn}{\sum w_j C_j}, which runs in strongly polynomial time. For simplicity, we fix an arbitrary state at time~$t$ and assume that $J(t) = \{1,\ldots,n\}$ and $m=n$. This can be achieved by adding speed-zero machines or removing the $m-n$ slowest machines, as this does not weaken an optimal solution.

We note that combining the results of \textcite{ImKM18} and \textcite{JainV10} already imply that \pf can be implemented in strongly polynomial time (as it models a submodular utility allocation market). Our contribution is to provide a simpler and faster algorithm. Moreover, the description of our algorithm directly implies that,
to the best of our knowledge, all known 2-competitive non-clairvoyant algorithms for uniform release dates are special cases of \pf.

\subsubsection*{A Specialized Convex Program}
Using the representation of the rate polytope for unrelated machine scheduling given in \Cref{sec:preliminaries}, we can see that the convex program~$(\mathrm{\CP}(J(t)))$ for this problem is equal to the following convex program (where the $y$-variables have been eliminated):
{
 \renewcommand{\arraystretch}{1.5}
 \[\begin{array}{rr>{\displaystyle}rcl>{\quad}l}
  (\CP_Q)& \operatorname{maximize} &\multicolumn{3}{l}{\displaystyle \sum_{j=1}^n w_j \log\bigg(\sum_{i=1}^n s_{i} x_{ij} \bigg)} \\
  &\text{subject to} &\sum_{j = 1}^n x_{ij} &\le& 1 & \forall i \in [n] \\
  &&\sum_{i=1}^n x_{ij} &\le& 1 & \forall j \in [n] \\
  &&x_{ij} &\ge& 0 & \forall i \in [n],\, j \in [n]
 \end{array}\]
}%
The KKT conditions~\cite{BV2014} of an optimal solution $(x_{ij})_{i,j}$ of $(\CP_Q)$ with Lagrange multipliers~$(\lagrangevar_i)_i$ and $(\delta_{j})_j$ can be written as follows:
\begin{align}
    -\frac{s_{i} w_j}{\sum_{i'} s_{i'} x_{i'j}} + \eta_{i} + \delta_{j} &\ge 0 \quad \forall i, j \in [n]  \label{pf-related-kkt1}\\
    \eta_{i} \cdot \biggl( 1 - \sum_{j=1}^n x_{ij} \biggr) &= 0 \quad  \forall i \in [n]  \label{pf-related-kkt2} \\
    \delta_{j} \cdot \biggl( 1 - \sum_{i=1}^n x_{ij} \biggr) &= 0 \quad  \forall j \in [n]  \label{pf-related-kkt3} \\
    x_{ij} \cdot \biggl(-\frac{s_{i} w_j}{\sum_{i'} s_{i'} x_{i'j}} + \eta_{i} + \delta_{j}\biggr) &= 0 \quad  \forall i, j \in [n] \label{pf-related-kkt4} \\
    \eta_{i}, \delta_{j} & \geq 0  \quad \forall i, j \in [n] \label{pf-related-kkt5} 
\end{align}

\begin{algorithm}[tb]
  \DontPrintSemicolon
  \caption{Solution algorithm for $(\CP_Q)$}
  \label{alg:pf-related-combinatorial}
  \KwIn{Weights $w_1 \geq \ldots \geq w_n$, machine speeds $s_1 \geq \ldots \geq s_n$.}
  $k \gets 1$ \;
  \While{$k \leq n$}{
      Find the (largest) index $k \leq h \leq n$ that maximizes
  \(
      \frac{\sum_{j=k}^h w_j}{\sum_{i=k}^h s_i}.
  \) \label{pf:line:findlevel} \;
  Compute an allocation of jobs $k,\dotsc,h$ to machines $k,\ldots,h$ such that every such job $j$ receives a rate $y_j = \frac{w_j}{\sum_{j'=k}^h w_{j'}} \cdot \sum_{i=k}^h s_{i}$, as described in \Cref{lemma:compute-allocation}. \label{pf:line:computerate} \;
  $k \gets h+1$ \label{pf:line:nextlevel} \;
  }
\end{algorithm}

\subsubsection*{A Combinatorial Algorithm, Intuition, and Relation to Previous Results}
We now introduce \cref{alg:pf-related-combinatorial}.
We will show below that this algorithm computes an optimal solution of
$(\CP_Q)$ and thus determining the instantaneous resource allocation $x$. 
We prove this by giving an assignment of Lagrangian multipliers which, together with $x$, satisfies the KKT conditions. This is sufficient, because $(\CP_Q)$ has a concave objective function and convex restrictions~\cite{BV2014}.
Executing this algorithm at every event time in the schedule leads to a combinatorial, strongly polynomial implementation of \pf. 

The main idea of \Cref{alg:pf-related-combinatorial} is the following. If there exists an allocation $x_{ij}$ and resulting processing rates~$y_j$, $i,j \in [n]$, such that they fully utilize all machines and all jobs~$j$ have the same ratio $w_j/y_j \eqqcolon \pi$, then we can easily conclude via~\eqref{pf-related-kkt1} that this allocation is optimal for $(\CP_Q)$ by setting $\eta_i \coloneq s_i \pi$ for all $i \in [n]$ and setting all other multipliers to $0$. That means, it is optimal to distribute the processing rates proportional to the weights. However, this is not possible if the distribution of weights is very different to the distribution of speeds.
In this case, we search for the largest prefix of $[h] \subseteq [n]$ (called \emph{level}) such that a proportional allocation is possible for jobs and machines in $[h]$ (cf.~\Cref{pf:line:findlevel}). Then, we remove these machines and jobs from the instance (cf.~\Cref{pf:line:nextlevel}) and repeat.

Consider, for example, identical machines: we would allocate job~$1$ entirely to machine~$1$ if $w_1 / W > 1 / m$, because its fraction of the total weight $W$ is larger than the fraction of the speed of a single machine compared to the total speed $m$. Note that is exactly how \wdeq handles high-weight jobs on identical parallel machines~\cite{BBEM12}. In fact, the instantaneous resource allocation computed by \cref{alg:pf-related-combinatorial} generalizes all known (to the best of our knowledge) non-clairvoyant algorithms for the simpler parallel identical machine environments and uniform release dates: \wdeq~\cite{BBEM12}, \wrr~\cite{KC03}, and \rr~\cite{MPT94}.

We remark that \textcite{GuptaIKMP12} mention another  natural generalization of \wrr to related machines, which allocates a $w_j / W$ fraction of each of the fastest $\floor{W / w_j}$ machines to job $j$.
However, they show that this algorithm is at least $\Omega(\sqrt{\log n})$-competitive for the total weighted completion time objective, whereas 
we prove that \pf is $4$-competitive in that case  (Theorem~\ref{thm:restricted-related-weighted}).

\Textcite{FeldmanMNP08} considered \cref{alg:pf-related-combinatorial} as a mechanism to fractionally allocate $n$ ad slots~$i$ with different click rates~$s_i$ to $n$ advertisers~$j$. Each advertiser~$j$ wishes to maximize their received click rate~$y_j$ and has a budget~$w_j$ they is willing to pay. To this end they can bid an amount $w_j'$. The authors showed that when the slots are allocated according to \cref{alg:pf-related-combinatorial} applied to the bids~$w_j'$ and each advertiser is charged their own bid, then bidding the true budget~$w_j$ is a dominant strategy, i.e., the mechanism is strategy-proof. Therefore, since all bids are collected by the auctioneer, it maximizes the revenue of the auctioneer under strategic behavior of the advertisers. Our result provides further insight about this setting. 
Since the works of \textcite{ImKM18} and \textcite{JainV10} imply that a solution to $(\CP_Q)$ is equal to a market equilibrium, our result shows that the mechanism actually computes a market equilibrium.
That means that it does not have to enforce the allocation, but only needs to set the prices for ad rate in the different slots. For these prices the allocation maximizes each advertisers' click rate subject to their budget constraint, so that it is expected to arise without central coordination.

\paragraph{Proof of Equivalence}
We finally prove the main theorem of this section.

\begin{theorem}\label{thm:pf-related-comb}
    \Cref{alg:pf-related-combinatorial} computes an optimal solution of $(\CP_Q)$.
\end{theorem}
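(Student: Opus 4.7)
The plan is to verify the KKT conditions~\eqref{pf-related-kkt1}--\eqref{pf-related-kkt5} by exhibiting explicit Lagrange multipliers $(\eta_i)_i$ and $(\delta_j)_j$ that match the algorithm's allocation; since $(\CP_Q)$ has a concave objective and linear constraints, this is sufficient for optimality. Let the algorithm produce $L$ levels, with level $\ell$ comprising the consecutive indices $n_{\ell-1}+1,\dotsc,n_\ell$ (so $n_0=0$, $n_L=n$), common ratio $\pi_\ell\coloneqq\frac{\sum_{j\in\ell} w_j}{\sum_{i\in\ell} s_i}$, and largest in-level speed $M_\ell\coloneqq s_{n_{\ell-1}+1}$. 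In particular, the in-level rates $y_j=w_j/\pi_{\ell(j)}$ satisfy $\sum_{j\in\ell} y_j=\sum_{i\in\ell} s_i$, so the allocation obtained in line~\ref{pf:line:computerate} via \cref{lemma:compute-allocation} uses every machine and every job in its level fully; this makes~\eqref{pf-related-kkt2} and~\eqref{pf-related-kkt3} automatic for any non-negative multipliers.

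Before constructing the duals I would prove strict monotonicity $\pi_1>\pi_2>\cdots>\pi_L$. This is a short mediant argument: if $\pi_{\ell+1}\ge\pi_\ell$, then the prefix ratio starting from the beginning of level $\ell$ would be at least $\pi_\ell$ at the strictly larger index $n_{\ell+1}$, contradicting that line~\ref{pf:line:findlevel} picks the \emph{largest} maximizing index.

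I would then define the multipliers recursively by $D_L\coloneqq 0$ and $D_\ell\coloneqq D_{\ell+1}+M_{\ell+1}(\pi_\ell-\pi_{\ell+1})$ for $\ell<L$, so that $D_\ell=\sum_{\ell'>\ell} M_{\ell'}(\pi_{\ell'-1}-\pi_{\ell'})$, and set
\[ \delta_j \ \coloneqq\ D_{\ell(j)}, \qquad \eta_i \ \coloneqq\ s_i\pi_{\ell(i)}-D_{\ell(i)}. \]
Non-negativity of $\delta_j$ is immediate. For $\eta_i$, the key estimate is $D_\ell\le s_i\pi_\ell$ whenever $i$ lies in level $\ell$, which follows because $M_{\ell'}\le s_{n_\ell}\le s_i$ for every $\ell'>\ell$, whence $D_\ell\le s_i(\pi_\ell-\pi_L)\le s_i\pi_\ell$. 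When $(i,j)$ share a level $\ell$ one has $\eta_i+\delta_j=s_i\pi_\ell=s_i\pi_{\ell(j)}$, yielding~\eqref{pf-related-kkt1} with equality; this simultaneously settles~\eqref{pf-related-kkt4}, since the algorithm only places $x_{ij}>0$ for same-level pairs.

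The main task--and the main obstacle--is verifying the cross-level form of~\eqref{pf-related-kkt1}, which reduces to
\[ D_{\ell(j)}-D_{\ell(i)} \ \ge\ s_i\bigl(\pi_{\ell(j)}-\pi_{\ell(i)}\bigr). \]
Expanding both sides through the telescoping formula for $D$ turns this into a termwise comparison whose decisive ingredient is the following monotonicity of largest in-level speeds: $M_{\ell'}\ge s_i$ for every $\ell'\le\ell(i)$ (because then $s_{n_{\ell'-1}+1}\ge s_{n_{\ell(i)-1}+1}\ge s_i$), while $M_{\ell'}\le s_i$ for every $\ell'>\ell(i)$. When $\ell(j)<\ell(i)$ both sides are positive and every telescoping summand carries a non-negative factor $M_{\ell'}-s_i$; when $\ell(j)>\ell(i)$ both sides are negative and every summand carries a non-positive factor $M_{\ell'}-s_i$, which reverses the direction in the required way. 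Once this sign-tracking is established, all KKT conditions hold and the output of \cref{alg:pf-related-combinatorial} is optimal for $(\CP_Q)$.
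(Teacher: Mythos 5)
Your proposal is correct and essentially reproduces the paper's proof: your level-aggregated multipliers $\delta_j = D_{\ell(j)}$ and $\eta_i = s_i\pi_{\ell(i)} - D_{\ell(i)}$ coincide exactly with the paper's $\eta_i = \pi_n s_n + \sum_{k=i}^{n-1}\pi_k(s_k - s_{k+1})$ and $\delta_j = \pi_j s_j - \eta_j$ once one telescopes the latter and notes that consecutive differences $\pi_k - \pi_{k-1}$ vanish within a level. The paper carries out the same case analysis ($i\le j$ vs.\ $i>j$) at the machine-index granularity while you group by level; a small bonus in your version is the explicit mediant proof of strict monotonicity of level prices, which the paper only records as an unproved observation in its non-strict form.
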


Note that this theorem also implies that the above mentioned non-clairvoyant algorithms are all special cases of \pf.

The remaining section is dedicated to the proof of Theorem~\ref{thm:pf-related-comb}.
Let $L_1,\ldots,L_r$ be the partition of $1,\ldots,n$ produced by the iterations of \Cref{alg:pf-related-combinatorial}. We call each $1\le \ell \le r$ a \emph{level} and denote by
\[
    \pi(L_\ell) = \frac{\sum_{j \in L_\ell} w_j}{\sum_{i \in L_\ell} s_i}
\]
the \emph{price} of level $\ell$. Further, we write $\pi_j = \pi(L_\ell)$ for every $j \in L_\ell$.
Note that for every job~$j$ it holds that $\pi_j = w_j / y_j$.

\begin{lemma}[\cite{FeldmanMNP08}]\label{lemma:compute-allocation}
The allocation in \Cref{pf:line:computerate} always exists, and can be computed efficiently.
\end{lemma}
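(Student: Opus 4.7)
The plan is to recognize that the existence of the allocation reduces to a classical fact about majorization. Fix the level $L_\ell = \{k, k+1, \ldots, h\}$ produced by an iteration of \Cref{alg:pf-related-combinatorial} and let $\pi_\ell \coloneqq \pi(L_\ell)$, so that the prescribed rate for $j \in L_\ell$ is $y_j = w_j / \pi_\ell$. I first note that any feasible allocation $(x_{ij})_{i,j \in L_\ell}$ must in fact be doubly stochastic: the total demand $\sum_{j=k}^h y_j = W_\ell/\pi_\ell$ equals the total supply $\sum_{i=k}^h s_i = S_\ell$ by definition of $\pi_\ell$, so the machine constraints $\sum_j x_{ij} \leq 1$ and the job constraints $\sum_i x_{ij} \leq 1$ must all hold with equality.

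Next I verify that the rate vector $(y_k, \ldots, y_h)$ is majorized by the speed vector $(s_k, \ldots, s_h)$. Both vectors are already sorted in decreasing order: the $y_j$ inherit their ordering from the $w_j$, and the $s_i$ are decreasing by assumption. For any strict prefix $h' \in \{k, \ldots, h-1\}$, the maximality of $h$ in \Cref{pf:line:findlevel} implies $\sum_{j=k}^{h'} w_j \leq \pi_\ell \sum_{i=k}^{h'} s_i$, which upon dividing by $\pi_\ell$ becomes $\sum_{j=k}^{h'} y_j \leq \sum_{i=k}^{h'} s_i$. Together with equality at $h' = h$, this is exactly the majorization $(y_k, \ldots, y_h) \prec (s_k, \ldots, s_h)$.

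By the classical theorem of Hardy, Littlewood, P\'olya, and Rado, majorization is equivalent to the existence of a doubly stochastic matrix $X \in \RR^{L_\ell \times L_\ell}$ satisfying $y_j = \sum_i s_i x_{ij}$ for every $j \in L_\ell$; this $X$ is precisely the sought allocation. For efficient, strongly polynomial computation, one can construct $X$ via the standard Birkhoff-von Neumann procedure, extracting in each iteration a permutation matrix from the support of the current residual (which is guaranteed to exist by Hall's theorem applied to the bipartite support graph of a doubly stochastic matrix realizing the residual majorization) and assigning it the largest weight that keeps all entries nonnegative. Each step zeroes out at least one entry, giving an $O(|L_\ell|^2)$-iteration procedure. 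The main technical point is showing that after each extraction the residual problem remains majorization-feasible, but this is a standard invariant, and in any case an explicit efficient construction is already provided in \cite{FeldmanMNP08}.
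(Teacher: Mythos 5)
Your proof is correct, but it takes a genuinely different route to the same conclusion. Both arguments derive the identical prefix-sum inequalities $\sum_{i=k}^{h'} s_i \ge \sum_{j=k}^{h'} y_j$ for all $k \le h' \le h$ (with equality at $h'=h$) from the maximization in \Cref{pf:line:findlevel}. The paper then cites \textcite{HorvathLS77}, who prove directly that these conditions guarantee a feasible preemptive schedule of jobs with sizes $y_k,\dots,y_h$ on uniform machines with speeds $s_k,\dots,s_h$ within makespan~$1$, and who supply the level algorithm for constructing it (with \cite{GonzalezS78} for a faster version). You instead recognize the prefix-sum conditions (together with the demand-equals-supply observation, which forces the allocation to be doubly stochastic) as the majorization $(y_k,\dots,y_h)\prec(s_k,\dots,s_h)$, and invoke the Hardy--Littlewood--P\'olya characterization to obtain a doubly stochastic matrix $X$ with $y_j=\sum_i s_i x_{ij}$, which is precisely the required fractional allocation. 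These are closely related but distinct classical results; the paper's route hands you the preemptive schedule immediately, while yours produces the abstract fractional assignment and leaves the realization as a schedule to the standard Birkhoff-type decomposition that the paper already discusses in Section~2.

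One small inaccuracy in the computational part: the Birkhoff--von Neumann procedure decomposes a doubly stochastic matrix you already have; it does not by itself produce one realizing $y\prec s$. The standard strongly polynomial construction of $D$ with $y=Ds$ is via a sequence of at most $|L_\ell|-1$ T-transforms (Pigou--Dalton transfers), each a convex combination of the identity and a transposition; the doubly stochastic matrix is their product. Since you explicitly defer to \cite{FeldmanMNP08} for the explicit construction, this is an expository slip rather than a gap in the argument.
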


\begin{proof}
  Consider a level~$L=\{k,\ldots,h\}$ with price $\pi \coloneqq \pi(L)$ computed in \Cref{pf:line:findlevel}. Thus, for every $k \leq h' \leq h$ it holds that $\pi \sum_{i=k}^{h'} s_i \geq \sum_{j=k}^{h'} w_j$. Since $\pi = w_j / y_j$ for every $j \in L$, we conclude that for every $k \leq h' \leq h$ we have
  \[
      \sum_{i=k}^{h'} s_i \geq \sum_{j=k}^{h'} \frac{w_j}{\pi} = \sum_{j=k}^{h'} y_j \ .
  \]
  \Textcite{HorvathLS77} show that these conditions suffice to ensure that jobs of length~$y_k,\dotsc,y_h$ can be preemptively scheduling on machines with speeds~$s_k,\ldots,s_h$ within a makespan of $1$. Moreover, such a schedule can be computed efficiently using the \emph{level algorithm}. 
  This concludes the proof of the lemma.
\end{proof}

\Textcite{GonzalezS78} give an even faster algorithm to compute the allocations in~\cref{pf:line:computerate}. We can derive two immediate observations from the description of \Cref{alg:pf-related-combinatorial}.

\begin{observation}\label{observation:level-allocation}
    If $j \in L_\ell$, then $x_{ij} = 0$ for all $i \in [n] \setminus L_\ell$.
\end{observation}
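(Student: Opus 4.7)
The plan is to derive the observation directly from inspection of \cref{alg:pf-related-combinatorial}. The key point is that the algorithm is structured so that each iteration handles one level in isolation: once the index range $\{k,\dots,h\}$ is selected in \cref{pf:line:findlevel}, only machines and jobs inside this range are touched in \cref{pf:line:computerate}, and the pointer $k$ then jumps to $h+1$ in \cref{pf:line:nextlevel} so that these jobs are never revisited.

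Concretely, I would argue as follows. Fix $j \in L_\ell$ and let $L_\ell = \{k_\ell,\dots,h_\ell\}$. The variable $x_{ij}$ receives a positive value only in the iteration in which $j$ belongs to the current active range (that is, when the loop variable equals $k_\ell$). By \cref{lemma:compute-allocation}, the allocation constructed in that iteration schedules the jobs of $L_\ell$ exclusively on the machines of $L_\ell$, i.e.\ the support of $x_{\cdot j}$ is contained in $L_\ell$. Since in every subsequent iteration the loop variable is strictly greater than $h_\ell$, the variables $x_{ij}$ for $i \notin L_\ell$ are never modified and remain at their initial value $0$.

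Because this is essentially a bookkeeping statement about the algorithm, I do not expect any real obstacle; the only thing to be careful about is to justify that \cref{lemma:compute-allocation} indeed produces an allocation supported on $L_\ell \times L_\ell$, which is immediate from its statement (the allocation schedules jobs $k,\dots,h$ on machines $k,\dots,h$). No KKT reasoning is needed at this point -- the observation is purely combinatorial and will be used later to verify complementary slackness together with the Lagrange multipliers.
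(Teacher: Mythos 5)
Your proposal is correct and matches the paper's reasoning exactly: the paper simply states this as an immediate consequence of the algorithm's description (each iteration touches only the current index range $\{k,\dots,h\}$ and then advances past it), and your bookkeeping argument spells out precisely why.
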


\begin{observation}\label{lemma:level-monotone}
    It holds that $\pi_1 \geq \dotsc \geq \pi_n$.
\end{observation}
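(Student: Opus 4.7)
The plan is to reduce the observation to monotonicity of the \emph{level} prices---$\pi(L_1) \ge \pi(L_2) \ge \cdots \ge \pi(L_r)$---since by definition $\pi_j = \pi(L_\ell)$ for every $j \in L_\ell$, and the levels partition $\{1,\ldots,n\}$ into consecutive index blocks. I would then prove this for each consecutive pair of levels $L_\ell = \{k,\ldots,h\}$ and $L_{\ell+1} = \{h+1,\ldots,h'\}$, relying only on the maximality property exploited in \Cref{pf:line:findlevel}: the index $h$ was selected so as to maximize $R(h'') \coloneqq \frac{\sum_{j=k}^{h''} w_j}{\sum_{i=k}^{h''} s_i}$ over all $h'' \in \{k,\ldots,n\}$, so in particular $R(h') \le R(h) = \pi(L_\ell)$.

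The key step is a standard mediant manipulation. Evaluating $R(h')$ splits its numerator and denominator into the $L_\ell$-parts and the $L_{\ell+1}$-parts, so $R(h')$ is a convex combination of $\pi(L_\ell)$ and $\pi(L_{\ell+1})$ with positive weights $\sum_{i=k}^{h} s_i$ and $\sum_{i=h+1}^{h'} s_i$. Hence $R(h') \le \pi(L_\ell)$ immediately forces $\pi(L_{\ell+1}) \le \pi(L_\ell)$. Equivalently, one can rewrite the inequality $R(h') \le \pi(L_\ell)$ as $\sum_{j=k}^{h'} w_j \le \pi(L_\ell)\sum_{i=k}^{h'} s_i$ and subtract the identity $\sum_{j=k}^{h} w_j = \pi(L_\ell)\sum_{i=k}^{h} s_i$ to obtain $\sum_{j \in L_{\ell+1}} w_j \le \pi(L_\ell)\sum_{i \in L_{\ell+1}} s_i$, i.e., $\pi(L_{\ell+1}) \le \pi(L_\ell)$. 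Iterating over $\ell$ finishes the proof.

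There is essentially no obstacle. Two minor points to keep clean are: (i) the prefix $\{k,\ldots,h'\}$ is a valid candidate in the maximization for level $\ell$, which is clear because $k \le h' \le n$, and (ii) the denominators $\sum_{i \in L_\ell} s_i$ are positive so that the prices are well-defined, which holds because the algorithm operates under the reduction $m=n$ and the assumption $s_1 \ge \cdots \ge s_n$ together with the fact that a level containing only speed-zero machines could only arise as a terminal suffix, which is irrelevant for comparisons among active levels. Note that the conclusion follows from just the fact that $h$ is a maximizer; the ``largest maximizer'' tie-breaking rule in \Cref{pf:line:findlevel} is not needed for the weak inequality stated in the observation (it would only be needed to obtain strict inequality between consecutive levels).
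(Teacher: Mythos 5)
Your proof is correct, and in fact the paper gives no explicit proof at all — it treats this as one of ``two immediate observations'' following from the description of the algorithm. Your mediant argument (equivalently, the subtraction form) is exactly the standard way to verify it: from the maximality of $h$ in line~\ref{pf:line:findlevel} one has $R(h') \le R(h) = \pi(L_\ell)$, and since $R(h')$ is a weighted average of $\pi(L_\ell)$ and $\pi(L_{\ell+1})$ with positive weights, this forces $\pi(L_{\ell+1}) \le \pi(L_\ell)$; chaining over $\ell$ gives the monotonicity of $\pi_1,\dotsc,\pi_n$. You are also right that the ``largest maximizer'' tie-break is not needed for the weak inequality, and the well-definedness of the prices is a non-issue for the comparisons that matter.
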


Finally, we prove Theorem~\ref{thm:pf-related-comb}.
\begin{proof}
    Let  $x_{ij}$, $i, j \in [n]$, be the allocation computed by \Cref{alg:pf-related-combinatorial}.
    Since $(\CP_Q)$ has a concave objective function and convex restrictions, we show via the sufficient KKT condition for convex programs that this allocation is an optimal solution to $(\CP_Q)$~\cite{BV2014}.
    To this end, we present Lagrange multipliers such that the KKT conditions are satisfied: for every machine $i \in [n]$ we define
    \[  
        \eta_i := \pi_n s_n + \sum_{k=i}^{n-1} \pi_k \left( s_k - s_{k+1} \right) \ ,
    \]
    and for every job $j \in [n]$ we define 
    \[
        \delta_j := \pi_j s_j - \eta_j \ . 
    \] 
    
    We first verify~\eqref{pf-related-kkt1}. Fix a job $j \in [n]$ and a machine $i \in [n]$. We distinguish two cases.
    \begin{description}
    \item[Case $i \le j$.]
    Using Lemma~\ref{lemma:level-monotone}, we have
    \begin{align}   
        \frac{w_j \cdot s_i}{y_j} = \pi_j s_i 
        = \pi_j \biggl(s_j + \sum_{k=i}^{j-1} (s_k - s_{k+1}) \biggr) 
        &\leq \pi_j s_j + \sum_{k=i}^{j-1} \pi_k(s_k - s_{k+1}) \label{ineq:kkt-related} \\
        &= \pi_js_j + \eta_i - \eta_j
        = \eta_i + \delta_j \ .\notag
    \end{align}
    \item[Case $i > j$.] 
    Using Lemma~\ref{lemma:level-monotone}, we have
    \begin{align*}   
        \frac{w_j \cdot s_i}{y_j} =  \pi_j s_i 
        =  \pi_j\biggl(s_j + \sum_{k=j}^{i-1} (s_{k+1} - s_{k}) \biggr) 
        &\leq  \pi_js_j + \sum_{k=j}^{i-1} \pi_k(s_{k+1} - s_{k})  \\
        &= \pi_js_j - (\eta_j - \eta_i)
         = \eta_i + \delta_j \ .
    \end{align*}
\end{description}

Next, we verify \labelcref{pf-related-kkt2,pf-related-kkt3,pf-related-kkt4}.
Since $n=m$ and by the definition of the algorithm, it is clear that every job and every machine is fully allocated, hence \labelcref{pf-related-kkt2,pf-related-kkt3} follow. For~\eqref{pf-related-kkt4}, note that this follows from Observation~\ref{observation:level-allocation} whenever $i$ and $j$ are in different levels. Otherwise, if $i, j \in L_\ell$, the inequality in \eqref{ineq:kkt-related} holds with equation because in that case $\pi_j = \pi(L_\ell) = \pi_k$ for $k=i,\dotsc,j$.

We finally check the non-negativity \eqref{pf-related-kkt5}. Note that it suffices to show that $0 \leq \eta_i \leq \pi_i s_i$.
To see this, observe that for every machine $i$ we have
\[
    \eta_i = \pi_n s_n + \sum_{k=i}^{n-1} \pi_k \underbrace{ \left( s_k - s_{k+1} \right) }_{\geq 0} \geq 0 \ ,
\]
and, by using Lemma~\ref{lemma:level-monotone}, we have
\[
    \eta_i = \pi_n s_n + \sum_{k=i}^{n-1} \pi_k \left( s_k - s_{k+1} \right) = \pi_i s_i + \sum_{k=i+1}^{n} s_k \underbrace{\left( \pi_{k} - \pi_{k-1} \right)}_{\leq 0} \leq \pi_i s_i \ .
\]
This completes the proof.
\end{proof}

\Cref{alg:pf-related-combinatorial} can be executed in time $O(n^2)$. Since we do this at every release and completion time, the total running time is in $O(n^3)$.

\section{Tight Dual Fitting for Weighted Round-Robin}\label{app:dual-fitting}

In this section we present a tight analysis of the Weighted-Round-Robin algorithm (WRR) via dual fitting.
Let $[n]$ be the set of jobs. We assume that $w_1/p_1 \geq \ldots \geq w_n / p_n$ and that all jobs are available at time $0$.
Then, WRR processes at any time $t$ every unfinished job $j$ at rate $y_j(t) = w_j / \sum_{j' \in U(t)} w_{j'}$.
It is not hard to derive that the completion time of job $j$ in the schedule produced by WRR is equal to
\[
  C_j = \sum_{k=1}^{j-1} p_k + \frac{p_j}{w_j} \sum_{k=j}^n w_k \ .    
\]

We prove the following theorem. 

\begin{theorem}\label{thm:wrr-dual-fitting}
  The objective value of Weighted Round-Robin is equal to twice the optimal objective value of $(\LP(1))$ on a single machine if all jobs are available at time $0$.
\end{theorem}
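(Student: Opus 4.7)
The plan is to compute both $\mathrm{WRR}$ and $\opt(\LP(1))$ in closed form and show that the former is exactly twice the latter. First I would expand $\sum_j w_j C_j$ using the given formula, obtaining $\sum_j w_j \sum_{k<j} p_k + \sum_j p_j \sum_{k \geq j} w_k$. Swapping summation in the second term and splitting off the diagonal produces $\sum_j w_j p_j + \sum_j p_j \sum_{k>j} w_k$; the last double sum equals the first after renaming indices, so the whole expression collapses to $\mathrm{WRR} = 2 \sum_j w_j \bigl(\sum_{k<j} p_k + \tfrac{1}{2} p_j\bigr)$.

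Next I would identify $\opt(\LP(1))$ with $\sum_j w_j \bigl(\sum_{k<j} p_k + \tfrac{1}{2} p_j\bigr)$. Setting $y^{\mathrm{WSPT}}_{jt} = 1$ for $\sum_{k<j} p_k \leq t < \sum_{k \leq j} p_k$ (using the paper's integer-time scaling) defines a feasible primal in which every job $j$ runs contiguously and so has mean busy time equal to the midpoint $\sum_{k<j} p_k + \tfrac{1}{2} p_j$ of its window, giving primal LP value $\sum_j w_j \bigl(\sum_{k<j} p_k + \tfrac{1}{2} p_j\bigr)$. To certify LP-optimality I would rewrite the objective as $\sum_t (t + \tfrac{1}{2}) \sum_j \rho_j y_{jt}$ with $\rho_j = w_j/p_j$ and invoke a standard exchange argument: whenever some feasible $y$ has $y_{j_1 t_1}, y_{j_2 t_2} > 0$ with $t_1 < t_2$ but $\rho_{j_1} < \rho_{j_2}$, shifting $\varepsilon > 0$ of work from $(j_1, t_1)$ to $(j_1, t_2)$ and from $(j_2, t_2)$ to $(j_2, t_1)$ preserves the per-job completion and per-slot capacity constraints and strictly decreases the objective by $\varepsilon (\rho_{j_2} - \rho_{j_1})(t_2 - t_1)$. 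Hence any optimum respects the WSPT order and attains exactly the WSPT value.

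Combining the two computations yields $\mathrm{WRR} = 2 \cdot \opt(\LP(1))$, as claimed. The step I expect to require the most care is the LP-optimality certification: the exchange argument must be iterated until no improving swap remains, and it is the integer-time discretization (assumed throughout the paper by scaling) that lets one avoid continuous-time measure-theoretic bookkeeping. An alternative more in line with the section's dual-fitting theme would be to construct a tight dual directly from the complementary slackness conditions $\alpha_j = p_j \beta_t + w_j(t + \tfrac{1}{2})$ forced by $y^{\mathrm{WSPT}}_{jt} > 0$, which yields a $\beta_t$ that decreases linearly with slope $-\rho_j$ within each WSPT interval; however, making the jumps across interval boundaries simultaneously satisfy the $j' \neq \sigma(t)$ constraints (especially in the presence of near-ties in $\rho$) is delicate, so I would prefer the exchange-based route and invoke strong LP duality at the end.
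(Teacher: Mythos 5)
Your proof is correct, and it takes a genuinely different route from the paper's. The paper proves the theorem statement by simply citing the known fact that WRR's total weighted completion time equals twice the optimal total weighted mean busy time (attributed to \cite{KC03,JagerSSW24}), and then devotes the section to what it considers the actual novelty: constructing an explicit feasible dual solution to $(\DLP(1))$ whose objective value is exactly $\tfrac{1}{2}\alg$, expressed entirely in terms of WRR's schedule. You instead give a self-contained \emph{primal} argument: you expand $C_j = \sum_{k<j}p_k + \tfrac{p_j}{w_j}\sum_{k\ge j}w_k$, swap summation indices to collapse $\sum_j w_j C_j$ into $2\sum_j w_j\bigl(\sum_{k<j}p_k + \tfrac{1}{2}p_j\bigr)$, recognize the bracketed expression as the WSPT mean busy time under the integer-time scaling, and certify LP-optimality of WSPT by the standard pairwise exchange (which correctly preserves both the per-job and per-slot constraints and strictly improves the objective when out-of-order mass exists). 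That derivation is elementary and avoids duality entirely, but it also delivers less: it proves the equality without producing the closed-form dual, which is what the section is really after (as a prototype for the more complex dual constructions used elsewhere in the paper). Your closing remark about the complementary-slackness route is exactly the paper's chosen path; your hesitation about the boundary jumps of $\bar\beta_t$ is reasonable, but the paper resolves it by a direct two-case verification ($j\le i$ vs.\ $j>i$) that relies only on the WSPT ordering $w_jp_k\ge w_kp_j$ for $k\ge j$, so it is less delicate than you feared.
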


It is known that the total weighted completion time of a schedule produced by WRR is equal to twice the optimal total weighted mean busy time~\cite{KC03,JagerSSW24}. Since the objective function of $(\LP(1))$ corresponds to the total weighted mean busy time, the statement of Theorem~\ref{thm:wrr-dual-fitting} already follows.

Our contribution is to show how an optimal solution to $(\DLP(1))$ can be expressed in terms of WRR's schedule. This illustrates that we need a substantially more complex dual setup and sufficient knowledge on optimal LP solutions to improve over the factor of 4, which can be achieved by the simple dual setup (cf.\ \Cref{sec:monotone}). In particular, in order to craft an optimal dual solution, it is necessary (due to complementary slackness) to respect that scheduling jobs in WSPT order minimizes the total weighted mean busy time on a single machine~\cite{Goe96}, and, thus, optimally solves $(\LP(1))$.

Let $T = \sum_{j = 1}^n p_j$ be the makespan of any non-idling schedule.
For convenience we restate $(\DLP(1))$ for the special case of single machine scheduling:
{
 \renewcommand{\arraystretch}{1.9}
 \[\begin{array}{rr>{\displaystyle}rcl>{\quad}l}
  (\DLP(1))& \operatorname{maximize} &\multicolumn{3}{l}{\displaystyle\sum_{j \in J} \dualVa_j - \sum_{t = 0}^{T-1} \dualVb_{t}} \\
  &\text{subject to} &\frac{\dualVa_j}{p_j} - \frac{w_j}{p_j} \Bigl(t + \frac{1}{2} \Bigr) &\le& \dualVb_{t} &\forall j \in J,\ \forall t \geq 0 \\
  &&\dualVa_j, \dualVb_{t} &\ge& 0 &\forall j \in J,\ \forall t \geq 0
 \end{array}\]
}%

For the analysis we assume by scaling that all processing requirements and weights are integers.
Define $q_j \coloneq \sum_{k=1}^j p_k$ for every job $j$ and set $q_0 \coloneq 0$. Note that~$q_{j} - q_{j-1} = p_j$. 
We craft the following dual solution:
\begin{itemize}
    \item $\dualSa_j \coloneq w_j C_j$ for every job $j \in [n]$, and
    \item for every $t \in \{0,\ldots,T-1\}$ let $i \in [n]$ such that $q_{i-1} \leq t < q_i$ we set 
    \[
        \dualSb_t \coloneq \frac{w_i}{p_i} \biggl( C_i - \Bigl(t + \frac{1}{2} \Bigr) \biggr) = \sum_{k=i}^n w_k - \frac{w_i}{p_i} \biggl( t + \frac{1}{2} - \sum_{k=1}^{i-1} p_k \biggr) \ .
    \]
\end{itemize}

Note that $\sum_j \dualSa_j = \alg$. Combined with the following lemma, we conclude that the dual objective value of the dual solution equal to $\frac{1}{2} \cdot \alg$.

\begin{lemma}
    It holds that $\sum_{t = 0}^{T-1} \dualSb_t = \frac{1}{2} \cdot \alg$.
\end{lemma}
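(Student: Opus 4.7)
The plan is to split the sum $\sum_{t=0}^{T-1}\dualSb_t$ into blocks indexed by $i\in[n]$, where within each block $t$ ranges over the $p_i$ consecutive integers $q_{i-1},q_{i-1}+1,\dotsc,q_i-1$ on which the formula $\dualSb_t=\frac{w_i}{p_i}(C_i-(t+\tfrac12))$ applies. For fixed $i$ this inner sum is an arithmetic progression, so evaluating $\sum_{t=q_{i-1}}^{q_i-1}(t+\tfrac12)=\tfrac{(q_{i-1}+q_i)p_i}{2}$ gives
\[
 \sum_{t=q_{i-1}}^{q_i-1}\dualSb_t
 \;=\;\frac{w_i}{p_i}\Bigl(p_i C_i-\tfrac{(q_{i-1}+q_i)p_i}{2}\Bigr)
 \;=\;w_i\Bigl(C_i-q_{i-1}-\tfrac{p_i}{2}\Bigr).
\]

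Next I would substitute $C_i-q_{i-1}=\frac{p_i}{w_i}\sum_{k=i}^{n}w_k$ (which is exactly the WRR completion-time formula stated above), reducing the $i$-th block to $p_i\sum_{k=i}^{n}w_k-\tfrac12 w_i p_i$. Summing over $i$ and swapping the order of summation in the first term yields
\[
 \sum_{t=0}^{T-1}\dualSb_t
 \;=\;\sum_{k=1}^{n}w_k\sum_{i=1}^{k}p_i-\tfrac12\sum_{i=1}^{n}w_i p_i
 \;=\;\sum_{k=1}^{n}w_k q_k-\tfrac12\sum_{i=1}^{n}w_i p_i.
\]

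Finally, I would expand $\alg=\sum_k w_k C_k=\sum_k w_k q_{k-1}+\sum_k p_k\sum_{j=k}^{n}w_j$; the same swap of summation turns the second term into $\sum_k w_k q_k$, so $\alg=\sum_k w_k(q_{k-1}+q_k)=2\sum_k w_k q_k-\sum_k w_k p_k$. Rearranging gives $\sum_k w_k q_k=\tfrac12\alg+\tfrac12\sum_k w_k p_k$, which when plugged into the previous display cancels the $\tfrac12\sum_i w_i p_i$ term and leaves exactly $\tfrac12\alg$, as claimed. There is no real obstacle here—the result is a routine telescoping/summation exchange—so the only care required is keeping the two swaps of double sums consistent and noting that the boundary term $q_0=0$ makes the shift between $q_{k-1}$ and $q_k$ clean.
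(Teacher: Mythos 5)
Your proof is correct and follows essentially the same approach as the paper: decompose the sum into the $n$ blocks $t\in\{q_{i-1},\dotsc,q_i-1\}$, evaluate each arithmetic progression, and plug in the WRR completion-time formula. The only difference is cosmetic—you detour through the closed form $\sum_k w_k q_k - \tfrac12\sum_k w_k p_k$ and a separate rewriting of $\alg$, whereas the paper massages the block sums directly back into $\tfrac12\sum_j w_j C_j$—but the underlying computation is identical.
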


\begin{proof} 
    \begin{align*}
        \sum_{t=0}^{T-1} \dualSb_t 
        &= \sum_{j=1}^{n} \sum_{t = q_{j-1}}^{q_j - 1}  \frac{w_i}{p_i} \biggl( C_i - \Bigl(t + \frac{1}{2} \Bigr) \biggr) 
        = \sum_{j=1}^{n} \sum_{t = q_{j-1}}^{q_j - 1}  \sum_{k=j}^n w_k - \frac{w_j}{p_j} \biggl(t + \frac{1}{2} - \sum_{k=1}^{j-1} p_k \biggr) \\
        &= \sum_{j=1}^{n} \biggl(p_j \sum_{k=j}^n w_k \biggr) - \frac{w_j}{p_j} \sum_{t = 0}^{p_j - 1} \biggl( t + \frac{1}{2} \biggr) 
        = \sum_{j=1}^{n} \biggl(p_j \sum_{k=j}^n w_k \biggr) - \frac{w_j}{p_j} \cdot \biggl( \frac{p_j (p_j - 1)}{2} + \frac{p_j}{2} \biggr) \\ 
        &= \sum_{j=1}^{n} p_j \biggl(\frac{w_j}{2} + \sum_{k=j+1}^n w_k \biggr) 
        = \frac{1}{2} \sum_{j=1}^{n} p_j \bigg(\sum_{k=j}^n w_k + \sum_{k=j+1}^n w_k \bigg) \\
        &= \frac{1}{2} \sum_{j=1}^{n} p_j \bigg(\sum_{k=j}^n w_k \bigg) + w_j \sum_{k=1}^{j-1} p_k 
        = \frac{1}{2} \sum_{j=1}^{n} w_j  \bigg(\sum_{k=1}^{j-1} p_k + \frac{p_j}{w_j} \sum_{k=j}^n w_k \bigg) = \frac{1}{2} \alg \ .
    \end{align*}
\end{proof}

\begin{lemma}
    The dual solution $(\dualSa_j)_j$ and $(\dualSb_t)_t$ is feasible for $(\DLP(1))$.
\end{lemma}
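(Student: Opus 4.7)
The plan is to verify (i) non-negativity of the dual variables and (ii) the generating constraint of $(\DLP(1))$. Non-negativity of $\dualSa_j = w_j C_j$ is immediate. For $\dualSb_t$, fix $t \in \{0, \ldots, T-1\}$ and let $i$ be the unique index with $q_{i-1} \leq t < q_i$. Since $t$ and $q_i$ are integers, $t + \tfrac{1}{2} < q_i$, and $q_i \leq C_i$ (as $\sum_{k=i}^n w_k \geq w_i$). Hence $\dualSb_t = \tfrac{w_i}{p_i}(C_i - t - \tfrac{1}{2}) > 0$.

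For the main constraint, fix $t$ and define, for each job $j$,
\[
 g(j) \;:=\; \frac{w_j}{p_j}\Bigl(C_j - t - \tfrac{1}{2}\Bigr) \;=\; \frac{\dualSa_j}{p_j} - \frac{w_j}{p_j}\Bigl(t + \tfrac{1}{2}\Bigr),
\]
so that the constraint to verify reads $g(j) \leq \dualSb_t = g(i)$ for every $j$. I would substitute the explicit expression $C_j = q_{j-1} + \tfrac{p_j}{w_j}\sum_{k=j}^n w_k$ into $g(j)$ and, using $p_j = q_j - q_{j-1}$, derive the telescoping identity
\[
 g(j+1) - g(j) \;=\; \Bigl(q_j - t - \tfrac{1}{2}\Bigr)\Bigl(\tfrac{w_{j+1}}{p_{j+1}} - \tfrac{w_j}{p_j}\Bigr).
\]
The WSPT ordering $w_1/p_1 \geq \cdots \geq w_n/p_n$ makes the second factor non-positive. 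Since $t$ and the $q_k$ are integers, the first factor is non-positive for $j < i$ (because $q_j \leq q_{i-1} \leq t$) and non-negative for $j \geq i$ (because $q_j \geq q_i \geq t+1$). Therefore the increments $g(j+1) - g(j)$ are $\geq 0$ for $j < i$ and $\leq 0$ for $j \geq i$, so $g$ is unimodal with maximum at $j = i$. This yields $g(j) \leq g(i) = \dualSb_t$ for every $j$, establishing dual feasibility.

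The main obstacle is arriving at the clean form of $g(j+1) - g(j)$; once that identity is in hand, the remainder reduces to tracking the sign of $q_j - t - \tfrac{1}{2}$ and invoking the WSPT ordering. It is also worth noting that integrality of $t$ (afforded by the scaling assumption on $p_j$ and $w_j$) is what lets the two sign regimes meet precisely at the critical index $i$. This sharpness is consistent with the fact, used implicitly, that WSPT is optimal for the mean busy time LP~\cite{Goe96} --- in the present dual setup the constraint is tight exactly for the job whose WSPT processing interval contains~$t$.
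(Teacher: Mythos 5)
Your proof is correct, and it reaches the same conclusion by a genuinely different — and arguably more transparent — route than the paper. The paper fixes $t$ (hence $i$), splits into cases $j\le i$ and $j>i$, and bounds $\dualSa_j - w_j(t+\tfrac12)$ above by $\dualSb_t\, p_j$ through a direct algebraic manipulation that applies the WSPT inequality $w_j p_k \ge w_k p_j$ (for $k\ge j$) termwise to the intermediate sums. You instead set $g(j) = (\dualSa_j - w_j(t+\tfrac12))/p_j$, note $\dualSb_t = g(i)$, and prove $g(j)\le g(i)$ by establishing unimodality via the telescoping identity
\[
 g(j+1) - g(j) = \Bigl(q_j - t - \tfrac12\Bigr)\Bigl(\tfrac{w_{j+1}}{p_{j+1}} - \tfrac{w_j}{p_j}\Bigr),
\]
which I have verified (using $q_{j-1} = q_j - p_j$). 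The sign analysis you give is correct: the second factor is $\le 0$ by WSPT, and the first factor changes sign exactly at $j = i$ by integrality of $t$ and the $q_k$. Both proofs thus rest on the same two ingredients (WSPT ordering and integrality) and the case split you implicitly perform mirrors the paper's, but your packaging isolates a clean discrete-derivative identity and makes plain why the constraint is tight precisely at $j=i$ — which the paper only remarks on informally when it invokes optimality of WSPT for the mean busy time LP. Your non-negativity check for $\dualSb_t$ (via $q_i \le C_i$ and $t+\tfrac12 < q_i$) is also correct.
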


\begin{proof}
    We prove that for every $j \in [n]$ and $q_{i-1} \leq t \leq q_{i} - 1$ the constraint of $(\DLP(1))$ is satisfied. Let $\tau \in [0,1)$ such that $t + \frac{1}{2} = q_{i-1} +  \tau \cdot p_i$. 
    We distinguish two cases. 
    
    If $j \leq i$, then the left side of the dual constraint is equal to
    \begin{align*}
        \dualSa_j - w_j \Bigl(t + \frac{1}{2} \Bigr) &= w_j \sum_{k=1}^{j-1} p_k + p_j \sum_{k=j}^n w_k - w_j \Bigl(t + \frac{1}{2} \Bigr) \\
        &= w_j \sum_{k=1}^{j-1} p_k + p_j \sum_{k=j}^n w_k - w_j \sum_{k=1}^{i-1} p_k -  \tau \cdot w_j \cdot p_i \\
        &= - w_j \sum_{k=j}^{i-1} p_k + p_j \sum_{k=j}^n w_k -  \tau \cdot w_j \cdot p_i \\
        &= - w_j \sum_{k=j}^{i-1} p_k + p_j \sum_{k=j}^{i-1} w_k + p_j \sum_{k=i}^n w_k  -  \tau \cdot w_j \cdot p_i \ .
    \end{align*}
    Using the fact that $w_j p_k \geq w_k p_j$ for every $k \geq j$, we can bound this expression from above by
    \begin{align*}
        &- w_j \sum_{k=j}^{i-1} p_k + w_j \sum_{k=j}^{i-1} p_k + p_j \sum_{k=i}^n w_k  -  \tau \cdot w_i \cdot p_j  
        = \biggl(  \sum_{k=i}^n w_k  -  \tau \cdot w_i  \biggr) \cdot p_j \\ 
        &= \bigg( \sum_{k=i}^n w_k  - \frac{w_i}{p_i} (\tau \cdot p_i ) \bigg) \cdot p_j  
        = \bigg( \sum_{k=i}^n w_k  - \frac{w_i}{p_i} \Bigl( t + \frac{1}{2} - \sum_{k=1}^{i-1} p_k \Bigr)  \bigg) \cdot p_j = \dualSb_t \cdot p_j \ ,
    \end{align*}
    giving the right side of the constraint.
    
    Similarly, we have for the case $j > i$ that
    \begin{align*}
        \dualSa_j - w_j \Bigl(t + \frac{1}{2} \Bigr) &= w_j \sum_{k=1}^{j-1} p_k + p_j \sum_{k=j}^n w_k - w_j \Bigl(t + \frac{1}{2} \Bigr) \\
        &= w_j \sum_{k=1}^{j-1} p_k + p_j \sum_{k=j}^n w_k - w_j \sum_{k=1}^{i-1} p_k -  \tau \cdot w_j \cdot p_i \\
        &= w_j \sum_{k=i}^{j-1} p_k + p_j \sum_{k=j}^n w_k -  \tau \cdot w_j \cdot p_i \\
        &= w_j \sum_{k=i+1}^{j-1} p_k - p_j \sum_{k=i}^{j-1} w_k + p_j \sum_{k=i}^n w_k + (1- \tau) \cdot w_j \cdot p_i \ .
    \end{align*}
    Using the fact that $w_k p_j \geq w_j p_k$ for every $k < j$, we can bound this expression from above by
    \begin{align*}
        &p_j \sum_{k=i+1}^{j-1} w_k - p_j \sum_{k=i}^{j-1} w_k + p_j \sum_{k=i}^n w_k + (1- \tau) \cdot w_i \cdot p_j 
        = p_j \sum_{k=i}^n w_k - \tau \cdot w_i \cdot p_j \\
        &= \bigg( \sum_{k=i}^n w_k  - \frac{w_i}{p_i} (\tau \cdot p_i )  \bigg) \cdot p_j 
        = \bigg( \sum_{k=i}^n w_k  - \frac{w_i}{p_i} \Bigl( t + \frac{1}{2} - \sum_{k=1}^{i-1} p_k \Bigr)  \bigg) \cdot p_j = \dualSb_t \cdot p_j \ ,
    \end{align*}
    which concludes the statement.
\end{proof}

Theorem~\ref{thm:wrr-dual-fitting} then follows from these two lemmas and  weak LP duality.

\section{Offline Approximation Algorithm for \psp}
\label{app:offline-psp}

\textcite{ImKM18} note that a constant-factor approximation algorithm is possible for \psp. For completeness, we present a $(2+\varepsilon)$-approximation algorithm for \psp for any $\varepsilon > 0$. 
We use a standard technique to randomly round a LP solution, which has been introduced by \textcite{SchulzS97}, and use an interval-indexed LP formulation~\cite{SchulzS02,QueyranneS02}, which can be solved in polynomial time. 

Let $\varepsilon' > 0$ and $\delta > 0$. First, we slightly shift the release date of every job $j$ to $r'_j = r_j + \delta$. 
Note that in an instance with shifted release dates $r'_j$, the optimal objective value is at most $(1+\delta)$ times the optimal objective value in the original instance assuming that $1$ is a lower bound for the first completion time. The latter can be ensured by computing the maximal possible rate per job and then scaling the instance. From now on we only consider this adjusted instance.
Let $T$ be an upper bound on the makespan of the schedule produced by the linear program below, which we can guess and verify in polynomial time using a doubling binary search. 
We now discretize the time between $\delta$ and $T$ into geometric time intervals. 
Let $L$ be the smallest integer such that $\delta(1+\varepsilon')^L \geq T$. 
For every $0 \leq \ell \leq L$ we define $\beta_\ell = \delta (1+\varepsilon')^\ell$ and for every $1 \leq \ell \leq L$ we define the interval $I_\ell = (\beta_{\ell-1}, \beta_{\ell}]$.

We now introduce an interval-indexed variant $(\LP_I)$ of $(\LP(\kappa))$ where the variable $y_{j\ell}$ indicates that job $j$ receives a processing volume equal to $y_{j\ell} \cdot |I_\ell|$ during interval $I_\ell$:
{
 \renewcommand{\arraystretch}{1.5}
 \[\begin{array}{rr>{\displaystyle}rcl>{\quad}l}
  (\LP_I)& \operatorname{minimize} &\multicolumn{3}{l}{\displaystyle\sum_{j \in J} w_{j} \sum_{\ell=1}^L \frac{y_{j\ell} \cdot |I_\ell|}{p_j} \beta_{\ell-1}}  \\
  &\text{subject to} &\sum_{\ell=1}^L y_{j\ell} \cdot |I_\ell| &\ge& p_j & \forall j \in J  \\
  && \sum_{j \in J} b_{dj} \cdot y_{j\ell} &\le& 1 &\forall d \in [D],\ \forall \ell \geq 1 \\
  && y_{j\ell}  &\ge& 0 &\forall  j \in J,\ \forall \ell \geq 1 \\ 
  && y_{j\ell}  &=& 0 &\forall  j \in J,\ \forall \ell \geq 1 \text{ s.t.\ } \beta_{\ell-1} < r'_j
 \end{array}\]
}%
To see that $(\LP_I)$ is a relaxation of \psp, fix an optimal solution. Let $C^{\opt}_j$ denote the completion time of job $j$ in this solution, and let $y^*_{j\ell} |I_\ell|$ be equal to the amount of processing which $j$ receives during interval $I_\ell$. Using $\ell^* = \argmax_\ell y^*_{j\ell} > 0$ gives that $C^{\opt}_j \geq \beta_{\ell^*-1}$, and, thus,
\[
  \sum_{\ell=1}^L \frac{y^*_{j\ell} \cdot |I_\ell|}{p_j} \beta_{\ell-1} \leq \sum_{\ell=1}^L \frac{y^*_{j\ell} \cdot |I_\ell|}{p_j} C^{\opt}_j \leq C^{\opt}_j \ , 
\] 
which implies that the optimal objective value of $(\LP_I)$ is a lower bound on the objective value of an optimal solution.

We execute the following steps to produce a feasible schedule for the \psp.
\begin{enumerate}
  \item Solve $(\LP_I)$ and obtain a solution $(y_{j\ell})_{j,\ell}$ to $(\LP_I)$.
  \item Sample $\alpha \in (0,1)$ with density function $f(x) = 2x$.
  \item Slow down the schedule $(y_{j\ell})_{j,\ell}$ by a factor of $\frac{1}{\alpha}$ and output the resulting schedule. More formally, for every $\ell \geq 1$ schedule
  each job $j \in U(\beta_{\ell-1} / \alpha)$ during time $[\beta_{\ell-1} / \alpha, \beta_{\ell} / \alpha]$ (or until it completes) at rate~$y_{j\ell}$. That is, job $j$ receives a total processing of at most $y_{j\ell} |I_\ell| / \alpha$  during time $[\beta_{\ell-1} / \alpha, \beta_{\ell} / \alpha]$.
\end{enumerate}

For every job $j$, let $C_j$ be the completion of $j$ in the schedule produced by this algorithm, and for every interval $I_{\ell}$, $1 \leq \ell \leq L$, let $\alpha_\ell = \sum_{\ell'=1}^\ell y_{j\ell'} \cdot |I_{\ell'}| / p_j$ denote the fraction of $j$ that has been completed in interval $I_{\ell}$.
Let $\ell_j(\alpha) = \argmin_{\ell \in \mathbb N} \{ \alpha_\ell\, |\, \alpha_\ell \geq \alpha \}$
and $C_j(\alpha) = \beta_{\ell_j(\alpha)-1}$.

\begin{lemma}\label{lemma:slowed-down}
For every job $j$, we have $C_j \leq (1+\varepsilon') \cdot C_j(\alpha) / \alpha$.  
\end{lemma}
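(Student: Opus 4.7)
The plan is to track the amount of processing volume job $j$ has accumulated in the slowed-down schedule at the endpoints $\beta_\ell/\alpha$ of the rescaled intervals, match this up with the fractional progress $\alpha_\ell$ in the LP solution, and then use the geometric spacing $\beta_\ell = (1+\varepsilon')\beta_{\ell-1}$ to pay for the ``round-up'' from $C_j(\alpha)$ to $C_j$. The argument is essentially just bookkeeping along the definition of the algorithm, with the single factor of $(1+\varepsilon')$ coming from the fact that $C_j(\alpha)$ is defined to be the left endpoint $\beta_{\ell_j(\alpha)-1}$ of an interval while the job is only guaranteed to finish by its right endpoint.

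More concretely, the first step is the observation that by construction of Step~3, during $[\beta_{\ell-1}/\alpha, \beta_\ell/\alpha]$ job $j$ is processed at rate $y_{j\ell}$, so it receives a processing volume of exactly $y_{j\ell}\lvert I_\ell\rvert/\alpha$ in that slowed interval (or finishes earlier). Summing over $\ell'=1,\dotsc,\ell$ yields that by time $\beta_\ell/\alpha$ the total processing received by $j$ is
\[
\frac{1}{\alpha}\sum_{\ell'=1}^\ell y_{j\ell'}\lvert I_{\ell'}\rvert \;=\; \frac{\alpha_\ell\cdot p_j}{\alpha}.
\]

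Next, I would apply this to $\ell = \ell_j(\alpha)$. By definition of $\ell_j(\alpha)$ we have $\alpha_{\ell_j(\alpha)} \ge \alpha$, so by time $\beta_{\ell_j(\alpha)}/\alpha$ job $j$ has accumulated a processing volume of at least $\alpha\cdot p_j/\alpha = p_j$ and is therefore complete. Hence
\[
C_j \;\le\; \frac{\beta_{\ell_j(\alpha)}}{\alpha} \;=\; (1+\varepsilon')\,\frac{\beta_{\ell_j(\alpha)-1}}{\alpha} \;=\; (1+\varepsilon')\,\frac{C_j(\alpha)}{\alpha},
\]
where the middle equality uses $\beta_\ell=\delta(1+\varepsilon')^\ell$.

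There is essentially no obstacle here: the only subtlety is making sure the accounting is consistent with the convention that $y_{j\ell}=0$ whenever $\beta_{\ell-1}<r'_j$, which guarantees that the schedule defined in Step~3 respects the release dates and hence that the summation above is valid for every job $j$. Everything else is a direct consequence of the definitions of $\beta_\ell$, $\alpha_\ell$, and $\ell_j(\alpha)$.
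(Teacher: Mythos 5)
Your proposal is correct and follows essentially the same route as the paper's proof. Both arguments hinge on the observation that by time $\beta_{\ell_j(\alpha)}/\alpha$ the job has accumulated at least $\alpha_{\ell_j(\alpha)} p_j / \alpha \geq p_j$ of processing, hence $C_j \leq \beta_{\ell_j(\alpha)}/\alpha$, and then on the geometric spacing $\beta_\ell = (1+\varepsilon')\beta_{\ell-1}$ to convert the right endpoint $\beta_{\ell_j(\alpha)}$ into $(1+\varepsilon')C_j(\alpha)$; the paper phrases this by first locating $C_j$ in an interval $(\beta_{\ell-1}/\alpha, \beta_\ell/\alpha]$ and deducing $\ell_j(\alpha) \geq \ell$, whereas you argue directly that the job must be finished by the right endpoint of interval $\ell_j(\alpha)$, but the core calculation and the single loss of $(1+\varepsilon')$ are identical.
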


\begin{proof}
  By the construction of the final schedule, there is an interval index $1 \leq \ell \leq L$ such that $\beta_{\ell-1}/\alpha < C_j \leq \beta_{\ell} / \alpha$. Since $\beta_{\ell-1}/\alpha < C_j$, it must be due to the construction that $\sum_{\ell = 1}^{\ell-1} y_{j\ell} \cdot |I_\ell| / \alpha < p_j$, and, thus, $\ell_j(\alpha) \geq \ell$. Therefore, 
  \[
    C_j \leq \frac{\beta_\ell}{\alpha} \leq (1+\varepsilon') \frac{\beta_{\ell_j(\alpha)-1}}{\alpha} = (1+\varepsilon') \frac{C_j(\alpha)}{\alpha} \ ,
  \]
  which concludes the proof.
\end{proof}

The following property of $\alpha$-points can also be found in e.g.\ \cite{Goe97,SchulzS97,QueyranneS02}. 

\begin{lemma}\label{lemma:alpha-points}
For every job $j$ it holds that 
\[
  \int_0^1 C_j(x) \; \mathrm{d} x = \sum_{\ell=1}^L \frac{y_{j\ell} \cdot |I_\ell|}{p_j}  \beta_{\ell-1} \ .
\]
\end{lemma}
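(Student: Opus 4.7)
The plan is to recognize that $C_j(\cdot)$ is a piecewise-constant step function and then integrate it layer by layer using the partition of $(0,1]$ induced by the quantities $\alpha_0 < \alpha_1 < \dotsb$.

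First I would record the basic observation that by the LP feasibility constraint $\sum_\ell y_{j\ell}|I_\ell| \ge p_j$ (and we may assume equality without loss, since any excess can be trimmed without affecting the statement), we have $\alpha_L = 1$, and by definition $\alpha_0 = 0$. The values $0 = \alpha_0 \le \alpha_1 \le \dotsb \le \alpha_L = 1$ therefore partition the interval $(0,1]$ into subintervals $(\alpha_{\ell-1}, \alpha_\ell]$ of length $\alpha_\ell - \alpha_{\ell-1} = y_{j\ell}|I_\ell|/p_j$.

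Next I would unpack the definition of $\ell_j(\alpha)$. For $\alpha \in (\alpha_{\ell-1}, \alpha_\ell]$, the smallest index $\ell'$ with $\alpha_{\ell'} \ge \alpha$ is exactly $\ell$, so $\ell_j(\alpha) = \ell$ and hence $C_j(\alpha) = \beta_{\ell-1}$ throughout this subinterval. (Intervals of zero length contribute nothing and can be ignored.) So $C_j(\cdot)$ is constant on each piece of the partition.

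Finally I would compute the integral directly:
\[
\int_0^1 C_j(x)\,\mathrm d x
= \sum_{\ell=1}^L (\alpha_\ell - \alpha_{\ell-1})\,\beta_{\ell-1}
= \sum_{\ell=1}^L \frac{y_{j\ell}\,|I_\ell|}{p_j}\,\beta_{\ell-1},
\]
which is the desired identity. There is no real obstacle here beyond being careful about the boundary case $\alpha_L = 1$ (ensured by the LP constraint) and handling possible zero-length pieces, so the proof is essentially a direct change-of-variable/step-function computation.
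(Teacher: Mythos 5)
Your proof is correct and takes the same route as the paper: partition $(0,1]$ by $0=\alpha_0\le\dots\le\alpha_L=1$ and evaluate the step function piece by piece. Two small remarks. First, the paper's own proof actually only establishes the direction $\int_0^1 C_j(x)\,\mathrm{d}x \le \sum_\ell (\alpha_\ell-\alpha_{\ell-1})\beta_{\ell-1}$ (that is all it later needs for the approximation bound), so by observing that $C_j(\cdot)\equiv\beta_{\ell-1}$ on $(\alpha_{\ell-1},\alpha_\ell]$ you obtain the stated equality cleanly, which is a mild improvement over what is actually written. Second, the equality genuinely requires $\alpha_L=1$, so your explicit justification---that the coverage constraint $\sum_\ell y_{j\ell}|I_\ell|\ge p_j$ is tight at optimality (or excess can be trimmed) since the LP is a minimization with nonnegative coefficients $\beta_{\ell-1}$---is the right thing to say; the paper asserts $\alpha_L=1$ without comment. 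Your handling of zero-length pieces resolves the only ambiguity in $\ell_j(\alpha)$, since ties among the $\alpha_\ell$ only occur on intervals of measure zero.
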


\begin{proof}
Fix a job $j$.
Note that for every $1 \leq \ell \leq L$ if $\alpha \leq \alpha_{\ell}$, then $C_j(\alpha) \leq  \beta_{\ell-1}$. 
Moreover, it holds that $0 = \alpha_0 \leq \ldots \leq \alpha_L = 1$.
Thus,
\begin{align*}
  \int_0^1 C_j(x) \; \mathrm{d} x = \sum_{\ell=1}^L \int_{\alpha_{\ell-1}}^{\alpha_{\ell}} C_j(x) \; \mathrm{d} x 
  \leq \sum_{\ell =1}^L (\alpha_{\ell} - \alpha_{\ell-1})  \beta_{\ell-1} = \sum_{\ell=1}^L \frac{y_{j\ell} \cdot |I_\ell|}{p_j}  \beta_{\ell-1}  \ ,
\end{align*}
which concludes the proof.
\end{proof}

\begin{theorem}
  There is a polynomial-time randomized $(2+\varepsilon)$-approximation algorithm for \psp for every $\varepsilon > 0$.
\end{theorem}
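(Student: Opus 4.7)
The plan is to bound $\mathbb E[\sum_j w_j C_j]$, where the expectation is over the random $\alpha \in (0,1)$ drawn with density $f(x) = 2x$, and then unwind the two preprocessing losses (release date shift by $\delta$ and interval discretization by $1+\varepsilon'$).

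First I would fix an arbitrary job $j$ and combine Lemma~\ref{lemma:slowed-down} with the density of $\alpha$:
\[
 \mathbb E[C_j] \le (1+\varepsilon')\int_0^1 \frac{C_j(\alpha)}{\alpha}\cdot 2\alpha\,\mathrm d\alpha = 2(1+\varepsilon')\int_0^1 C_j(\alpha)\,\mathrm d\alpha.
\]
The key cancellation here is exactly the one that motivates the choice of the density $f(x)=2x$: the $1/\alpha$ factor coming from slowing down the LP schedule is absorbed by the weight $2\alpha$, leaving a flat integral of the $\alpha$-points. Applying Lemma~\ref{lemma:alpha-points} then gives
\[
 \mathbb E[C_j] \le 2(1+\varepsilon')\sum_{\ell=1}^L \frac{y_{j\ell}\cdot |I_\ell|}{p_j}\,\beta_{\ell-1}.
\]

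Next, I would multiply by $w_j$ and sum over $j \in J$ to obtain
\[
 \sum_{j\in J} w_j\,\mathbb E[C_j] \le 2(1+\varepsilon')\sum_{j\in J} w_j \sum_{\ell=1}^L \frac{y_{j\ell}\cdot |I_\ell|}{p_j}\,\beta_{\ell-1} = 2(1+\varepsilon')\cdot \mathrm{LP}_I^*,
\]
where $\mathrm{LP}_I^*$ is the optimum of $(\LP_I)$. Since $(\LP_I)$ is a relaxation of the instance with shifted release dates (this was verified in the paragraph immediately before the algorithm), $\mathrm{LP}_I^* \le \opt^{\mathrm{shift}}$. Shifting every release date by $\delta$ inflates the optimum by at most a factor $1+\delta$, provided we first scale so that the smallest possible completion time in any feasible schedule is at least $1$; this rescaling is achieved by computing the maximum rate available to a single job and dividing all processing requirements accordingly, and it leaves the competitive/approximation ratio unchanged. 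Hence
\[
 \sum_{j\in J} w_j\,\mathbb E[C_j] \le 2(1+\varepsilon')(1+\delta)\cdot \opt.
\]

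Finally, given $\varepsilon > 0$, I would pick $\varepsilon'$ and $\delta$ small enough (e.g.\ both equal to $\varepsilon/5$) so that $2(1+\varepsilon')(1+\delta) \le 2+\varepsilon$. Polynomial running time follows from three standard ingredients already stated in the excerpt: $(\LP_I)$ has polynomially many variables and constraints and can be solved in polynomial time; the makespan upper bound $T$ can be guessed by doubling; and sampling $\alpha$ is trivial. The step I expect to require the most care is the bookkeeping around the release-date shift: one has to verify that the $(1+\delta)$ factor really is benign after the initial rescaling, and that the discretization into geometric intervals $I_\ell$ starting at $\delta$ indeed covers every feasible completion time in the shifted instance, so that Lemma~\ref{lemma:alpha-points} applies without loss.
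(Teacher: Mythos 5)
Your proposal is correct and follows essentially the same route as the paper's own proof: combine Lemma~\ref{lemma:slowed-down} with the density $f(x)=2x$ so the $1/\alpha$ slowdown cancels, apply Lemma~\ref{lemma:alpha-points} to relate the flat integral of $\alpha$-points to the LP objective, use that $(\LP_I)$ relaxes the shifted instance, and pay a factor $(1+\delta)$ for the release-date shift. The only cosmetic difference is that you make the final choice of $\varepsilon'$ and $\delta$ explicit and flag the rescaling-so-that-$C_j\ge 1$ step as the point requiring care, which is a fair reading of where the bookkeeping lies.
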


\begin{proof}
Observe that $L$ is polynomially bounded in the input, and, thus, we can solve $(\LP_I)$ in polynomial time.
Lemma~\ref{lemma:slowed-down} and~\ref{lemma:alpha-points} imply for every job $j$ that
  \[
    \EX[C_j] \leq (1+\varepsilon') \cdot \EX\left[\frac{1}{\alpha} C_j(\alpha)\right] = (1+\varepsilon') \cdot \int_0^1 \frac{C_j(x)}{x} \cdot 2x \; \mathrm{d} x \leq 2(1+\varepsilon')  \cdot \sum_{\ell=1}^L \frac{y_{j\ell} \cdot |I_\ell|}{p_j} \beta_{\ell-1}  \ ,
  \]
which shows that the produced schedule approximates an optimal LP solution within a factor of $2(1+\varepsilon')$ in expectation. Due to the initial shifting of release dates, this gives an overall expected approximation ratio of at most $2(1+\varepsilon')(1+\delta)$.
\end{proof}

\section{Migration is Essential for Non-Clairvoyant Scheduling on Unrelated Machines} \label{app:lb-migratory}

In this section we argue that migration is essential for non-clairvoyant algorithms for unrelated machine scheduling, even in the special case of related machines with uniform weights and release dates. We prove Theorem~\ref{thm:migration-necessary} which states that any non-migratory non-clairvoyant algorithm for minimizing the total completion time of $n$ jobs on related machines, \abc{Q}{\pmtn}{\sum C_j}, has a competitive ratio of at least~$\Omega(\sqrt{n})$. A similar result is known for minimizing the makespan~\cite{EberleGMMZ2024}.

\begin{proof}[Proof of Theorem~\ref{thm:migration-necessary}]
Consider $n$ jobs and $n$ machines, one with speed $\sqrt n$, and all others with speed~$1$. If the algorithm puts any job on a slow machine, then this job turns out to be very long, while all other jobs are negligible small in comparison. Therefore, the competitive ratio of such an algorithm would be at least $\sqrt n$. If the algorithm puts all jobs on the first machine, then they all have processing time~$1$, so that, even in an optimal schedule on that machine, the total completion time would be $\frac{n(n+1)}{2\sqrt n}$. An alternative schedule for all machines would process each job on a different machine, resulting in total completion time $1/\sqrt n + n - 1$. The ratio is thus at least $\frac{n(n+1)}{2(1 + \sqrt n (n - 1))} \in \Omega(\sqrt n)$.
\end{proof}

\section{Unrelated Machine Scheduling is not \pf-Monotone}\label{app:unrelated-not-monotone}

We have shown that the property of \pf-monotonicity is a powerful one that allows for the derivation of very good bounds on the competitive ratio of \pf (see Theorem~\ref{thm:main-monotone}). The fact that several machine scheduling problems in this framework are \pf-monotone, including scheduling on related machines and restricted assignment (see Theorem~\ref{thm:restricted-related-weighted}), raises the question of whether the general unrelated machine scheduling problem also has this property. We answer this question to the negative, even in the absence of weights and release times.

\begin{lemma}
	The unrelated machine scheduling problem \abc{R}{\pmtn}{\sum C_j} is not \pf-monotone.
\end{lemma}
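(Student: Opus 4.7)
The plan is to disprove \pf-monotonicity by exhibiting an explicit counterexample: a small instance of unrelated machine scheduling together with a specific subset $J' \subsetneq J$ for which some job $j' \in J'$ has strictly smaller \pf-rate when the set $J'$ of available jobs is used than when the whole set $J$ is used. This is an existence statement, so constructing one concrete witness suffices. I would build the instance with a handful of machines and jobs (e.g.\ $m=2$ or $3$ and $n=3$ or $4$) and, crucially, with non-$\{0,1\}$ speeds. The latter is essential, since \Cref{lemma:restricted-polymatroid} shows that the rate polytope becomes a polymatroid in the restricted-assignment case, and it is known that \pf is monotone on polymatroids (the SUA-market argument of \Textcite{JainV10}); so any counterexample must have speeds that break the polymatroid structure of~$\mathcal{Q}$.

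The argument then has three computational steps. First, fix speeds $s_{ij}$ (and possibly weights $w_j$) and compute \pf's rate vector $y$ on the full job set $J$ by solving $(\CP(J))$. I would use the KKT conditions \eqref{psp-kkt1}–\eqref{psp-kkt3} (together with the matching constraints hidden in $\mathcal Q$) to guess which $x_{ij}$'s are interior and which are at the boundary, then solve the resulting linear system for rates and multipliers, and finally verify the inequality KKT conditions for the boundary variables. Second, choose a job $j^*$ to remove and compute in the same fashion the \pf-rate vector $y'$ on $J'=J\setminus\{j^*\}$. Third, exhibit some $j'\in J'$ with $y'_{j'} < y_{j'}$ and conclude that the instance violates \Cref{def:monotone}.

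The main obstacle is, of course, finding a suitable instance in the first place; naive attempts (identical speeds, symmetric speeds, diagonally dominant speeds, or instances where the rate polytope accidentally turns out to be a polymatroid) all yield monotone behavior, as I would check on a few candidates. The key structural insight needed is that on unrelated machines, when job $j^*$ disappears, the remaining jobs may be \emph{rearranged} across the machines by \pf: another job $j''$ that was sharing $j^*$'s machine may migrate onto a machine currently used efficiently by $j'$, thereby increasing the Lagrange price of that machine in $(\CP(J'))$ and forcing $y'_{j'}$ below $y_{j'}$. To trigger this, the speed matrix should be chosen so that the \pf-optimal assignment on~$J$ uses $j^*$'s speed asymmetry to keep $j''$ on a machine that is \emph{not} $j'$'s preferred machine, while on $J'$ the \pf-optimum pulls $j''$ onto $j'$'s preferred machine. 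I would search for such an instance by parameterizing a minimal template (e.g.\ two machines, three jobs with speeds of the form $\begin{smallmatrix} a & b & c \\ d & e & f\end{smallmatrix}$) and using the KKT system to derive symbolic expressions for $y_{j'}$ and $y'_{j'}$ as rational functions of $a,\dots,f$; picking values of the parameters that make $y'_{j'}/y_{j'} < 1$ then gives the desired concrete numerical counterexample, which can be presented in the appendix together with a short verification of the two \pf-computations.
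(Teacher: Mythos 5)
Your overall strategy is exactly the paper's: exhibit one explicit instance of unrelated machine scheduling and one subset $J' \subsetneq J$ on which \pf strictly lowers some job's rate, verifying the two rate vectors via the CP/KKT characterization. Your structural intuition is also right — non-$\{0,1\}$ speeds are necessary, and the mechanism is that removing a job causes \pf to reshuffle a competitor onto a machine that a remaining job was using, driving up that machine's Lagrange price.

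The gap is that you never actually produce the witness. Everything after ``I would search for such an instance'' is a description of a search procedure whose output is not given, together with an assertion that ``picking values of the parameters that make $y'_{j'}/y_{j'} < 1$ then gives the desired concrete numerical counterexample.'' For an existence claim proved by construction, the construction itself is the proof; a promise that a construction would be found is not. There is also no argument that your proposed two-machine template has a feasible parameter range at all (the paper's example needs three machines and a speed matrix whose nonzero pattern is a path $1\!-\!M_1\!-\!2\!-\!M_2\!-\!3\!-\!M_3$), so the plan leaves open both the choice of instance and the verification that any $y'_{j'}/y_{j'} < 1$ is achievable within your template. To close the gap you need to do what the paper does: fix concrete speeds (the paper uses three unit-weight jobs on three machines with $s_{11}=1$, $s_{12}=2$, $s_{22}=1$, $s_{23}=2$, $s_{33}=1$ and all other speeds zero), compute $y = (2/3,\,4/3,\,4/3)$ on $J=\{1,2,3\}$ and $y'=(1,1)$ on $J'=\{1,2\}$ by checking optimality in $(\CP)$, and read off that $y'_2 = 1 < 4/3 = y_2$.
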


\begin{proof}
The representation of unrelated machines scheduling as \psp is as follows:
\[ 
  \mathcal P \coloneqq \left\{y \in \RR_{\ge 0}^{n}  \bigm| \exists x \in \RR_{\ge 0}^{m\times n}:  y_j = \sum_{i=1}^m s_{ij} x_{ij}\, \forall j \in [n],\ \sum_{j=1}^n x_{ij} \le 1\, \forall i \in [m],\ \sum_{i=1}^m x_{ij} \le 1\, \forall j \in [n]\right\}\ .
\]
Consider the following instance with three machines, unit-weight jobs $J=\{1,2,3\}$ and the following speeds $s_{ij}$ for jobs $j$ on machines $i$:
\begin{align*}
  &s_{11} = 1 \ , \; \;  s_{21} = 0 \ , \; \;  s_{31} = 0 \ , \\ 
  &s_{12} = 2 \ , \; \;  s_{22} = 1 \ , \; \;  s_{32} = 0 \ , \\
  &s_{13} = 0 \ , \; \;  s_{23} = 2 \ , \; \;  s_{33} = 1 \ .
\end{align*}
Let $J' = \{1,2\}$.
The optimal solution to $(\CP(J))$ is $y=(2/3,4/3,4/3)$, and the optimal solution to $(\CP(J'))$ is $y' = (1,1)$. We can observe that \pf's rate for job 2 decreases in the absence of job~3.
\end{proof}

\section{Lower Bound for \pf with Non-Uniform Release Dates}\label{app:release-dates-lb}

\begin{figure}[tb]
  \centering
      \begin{tikzpicture}
          \begin{axis}[
              axis lines = left,
              width=12cm,
              height=6cm,
              legend pos=north east,
              xmin=0, xmax=1.9, ymin=0, ymax=2.2,
              xlabel = time,
              ylabel = remaining volume,
              ytick={1.732 - 1, 1.732, 2},
              yticklabels={$1-r$, $2-r$, 2},
              xtick={2-1.732,1,1.732},
              xticklabels={$r$, 1, $\sqrt{3}$},
          ]               
          
          \addplot [
          domain=0:1.7302, 
          samples=100, 
          color=gray,
          dotted,
          line width=1pt] {2 - 2 / 1.7302 * x};
          \addplot[line width=2pt,color=job1] coordinates {(0,2) (0.2679,2-0.2679)};
          \addplot[line width=2pt,color=job2] coordinates {(0.2679,2-0.2679) (0.2679+0.7320,2-0.2679-1)};
          \addplot[line width=2pt,color=job1] coordinates {(0.2679+0.7320,2-0.2679-1) (1.7320,0)};
          \end{axis}
      \end{tikzpicture}
      \caption{The total completion time of both schedules used in Theorem~\ref{thm:rr-release-dates-js}. The solid lines indicate the completion of jobs in the SRPT schedule; the area below is equal to the total completion time of the SRPT schedule. The colors of the solid line indicate which jobs are processed at which time. The total area below the dotted line is equal to half of the total completion time of RR. Observe that the area under the solid line is (slightly) less than the area under the dotted line.}
      \label{fig:rr-lb-simple}
\end{figure}
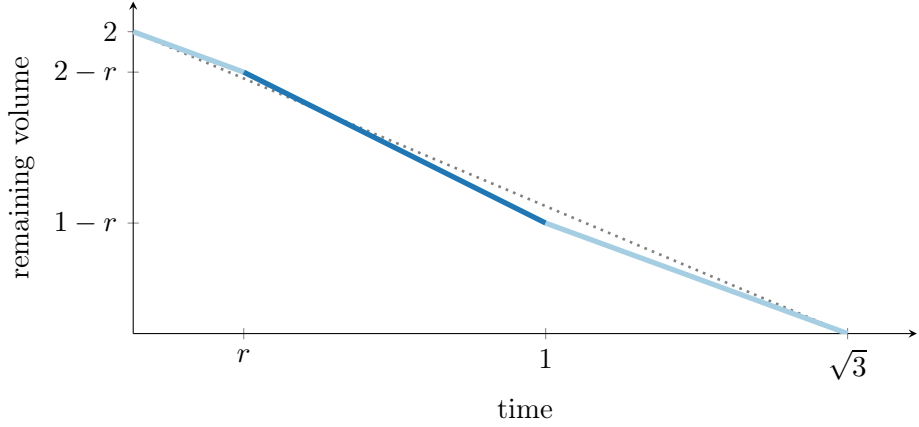

In this section we present a lower bound on the competitive ratio of \pf for \psp which is strictly larger than 2. We prove it for the special case of scheduling on a single machine, where \pf reduces to \rr (cf.\ \Cref{app:pf-combinatorial}).
RR processes at any time $t$ every available job $j \in J(t)$ at rate $1 / |J(t)|$. 
To prove a lower bound on RR's competitive ratio, we compare RR to the optimal solution, which is computed by the Shortest Remaining Processing Time rule (SRPT)~\cite{Schrage68srpt}. This strategy processes at any time the available job of shortest remaining processing time.
We first present a weaker but simpler lower bound, and then give an improved lower bound of at least 2.19.

\begin{theorem}
  \label{thm:rr-release-dates-js}
  The competitive ratio of RR for \abc{1}{r_j,\pmtn}{\sum C_j} is at least $2.074$.
\end{theorem}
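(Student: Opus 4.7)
My plan is to construct a family of single-machine instances parameterized by $n$ and a scalar $r \in (0,1)$, and show that for $r = 2-\sqrt{3}$ the \rr-to-SRPT ratio of total completion times tends to $(60 + 48\sqrt{3})/69 \approx 2.0745$ as $n \to \infty$. Since SRPT is optimal for \abc{1}{r_j, \pmtn}{\sum C_j} (Schrage), this lower bounds the competitive ratio of \rr.

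The instance releases $n$ jobs of processing requirement $1/n$ each at time $0$ and $n$ further jobs of processing requirement $(1-r)/n$ each at time $r$. Under \rr, during $[0,r]$ each initial job processes at rate $1/n$ and so has exactly $(1-r)/n$ left at time $r$; the newcomers arrive with the identical residual size, so all $2n$ alive jobs share the machine at rate $1/(2n)$ and complete simultaneously at time $r + 2(1-r) = 2-r$. This gives $\sum_j C_j^{\mathrm{RR}} = 2n(2-r)$.

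Under SRPT, the initial batch is processed in SPT order during $[0,r]$, completing $rn$ of them (up to $o(n)$ rounding). At time $r$ the newcomers, being strictly shorter than the surviving initial jobs, preempt: SRPT completes all $n$ of them during $[r,1]$ and then returns to finish the $(1-r)n$ remaining initial jobs during $[1, 2-r]$. Summing the three arithmetic progressions of completion times and dividing by $n$ yields $\sum_j C_j^{\mathrm{SRPT}}/n \to r^2 - \tfrac{3}{2}r + 2$.

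The limiting ratio $f(r) = 2(2-r)/(r^2 - \tfrac{3}{2}r + 2)$ has $f'(r) = 0$ precisely when $r^2 - 4r + 1 = 0$, whose unique root in $(0,1)$ is $r^\star = 2-\sqrt{3}$. Substituting and rationalising yields $(60 + 48\sqrt{3})/69 > 2.074$, as claimed. The main care needed is to bound the $o(n)$ terms arising from rounding $rn$ to an integer so that the limit computation is rigorous; the structural facts that all $2n$ jobs finish simultaneously under \rr and that SRPT preempts at time $r$ because the newcomers are strictly shorter are what make the analysis clean.
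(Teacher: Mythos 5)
Your construction is exactly the one in the paper (two batches of equal cardinality, the second sized so that under \rr all residual sizes equalize at time $r$ and all $2n$ jobs finish simultaneously at $2-r$), and your arithmetic is correct: $2(2-r)/(r^2-\tfrac32 r + 2)$ evaluated at $r^\star=2-\sqrt3$ rationalises to $(20+16\sqrt3)/23\approx 2.0745$, which is the paper's $\tfrac{4}{23}(5+4\sqrt3)$. The only cosmetic differences are that you normalise job sizes to $1/n$ rather than invoking a ``sand'' limit, and that you additionally derive the optimality of $r^\star$ from $f'(r)=0$, which the paper asserts implicitly.
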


\begin{proof}
Consider the following instance.
There are $n$ jobs $J_1$ released at time $0$ with processing time $1$ and $n$ jobs $J_2$ released at time $r = (2 - \sqrt{3})n$ with processing time $\sqrt{3} - 1$. The number of jobs in $J_1$ and $J_2$ is equal to $n$ each, and since we assume that $n \to \infty$, we can w.l.o.g.\ scale $n$ to $1$, that is, the jobs are sand. Note that, in \rr, the jobs in $J_1$ have at time $r$ the same remaining processing time as the jobs in $J_2$. Thus, all jobs complete at time $\sqrt{3}$, and the total completion time of RR is equal to $2\sqrt{3}$. SRPT first sequentially schedules an $r$ fraction of the jobs in $J_1$ until time $r$, then sequentially completes all jobs in $J_2$ and finally completes the remaining $1-r$ volume of $J_1$. In this schedule, the total completion time of the jobs completed until time $r$ is equal to $\frac{1}{2}r^2$, the total completion time of jobs in $J_2$ equal to $r + \frac{1}{2}(\sqrt{3}{}-1)$, and the total completion time of the remaining jobs of $J_1$ equal to $(1-r)(r + \sqrt{3} - 1) + \frac{1}{2}(1-r)^2$. Hence,
\[
  \opt \leq  \frac{1}{2}r^2 + r + \frac{1}{2}(\sqrt{3}{}-1) + (1-r)(r + \sqrt{3} - 1) + \frac{1}{2}(1-r)^2 = 6 - \frac{5}{2}\sqrt{3}\ .
\]
Therefore, the competitive ratio of RR is at least $\frac{2\sqrt{3}}{6 - \frac{5}{2}\sqrt{3}} = \frac{4}{23} (5 + 4 \sqrt{3}) > 2.074$. This construction is depicted in \Cref{fig:rr-lb-simple}.
\end{proof}

\begin{figure}
  \centering
          \begin{tikzpicture}
              \begin{axis}[
                  axis lines = left,
                  width=12cm,
                  height=7.2cm,
                  legend pos=north east,
                  xmin=0, xmax=3.2, ymin=0, ymax=4.5,
                  xlabel = time,
                  ylabel = remaining volume,
                  ytick={4-0.194-0.367109-0.528634-1-0.471366-0.63289,4-0.194-0.367109-0.528634-1-0.471366,4-0.194-0.367109-0.528634-1,4-0.194-0.367109-0.528634,4-0.194-0.367109,4-0.194,4},
                  yticklabels={$1- V_1$,$2-\sum_{i=1}^2 V_i$,$4-\sum_{i=1}^4 V_i$,$4-\sum_{i=1}^3 V_i$,$4-\sum_{i=1}^2 V_i$,,4},
                  xtick={0.1937,0.1937+0.296,0.1937+0.296+0.348,0.1937+0.296+0.348+0.5423,0.1937+0.296+0.348+0.5423+0.3103,0.1937+0.296+0.348+0.5423+0.3103+0.5103,0.1937+0.296+0.348+0.5423+0.3103+0.5103+0.8063},
                  xticklabels={$r_2$, $r_3$, $r_4$, $r_4 + p_4$,, $r_2 + \sum_{i=2}^4 p_i$, $\sum_{i=1}^4 p_i$},
              ]   
              \addplot [
                  domain=0:3, 
                  samples=100, 
                  color=gray,
                  dotted,
                  line width=1pt] {4 - 4 / 3 * x};            
              \addplot[line width=2pt,color=job1] coordinates {(0,4) (0.1937,4-0.194)};
              \addplot[line width=2pt,color=job2] coordinates {(0.1937,4-0.194) (0.1937+0.296,4-0.194-0.367109)};
              \addplot[line width=2pt,color=job3] coordinates {(0.1937+0.296,4-0.194-0.367109) (0.1937+0.296+0.348,4-0.194-0.367109-0.528634)};
              \addplot[line width=2pt,color=job4] coordinates {(0.1937+0.296+0.348,4-0.194-0.367109-0.528634) (0.1937+0.296+0.348+0.5423,4-0.194-0.367109-0.528634-1)};
              \addplot[line width=2pt,color=job3] coordinates {(0.1937+0.296+0.348+0.5423,4-0.194-0.367109-0.528634-1) (0.1937+0.296+0.348+0.5423+0.3103,4-0.194-0.367109-0.528634-1-0.471366)};
              \addplot[line width=2pt,color=job2] coordinates {(0.1937+0.296+0.348+0.5423+0.3103,4-0.194-0.367109-0.528634-1-0.471366) (0.1937+0.296+0.348+0.5423+0.3103+0.5103,4-0.194-0.367109-0.528634-1-0.471366-0.632891)};
              \addplot[line width=2pt,color=job1] coordinates {(0.1937+0.296+0.348+0.5423+0.3103+0.5103,4-0.194-0.367109-0.528634-1-0.471366-0.632891) (0.1937+0.296+0.348+0.5423+0.3103+0.5103+0.8063,4-0.194-0.367109-0.528634-1-0.471366-0.632891-0.8063)};
              \end{axis}
          \end{tikzpicture}
          \caption{
              The total completion time of both schedules in the improved construction with $k=4$. Observe that ratio between the area under the solid line and the area under the dotted line is smaller than the corresponding ratio in \Cref{fig:rr-lb-simple}.
              Here, $p_1=1$, $p_2 \approx 0.80632$, $p_3 \approx 0.65835$, and $p_4 \approx 0.54231$. 
          }
          \label{fig:rr-lb-advanced}
\end{figure}
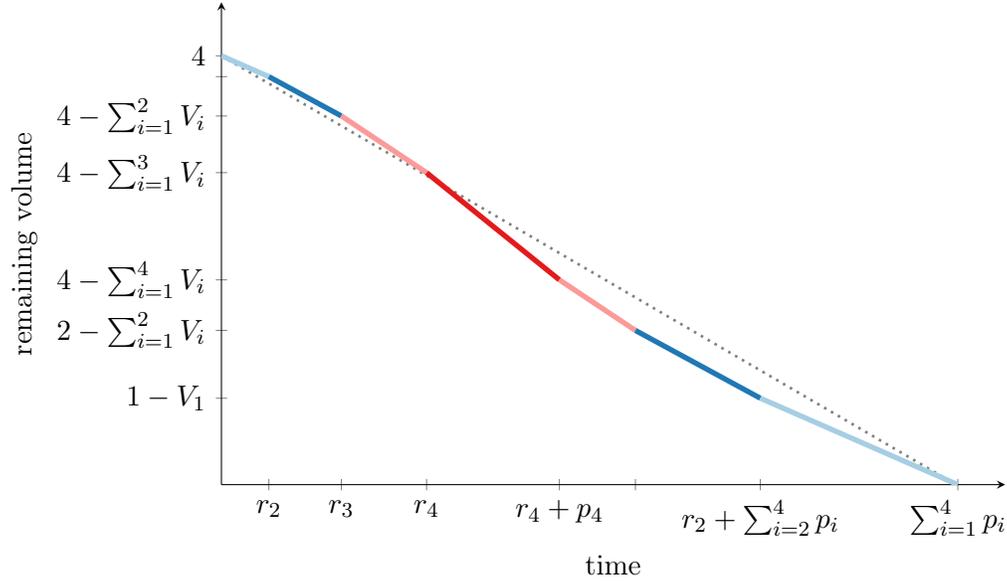
\begin{table}[tb]
  \caption{The (approximate) processing times used for the proof of Theorem~\ref{thm:rr-release-dates-improved}.}\label{table:rr-lb-processing-times}
  \begin{minipage}{0.19\textwidth}
      \centering
      \begin{tabular}{lr}
          \toprule
         $j$ & $p_j$ \\
          \midrule
          1 & 1 \\
          2 & 0.95160 \\
          3 & 0.90702 \\
          4 & 0.86581 \\
          5 & 0.82759 \\
          6 & 0.79203 \\
          \bottomrule
      \end{tabular}
  \end{minipage}
  \begin{minipage}{0.19\textwidth}
      \centering
      \begin{tabular}{lr}
          \toprule
         $j$ & $p_j$ \\
          \midrule
          7 & 0.75885 \\
          8 & 0.72782 \\
          9 & 0.69872 \\
          10 & 0.67137 \\
          11 & 0.64561 \\
          12 & 0.62131 \\
          \bottomrule
      \end{tabular}
  \end{minipage}
  \begin{minipage}{0.19\textwidth}
      \centering
      \begin{tabular}{lr}
          \toprule
         $j$ & $p_j$ \\
          \midrule
          13 & 0.59832 \\
          14 & 0.57656 \\
          15 & 0.55590 \\
          16 & 0.53628 \\
          17 & 0.51760 \\
          18 & 0.49980 \\
          \bottomrule
      \end{tabular}
  \end{minipage}
  \begin{minipage}{0.19\textwidth}
      \centering
      \begin{tabular}{lr}
          \toprule
         $j$ & $p_j$ \\
          \midrule
          19 & 0.48282 \\
          20 & 0.46659 \\
          21 & 0.45106 \\
          22 & 0.43619 \\
          23 & 0.42194 \\
          24 & 0.40825 \\
          \bottomrule
      \end{tabular}
  \end{minipage}
  \begin{minipage}{0.19\textwidth}
      \centering
      \begin{tabular}{lr}
          \toprule
         $j$ & $p_j$ \\
          \midrule
          25 & 0.39510 \\
          26 & 0.38245 \\
          27 & 0.37027 \\
          28 & 0.35854 \\
          29 & 0.34722 \\
          30 & 0.33630 \\
          \bottomrule
      \end{tabular}
  \end{minipage}
\end{table}

We strengthen this example by adding more release dates, which improves the lower bound on the competitive ratio. We consider an instance with $k$ release dates, and at every release date (including $r_1 = 0$) the same number of jobs are released. We again assume that the jobs are sand, hence scale their number to $1$.
Let $J_1,\ldots,J_k$ be those sets, and let $p_1 > \dotsb > p_k$ be their processing times. Our goal is to construct an instance in which all jobs complete at the same time in the schedule of RR with a total completion time equal to $k \sum_{i=1}^k p_i$. Thus, given values for $p_1,\ldots,p_k$, we can observe that the release dates $r_2,\ldots,r_k$ must satisfy that at every time $r_i$ all jobs of $J_1,\ldots,J_i$ have the same remaining processing time. This requires that
\[
r_i = r_{i-1} + p_{i-1} - p_i
\]
for all $i = 2,\ldots,k$.

We now consider the solution of SRPT for this instance. Let $V_j$ be the fraction of jobs $J_j$ that is being processed before the release of the jobs~$J_{j+1}$ in the SRPT schedule, and let $V_k = 1$. 
Note that $V_j = (r_{j+1} - r_j) / p_j$ for all $j=1,\ldots,k-1$.
Thus, the total completion time of SRPT can be expressed as follows (using $r_{k+1} = r_k + p_k$):
\[
  \sum_{j=1}^k (r_{j+1} - r_j) \biggl( k - \frac{1}{2}V_j - \sum_{i=1}^{j-1} V_i \biggr) + \sum_{j=1}^{k-1} (p_j - (r_{j+1} - r_j)) \biggl( \frac{1}{2}(1-V_j) + \sum_{i=1}^{j-1} (1-V_i) \biggr) \ .
\]

We numerically maximized the ratio between RR's objective and this expression over all $1 = p_1 > \ldots > p_k$ for $k=30$ and derived the following improved lower bound. 
The processing times are presented in \Cref{table:rr-lb-processing-times} and the construction is illustrated in \Cref{fig:rr-lb-advanced}. 

\begin{theorem}
  \label{thm:rr-release-dates-improved}
  The competitive ratio of RR for \abc{1}{r_j,\pmtn}{\sum C_j} is at least $2.1906$.
  \end{theorem}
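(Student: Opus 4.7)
The plan is to generalize the two-batch construction of Theorem~\ref{thm:rr-release-dates-js} to $k=30$ batches of sand-like jobs. I introduce batches $J_1,\dotsc,J_k$ each of unit total volume released at times $r_1<\dotsb<r_k$, with per-job processing times $1=p_1>p_2>\dotsb>p_k>0$. Setting $r_1\coloneqq 0$ and $r_i\coloneqq r_{i-1}+p_{i-1}-p_i$ for $i\geq 2$, a direct induction on $i$ shows that at each release time $r_i$ the remaining per-job processing requirement is the same across all alive jobs of $J_1,\dotsc,J_{i-1}$. Hence once $J_i$ arrives, \rr processes all alive batches in lockstep, and by a final application of this observation all jobs finish simultaneously at time $\sum_{i=1}^k p_i$. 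Since the jobs in each batch are infinitesimal sand, the total completion time under \rr evaluates to $\mathrm{RR}=k\sum_{i=1}^k p_i$.

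To upper-bound $\opt$, I exhibit a feasible SRPT-like preemptive schedule. With $V_j\coloneqq (r_{j+1}-r_j)/p_j$ for $j<k$ and $V_k\coloneqq 1$---the fraction of $J_j$ that is processed before the release of $J_{j+1}$---this schedule runs a $V_j$-fraction of $J_j$ in order $j=1,\dotsc,k$ during $[r_j,r_{j+1}]$ (setting $r_{k+1}\coloneqq r_k+p_k$), and then clears the leftover $(1-V_j)$ fractions in reverse order $j=k-1,\dotsc,1$. Integrating the sand completion contributions on each segment yields precisely the closed-form upper bound on $\opt$ displayed in the excerpt just before the theorem statement. Feasibility of the SRPT-like schedule reduces to the inequalities $V_j\in(0,1]$ and $r_{j+1}>r_j$, both of which follow from the strict monotonicity $p_{i-1}>p_i>0$.

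Finally, I take $k=30$ with the explicit $p_1,\dotsc,p_{30}$ listed in Table~\ref{table:rr-lb-processing-times}; these values arise as the output of a numerical maximization of $\mathrm{RR}/\opt$ over $1=p_1>p_2>\dotsb>p_{30}>0$. Direct evaluation of the two closed-form expressions at these values gives $\mathrm{RR}/\opt>2.1906$. As a consistency check, specializing to $k=2$ with $p_2=\sqrt{3}-1$ and $r_2=2-\sqrt{3}$ recovers the $\tfrac{4}{23}(5+4\sqrt{3})\approx 2.074$ instance of Theorem~\ref{thm:rr-release-dates-js}. The only nontrivial obstacle is this final certification: the ratio is a smooth rational function of $p_2,\dotsc,p_{30}$, so a fully rigorous lower bound can be secured by evaluating the closed form under interval arithmetic; the inductive synchronization argument and the formula for $\opt$ are routine once the recursive definition of $r_i$ is in hand.
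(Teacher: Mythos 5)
Your plan matches the paper's: a $k$-batch sand instance with release dates chosen so that RR finishes all batches simultaneously, a closed form for the SRPT-like upper bound on $\opt$ in terms of the fractions $V_j$, and a numerical optimization over $p_1>\dotsb>p_k$ at $k=30$ using the table values. However, the key algebraic step you call ``a direct induction'' does not actually go through with the recursion $r_i=r_{i-1}+p_{i-1}-p_i$. On the interval $(r_{i-1},r_i)$ there are $i-1$ unit-mass batches alive, so RR gives every surviving job rate $1/(i-1)$; hence the per-job remaining requirement only drops by $(r_i-r_{i-1})/(i-1)$ across the interval, not by $r_i-r_{i-1}$. Imposing that remaining equals $p_i$ at time $r_i$ forces $r_i = r_{i-1}+(i-1)(p_{i-1}-p_i)$, not the displayed recursion. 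With the recursion as you (and the paper's prose) wrote it, synchronization already fails at $i=3$: the surviving $J_1,J_2$ mass has remaining $(p_2+p_3)/2>p_3$ when $J_3$ arrives, so the ``all jobs complete at $\sum_i p_i$'' claim and the identity $\mathrm{RR}=k\sum_i p_i$ do not hold. (The paper appears to contain this same typo in its text, while its figure for $k=4$ uses the $(i-1)$-scaled spacings $r_3-r_2=2(p_2-p_3)$, $r_4-r_3=3(p_3-p_4)$, etc., confirming the corrected recursion is the one actually used numerically.)

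Two further consequences of the wrong recursion you should fix. First, your feasibility argument ``$V_j\in(0,1]$ follows from $p_{j-1}>p_j>0$'' only holds for the typo'd recursion; with the corrected recursion $V_j = j(p_j-p_{j+1})/p_j$, the bound $V_j\le 1$ is the nontrivial condition $p_{j+1}\ge (1-\tfrac1j)p_j$, which must be (and is, for the listed $p_j$) checked explicitly. Second, once $r_i$ is corrected, the completion time of RR is indeed $T=\sum_{i=1}^k p_i$ (one checks $r_i = p_1+\dotsb+p_{i-1}-(i-1)p_i$, and after $r_k$ the remaining $p_k$ per job at rate $1/k$ finishes at $r_k+kp_k=\sum_i p_i$), and then the SRPT closed form and the interval-arithmetic certification complete the argument just as you outline. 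So the gap is a single concrete factor of $(i-1)$, but as written the induction you assert is false and the claimed $\mathrm{RR}$ value is not justified.
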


\printbibliography

\end{document}